\theoremstyle{plain}
\newtheorem{theorem}{Theorem}[section]
\newtheorem{proposition}[theorem]{Proposition}
\newtheorem{lemma}[theorem]{Lemma}
\newtheorem{corollary}[theorem]{Corollary}
\theoremstyle{definition}
\newtheorem{definition}[theorem]{Definition}
\newtheorem{remark}[theorem]{Remark}
\newtheorem{example}[theorem]{Example}
\newtheorem{assumption}[theorem]{Assumption}
\theoremstyle{remark}
\numberwithin{equation}{section}
\DeclareMathOperator{\D}{D}
\DeclareMathOperator{\diag}{diag}
\DeclareMathOperator{\Trace}{Trace}
\newcommand{\FF}{\mathbb{F}}
\newcommand{\NN}{\mathbb{N}}
\newcommand{\RR}{\mathbb{R}}
\newcommand{\bfNull}{\mathbf{0}}
\newcommand{\bfa}{\mathbf{a}}
\newcommand{\bfc}{\mathbf{c}}
\newcommand{\bfe}{\mathbf{e}}
\newcommand{\bfv}{\mathbf{v}}
\newcommand{\bfw}{\mathbf{w}}
\newcommand{\bfx}{\mathbf{x}}
\newcommand{\bfz}{\mathbf{z}}
\newcommand{\bfD}{\mathbf{D}}
\newcommand{\bfG}{\mathbf{G}}
\newcommand{\bfX}{\mathbf{X}}
\newcommand{\bfZ}{\mathbf{Z}}
\newcommand{\bfupsilon}{{\boldsymbol{\upsilon}}}
\newcommand{\bfzeta}{{\boldsymbol{\zeta}}}
\newcommand{\cC}{\mathcal{C}}
\newcommand{\cF}{\mathcal{F}}
\newcommand{\cV}{\mathcal{V}}
\newcommand{\fP}{\mathfrak{P}}
\newcommand{\fY}{\mathfrak{Y}}
\newcommand{\sfC}{\mathsf{C}}
\newcommand{\sfV}{\mathsf{V}}
\newcommand{\diff}{\mathrm{d}}
\newcommand{\dd}{\,\mathrm{d}}
\newcommand{\lin}{\mathrm{lin}}
\newcommand{\1}{\mathbf{1}}
\newcommand*{\EX}[2][]{E^{#1}\left [ #2 \right ]}
\newcommand*{\wt}[1]{\widetilde{#1}}
\newcommand*{\wh}[1]{\widehat{#1}}
\newcommand*{\ol}[1]{\overline{#1}}
\newcommand*{\ul}[1]{\underline{#1}}
\newcommand{\tr}{\top}
\newcommand{\Vanna}{\frac{\partial \Delta}{\partial \Sigma}}
\newcommand{\norm}[1]{\left\lVert#1\right\rVert}
\newcommand{\enorm}[1]{\left\vert#1\right\vert}
\newcommand{\contlocmartpart}{c}
\begin{document}
\title{%
Model Uncertainty, Recalibration, and the\\
Emergence of Delta-Vega Hedging%
\footnote{The authors thank Martin Herdegen, David Hobson, Jan Kallsen, and Frank Seifried for fruitful discussions and, in particular, Martin Schweizer for pertinent remarks on the first draft. Detailed and helpful comments from two anonymous referees are also gratefully acknowledged.}
}
\date{}
\author{%
  Sebastian Herrmann%
  \thanks{
  University of Michigan, Department of Mathematics, 530 Church Street, Ann Arbor, MI 48109, USA, email
  \href{mailto:sherrma@umich.edu}{\nolinkurl{sherrma@umich.edu}}.
  Financial support by the Swiss Finance Institute is gratefully acknowledged.
  }
  \and
  Johannes Muhle-Karbe%
  \thanks{
  University of Michigan, Department of Mathematics, 530 Church Street, Ann Arbor, MI 48109, USA, email
  \href{mailto:johanmk@umich.edu}{\nolinkurl{johanmk@umich.edu}}.
  }
}
\maketitle

\begin{abstract}
We study option pricing and hedging with uncertainty about a Black--Scholes reference model which is dynamically recalibrated to the market price of a liquidly traded vanilla option. For dynamic trading in the underlying asset and this vanilla option, delta-vega hedging is asymptotically optimal in the limit for small uncertainty aversion. The corresponding indifference price corrections are determined by the disparity between the vegas, gammas, vannas, and volgas of the non-traded and the liquidly traded options.
\end{abstract}

\vspace{0.5em}

{\small
\noindent \emph{Keywords} model uncertainty; recalibration; delta-vega hedging; small uncertainty aversion; asymptotics.

\vspace{0.25em}
\noindent \emph{AMS MSC 2010}
Primary,
91G20, 
91B16; 
Secondary,
93E20. 

\vspace{0.25em}
\noindent \emph{JEL Classification}
G13, C61, C73.%
}

\section{Introduction}
\label{sec:introduction}

In the context of hedging exotic derivatives, the way mathematical models for financial markets are applied in practice is often inconsistent with the assumptions these models are based on and the way they are analysed in academic research. Classical models prescribe the \emph{stochastic} behaviour of certain financial \emph{variables}, e.g., asset prices or interest rates, in terms of \emph{deterministic} input quantities, the model's \emph{parameters}. In practice, however, these deterministic parameters are often not at all treated as deterministic: exotic derivatives traders \emph{recalibrate} the parameters frequently to the observed market prices of liquidly traded vanilla options and use these options to \emph{neutralise the sensitivities of their positions against changes in these parameters} (appropriately called \emph{out-of-model hedging} by Rebonato \cite{Rebonato2004}).

The benchmark Black--Scholes model is typically recalibrated by dynamic updating of the volatility parameter (that the model assumes constant) to the market price of a liquidly traded plain-vanilla option. \emph{Vega hedging}\footnote{\emph{Vega} is the sensitivity of the Black--Scholes price with respect to changes in the volatility parameter.} then corresponds to neutralising the sensitivity of the trader's total position with respect to changes in the volatility parameter. The logical inconsistency of this practice is succinctly summarised by Rebonato \cite[Section 1.3.2]{Rebonato2004}, for example:\footnote{Davis~\cite[Section~2.~(b)]{Davis2004}, Musiela and Rutkowski~\cite[Section 7.1.8]{MusielaRutkowski2005}, and Wilmott~\cite[Section 7.10.5]{Wilmott2006} raise the same concern.}
\begin{quote}
``Needless to say, out-of-model hedging is on conceptually rather shaky ground: if the volatility is deterministic and perfectly known, as many models used to arrive at the price assume it to be, there would be no need to undertake vega hedging. Furthermore, calculating the vega statistics means estimating the dependence on changes in volatility of a price that has been arrived at assuming the self-same volatility to be both deterministic and perfectly known. Despite these logical problems, the adoption of out-of-model hedging in general, and of vega hedging in particular, is universal in the complex-derivatives trading community.''
\end{quote}

The present paper provides a consistent justification for the use of Black--Scholes vega by acknowledging from the start that the true dynamics of the underlying are not known with certainty. We suppose that models are deemed more or less plausible depending on their ``distance'' from a reference Black--Scholes model for the underlying. A new feature is that the volatility parameter of the reference model is dynamically recalibrated to the observed prices of a liquidly traded vanilla option. In the limit for small aversion against this model uncertainty, delta-vega hedging then emerges naturally. 

\paragraph{Hedging problem.}
Consider an agent who has sold a non-traded option on a stock $S$ with payoff $\sfV(S_T)$\footnote{For simplicity, we restrict ourselves to vanilla options in this introduction. Our main result, Theorem~\ref{thm:main result}, is also applicable to a wide range of exotic options like barrier options, lookback options, Asian options, forward-start options, and options on the realised variance of the stock.} and has access to three liquidly traded securities to hedge her exposure: the stock $S$, a vanilla option $C$ on the stock (hereafter named ``call''), and a bank account with zero interest rate. In practice, the market price of the call is typically quoted in terms of its \emph{(Black--Scholes) implied volatility}. That is, instead of the market price $C_t$, traders quote the unique $\Sigma_t > 0$ such that 
\begin{align}
\label{eqn:intro:C}
C_t
&= \cC(t,S_t,\Sigma_t),
\end{align}
where $\cC(t,S,\Sigma)$ is the Black--Scholes price of the call corresponding to the volatility parameter $\Sigma$. Whence, instead of modelling the dynamics of the stock and call prices, one can equivalently describe the dynamics of the stock price and the implied volatility of the call, and define the call price via \eqref{eqn:intro:C}.

If the stock and the call are traded using a self-financing strategy $\bfupsilon = (\theta,\phi)$, the corresponding \emph{Profit\&Loss (P\&L) process} $Y^{\bfupsilon}$ has the following dynamics:
\begin{align*}
\diff Y^{\bfupsilon}_t
&= \theta_t \dd S_t + \phi_t \dd C_t - \diff\cV(t,S_t,\Sigma_t).
\end{align*}
Here, $\cV(t,S_t,\Sigma_t)$ is the Black--Scholes price of the non-traded option $\sfV$ evaluated at the implied volatility $\Sigma_t$ backed out from the price of the call at time $t$. That is, in line with industry practice, the non-traded option is ``marked to model'', whereas the liquidly traded stock and call are ``marked to market''. However, at maturity $T$ of the non-traded option, $\cV(T,S_T,\Sigma_T) = \sfV(S_T)$ is the option payoff so that $Y^{\bfupsilon}_T$ coincides with the agent's actual terminal P\&L.

We assume that the agent is uncertain about the dynamics of the stock and the call. To wit, she considers \emph{all} probability measures $P$ under which the dynamics of $(S,\Sigma)$ are governed by\footnote{As is customary in asymptotic analysis, the powers of the processes $\sigma^P$, $\nu^P$, $\eta^P$, and $\xi^P$ in the dynamics of $(S,\Sigma)$ are chosen so that all of them have a nontrivial effect on the leading-order term in the asymptotic expansions below. In contrast, using the uncorrelated volatility of implied volatility $\sqrt{\xi^P}$ instead of the uncorrelated \emph{squared} volatility $\xi^P$ would only generate a higher-order effect. This is an artefact of the Black--Scholes model: for any reference model with a \emph{nonzero} uncorrelated volatility of implied volatility, $\sqrt{\xi^P}$ would be the natural parametrisation; cf.~Remark~\ref{rem:leading order controls} for more details.}
\begin{align}
\label{eqn:intro:dynamics}
\begin{split}
\diff S_t
&= S_t \sigma^P_t \dd W^0_t,\\
\diff \Sigma_t
&= \nu^P_t \dd t + \eta^P_t \dd W^0_t + \sqrt{\xi^P_t} \dd W^1_t,
\end{split}
\end{align}
for a Brownian motion $(W^0,W^1)$ in $\RR^2$ and a process $\bfzeta^P = (\nu^P,\sigma^P,\eta^P,\xi^P)$ satisfying\footnote{Here, the partial derivatives $\cC_\Sigma$, $\cC_{SS}$, $\cC_{S\Sigma}$, and $\cC_{\Sigma\Sigma}$ of $\cC$ are evaluated in $(t,S_t,\Sigma_t)$.}
\begin{align}
\label{eqn:intro:drift condition}
\nu^P_t\cC_{\Sigma} + \frac{1}{2}S_t^2 \cC_{SS}((\sigma^P_t)^2 - \Sigma_t^2) + \sigma^P_t \eta^P_t  S_t\cC_{S\Sigma} + \frac{1}{2}((\eta^P_t)^2 + \xi^P_t) \cC_{\Sigma\Sigma}
&= 0.
\end{align}
The \emph{drift condition} \eqref{eqn:intro:drift condition} ensures that the call price process $C$ is a local $P$-martingale.\footnote{\label{ftn:local martingales}The local martingale property of the liquidly traded assets is sufficient to exclude arbitrage opportunities. It also ensures that the agent has no incentive to \emph{invest} in the market but only uses it as a hedging instrument for the non-traded option; cf.~Remark~\ref{rem:drift}.}
Note that the Black--Scholes model corresponds to $P^0$ with $\bfzeta^{P^0} = \bfzeta^0(\Sigma) := (0,\Sigma,0,0)$, i.e., the implied volatility is constant and coincides with the spot volatility.

We assume that the agent has moderate risk and uncertainty aversion.\footnote{In contrast, most of the literature on hedging under model uncertainty studies variants of the \emph{uncertain volatility model} introduced by Avellaneda, Levy, and Par\'as \cite{AvellanedaLevyParas1995} and Lyons \cite{Lyons1995}. These and many more recent studies (e.g., \cite{Frey2000, DenisMartini2006, NeufeldNutz2013, PossamaiRoyerTouzi2013, BiaginiBouchardKardarasNutz2015, Nutz2014}) look for hedging strategies that \emph{dominate} the payoff of the non-traded option \emph{almost surely} for \emph{every} model of a prespecified class. This \emph{worst-case approach} corresponds to preferences with infinite risk and uncertainty aversion.}
Concerning risk aversion, we assume that in any given model, the agent seeks to maximise the expected utility from her terminal P\&L. Concerning uncertainty aversion, we suppose that she takes models less seriously the more they deviate from the reference Black--Scholes model. In the spirit of the \emph{variational preferences} of Maccheroni, Marinacci, and Rustichini \cite{MaccheroniMarinacciRustichini2006} and the \emph{multiplier preferences} of Hansen and Sargent \cite{HansenSargent2001},\footnote{We refer to \cite[Section~1]{HerrmannMuhleKarbeSeifried2017} for more details on these preferences and their relation to the standard expected utility framework as well as the worst-case approach.} this leads to the following stochastic differential game (SDG):\footnote{Our analysis also applies to somewhat more general penalty terms; cf.~\eqref{eqn:penalty function}--\eqref{eqn:objective}. The inclusion of the term $U'(Y^\bfupsilon_t)$ is not crucial but has some appealing properties. For instance, it renders the preferences invariant under affine transformations of the utility function; cf.~Remark~\ref{rem:marginal utility} for more details.}
\begin{align}
\label{eqn:intro:hedging problem}
v(\psi)
&= \sup_{\bfupsilon = (\theta,\phi)} \inf_P \EX[P]{U(Y^{\bfupsilon}_T) +  \frac{1}{2\psi} \int_0^T U'(Y^{\bfupsilon}_t) \enorm{\bfzeta^P_t - \bfzeta^{P^0}_t}^2 \dd t}.
\end{align}
Here, $\psi > 0$, $U$ is a utility function, the supremum runs over a suitable class of trading strategies, and the infimum is taken with respect to a suitable class of probability measures satisfying \eqref{eqn:intro:dynamics}--\eqref{eqn:intro:drift condition}. One interpretation is that the agent plays a game against a fictitious adversary (a ``malevolent nature'') who controls the true dynamics of the liquidly traded assets. However, ``extreme'' choices of this adversary are penalised by the positive second term in \eqref{eqn:intro:hedging problem}: the more the chosen model $P$ deviates from the reference Black--Scholes model $P^0$, the higher the penalty for the adversary. The scaling factor $\psi > 0$ measures the magnitude of the agent's uncertainty aversion: small values of $\psi$ lead to high penalties even for small deviations from the Black--Scholes reference model, which means that alternative models are taken less seriously. Note that as $\bfzeta^{P^0}_t = (0,\Sigma_t,0,0)$, the reference Black--Scholes model reflects the belief that ``the future implied volatility stays at the \emph{currently} observed level.'' Put differently, the reference Black--Scholes model is \emph{dynamically recalibrated} to the quoted option prices.

A related hedging problem without a liquidly traded call is studied in \cite{HerrmannMuhleKarbeSeifried2017} for a local volatility reference model. There, the fictitious adversary chooses the true spot volatility of the stock, but is penalised according to its distance from the reference local volatility.

\paragraph{Asymptotics.}
To obtain explicit formulas, we pass to the limit where uncertainty aversion $\psi$ tends to zero.\footnote{Asymptotic analyses of the uncertain volatility model have been carried out by \cite{Lyons1995, AhnMuniSwindle1997, AhnMuniSwindle1999, FouqueRen2014}.} That is, we consider the hedging problem \eqref{eqn:intro:hedging problem} as a small perturbation of the classical hedging problem in the Black--Scholes model and look for hedging strategies and price corrections that take into account the impact of model uncertainty in an asymptotically optimal manner. Our main result, Theorem~\ref{thm:main result}, describes a hedging strategy $\bfupsilon^\star = (\theta^\star, \phi^\star)$, a family of models $(P^\psi)_\psi$, and $\wt w_0 \geq 0$ such that, as $\psi \downarrow 0$:
\begin{align}
\label{eqn:intro:value expansion}
\begin{split}
v(\psi)
&= U(Y_0) - U'(Y_0) \wt w_0 \psi + o(\psi)\\
&= \EX[P^\psi]{U(Y^{\bfupsilon^\star}_T) +  \frac{1}{2\psi} \int_0^T U'(Y^{\bfupsilon^\star}_t) \enorm{\bfzeta^{P^\psi}_t - \bfzeta^{P^0}_t}^2 \dd t} + o(\psi).
\end{split}
\end{align}
The first line in \eqref{eqn:intro:value expansion} is a first-order expansion of the optimal value of the hedging problem for small values of the uncertainty aversion parameter $\psi$. The second line shows that the family $(\bfupsilon^\star, P^\psi)_\psi$ attains this optimal value at the leading order $O(\psi)$.\footnote{A \emph{second-order} expansion and a \emph{next-to-leading order} optimal strategy are obtained in \cite[Theorem~3.4]{HerrmannMuhleKarbeSeifried2017}, where only the stock but no additional vanilla option is used for dynamic hedging.} More precisely, Theorem~\ref{thm:main result} shows that $(\bfupsilon^\star,P^\psi)_\psi$ is in fact an asymptotic saddle point for the family of SDGs \eqref{eqn:intro:hedging problem}, i.e., the performance of the strategy $\bfupsilon^\star$ is optimal at the leading order $O(\psi)$ and $(P^\psi)_\psi$ is a family of leading-order optimal choices for the fictitious adversary. The ask price at which the agent is indifferent between keeping a flat position and selling the option $\sfV$ has the expansion
\begin{align*}
p_a(\psi)
&= V_0 + \wt w_0 \psi + o(\psi),
\end{align*}
where $V_0$ is the Black--Scholes price of the option $\sfV$ at time $0$, evaluated with volatility $\Sigma_0$. Thus, $\wt w_0 \psi$ is the leading-order premium that the agent demands as a compensation for exposing herself to model uncertainty. Accordingly, $\wt w_0$ measures the option's susceptibility to model misspecification and we call it the \emph{cash equivalent (of small uncertainty aversion)}. We next display and discuss explicit formulas for the hedging strategy $\bfupsilon^\star$, the family of models $(P^\psi)_\psi$, and the cash equivalent~$\wt w_0$.

The hedging strategy $\bfupsilon^\star = (\theta^\star,\phi^\star)$ is the \emph{delta-vega hedge} for the option $\sfV$:
\begin{align*}
\theta^\star_t
&= \cV_S(t,S_t,\Sigma_t) - \phi^\star_t \cC_S(t,S_t,\Sigma_t),\quad
\phi^\star_t
= \frac{\cV_\Sigma}{\cC_\Sigma}(t,S_t,\Sigma_t).
\end{align*}
To wit, the number of calls $\phi^\star$ is chosen so that the \emph{net vega} of the agent's position, $\phi^\star \cC_\Sigma - \cV_\Sigma$, vanishes. This leaves the agent with a \emph{net delta}\footnote{\emph{Delta} is the sensitivity of a Black--Scholes option value with respect to changes in the price of the underlying.} of $-\cV_S + \phi^\star \cC_S$ which is in turn neutralised by holding $\theta^\star$ shares of the underlying, so that the total portfolio is both \emph{delta-} and \emph{vega-neutral}.\footnote{The vega of the underlying is obviously zero.} We emphasise that the leading-order optimality of the delta-vega hedge is independent of both the agent's utility function and her uncertainty aversion parameter $\psi$. While it is important that the agent \emph{is} risk-averse (otherwise, there would be no need to hedge at all in any given model) and \emph{is} moderately uncertainty-averse in our sense (vega hedging is redundant without uncertainty aversion), the precise configuration of the agent's preferences is by and large irrelevant. Moreover, note that the delta-vega hedge is computed with the currently observed implied volatility $\Sigma_t$ of the liquidly traded call, i.e., the Black--Scholes model used to compute the hedge is dynamically recalibrated.

We next address the asymptotically optimal models $(P^\psi)_\psi$. The process $\bfzeta^{P^\psi}$
describing the model $P^\psi$ satisfies
\begin{align*}
\bfzeta^{P^\psi}_t
&= \bfzeta^{P^0}_t + \wt\bfzeta(t,S_t,\Sigma_t)\psi + o(\psi)
\end{align*}
for some $\wt\bfzeta = \wt\bfzeta(t,S,\Sigma)$ arising from a linearly constrained quadratic programming problem derived from the Hamilton--Jacobi--Bellman--Isaacs (HJBI) equation associated to the SDG \eqref{eqn:intro:hedging problem} (the constraints originate from the drift condition \eqref{eqn:intro:drift condition} and the restriction that the uncorrelated \emph{squared} volatility of implied volatility is nonnegative). The model $P^\psi$ is a perturbation of the Black--Scholes model $P^0$, parametrised by the four processes $\nu^P$, $\sigma^P$, $\eta^P$, and $\xi^P$ in \eqref{eqn:intro:dynamics}. The explicit formula for $\wt\bfzeta$ (cf.~\eqref{eqn:candidate control:first order}) shows that the asymptotically optimal perturbation exploits the disparity between the \emph{vegas}, \emph{gammas}, \emph{vannas}, and \emph{volgas}\footnote{\emph{Gamma}, \emph{vanna}, and \emph{volga} are the second-order partial derivatives $\partial^2/\partial S^2$, $\partial^2/(\partial S\partial\Sigma)$, and $\partial^2/\partial \Sigma^2$ of the Black--Scholes value of an option.} of the non-traded option $\sfV$ and the liquidly traded call while preserving the drift condition \eqref{eqn:intro:drift condition} and the restriction $\xi^P \geq 0$. In fact, if each of these greeks has the same value for both the non-traded option and the liquidly traded call (e.g., if $\sfV$ is a put with the same maturity and strike as the call), then the leading-order optimal perturbation $\wt\bfzeta$ is zero.

Finally, we discuss the structure of the expansion \eqref{eqn:intro:value expansion} and the cash equivalent $\wt w_0$. As the Black--Scholes model is complete and the traded assets are local martingales, the zeroth-order term in the expansion \eqref{eqn:intro:value expansion} of $v(\psi)$ simply is the utility $U(Y_0)$ generated by the initial P\&L. The first-order correction term $-U'(Y_0)\wt w_0\psi$ is nonpositive and describes the impact of model uncertainty for small uncertainty aversion. The cash equivalent $\wt w_0$ is determined by a linear second-order parabolic partial differential equation (PDE) with a source term. It has the following probabilistic representation:
\begin{align}
\wt w_0
&= \frac{1}{2}\EX[P^0]{\int_0^T \wt g(t,S_t,\Sigma_0) \dd t},\quad\text{with}\notag\\
\label{eqn:intro:source term}
\wt g(t,S,\Sigma)
&= -\Sigma (\underbrace{\phi^\star S^2\cC_{SS} - S^2 \cV_{SS}}_{\text{net cash gamma}})\wt\sigma
-\Sigma (\underbrace{\phi^\star S \cC_{S\Sigma} - S\cV_{S\Sigma}}_{\text{net cash vanna}})\wt\eta
-\frac{1}{2}(\underbrace{\phi^\star \cC_{\Sigma\Sigma} - \cV_{\Sigma\Sigma}}_{\text{net volga}})\wt\xi\geq 0.
\end{align}
Here, $\wt\bfzeta(t,S,\Sigma) = (\wt\nu,\wt\sigma,\wt\eta,\wt\xi)(t,S,\Sigma)$ and all functions on the right-hand side of \eqref{eqn:intro:source term} are evaluated in $(t,S,\Sigma)$. The cash equivalent $\wt w_0$ is thus determined by the expected \emph{net cash gamma}, \emph{net cash vanna}, and \emph{net volga} of the delta-vega hedged position that is accumulated over the lifetime of the option $\sfV$.\footnote{In contrast, if there is no liquidly traded call available as a hedging instrument, then the option's cash gamma is the only greek that appears in the probabilistic representation of the cash equivalent \cite{HerrmannMuhleKarbeSeifried2017}.} These three net (cash) greeks are weighted by the leading-order optimal perturbation of the spot volatility, the correlated volatility of implied volatility, and the uncorrelated squared volatility of implied volatility, respectively. The larger $\wt g$, the larger the cash equivalent $\wt w_0$. In particular, a short net gamma position (after vega hedging) is exposed to high spot volatility (positive $\wt\sigma$), a short net vanna position is exposed to volatility of implied volatility that is positively correlated with the underlying (positive $\wt\eta$), and a short net volga position is exposed to volatility of implied volatility (positive $\wt\xi$).\footnote{According to formula \eqref{eqn:intro:source term}, a short net volga position is only exposed to the part of the volatility of implied volatility that is \emph{uncorrelated} with the underlying. However, it can be seen from the proof that the correlated volatility of implied volatility has the same effect, albeit only at the order $O(\psi^2)$.} Conversely, long positions in net gamma or net vanna have the reverse exposures, but a long net volga position is \emph{not} exposed to volatility of implied volatility because $\wt\xi$ cannot be negative.

\paragraph{Techniques.}
The HJBI equation associated to the SDG~\eqref{eqn:intro:hedging problem} involves a pointwise min-max problem for the hedging strategy of the agent and for the control variables of the fictitious adversary. This min-max problem has a nonlinear equality constraint and an inequality constraint that originate from the drift condition \eqref{eqn:intro:drift condition} and the restriction $\xi^P \geq 0$, respectively.

Formally passing to the limit as $\psi \downarrow 0$, this problem can be approximated by a \emph{linearly} constrained \emph{quadratic} minimisation problem and an unconstrained quadratic maximisation problem. Both of these problems can be solved explicitly and give rise to the delta-vega hedge and candidate controls $(\bfzeta^\psi)_\psi$ corresponding approximately to the family of models $(P^\psi)_\psi$. Plugging these candidates back into the HJBI equation yields a PDE for the first-order term in the expansion of the value function of the SDG.

The rigorous verification of the (asymptotic) optimality of these candidates combines an asymptotic analysis of the HJBI equation with classical verification arguments for SDGs. It is divided into a purely analytic and a probabilistic part. Due to the constraints in the min-max problem, both parts of the proof require substantially different approaches compared to those used in \cite{HerrmannMuhleKarbeSeifried2017}. The analytic part uses careful direct estimates and Lagrange duality theory for constrained optimisation problems to show that the candidate value function is asymptotically (in a suitable sense) a solution to the HJBI equation. The probabilistic part of the proof adapts classical verification arguments for SDGs to the asymptotic setting. New difficulties arise now from the fact that the candidate controls of the fictitious adversary do not satisfy the drift condition~\eqref{eqn:intro:drift condition} exactly (as the nonlinear constraint is only approximated by a linear one).

\paragraph{Related literature.}
Let us now put our results in context by discussing some of the extant literature on the hedging of exotics using vanilla options. One strand of literature postulates that both the asset price and its spot volatility are stochastic and follow given dynamics driven by two Brownian motions. Stochastic volatility models of this type can typically be completed by using a single plain-vanilla option as an extra hedging instrument in addition to the underlying stock.\footnote{See, e.g., \cite{RomanoTouzi1997,Davis2004,DavisObloj2008} for precise conditions.} In Markovian settings, replicating strategies can then be determined in close analogy to the classical Black--Scholes argument. This leads to the so-called ``delta-sigma hedge'' \cite{HullWhite1987,Scott1991}, which neutralises the portfolio's sensitivity with respect to changes in both the underlying stock price and the spot volatility. This strategy is related to the delta-vega hedge in that it also makes use of the derivative of the option price with respect to ``volatility''. Here, however, ``volatility'' refers to the spot volatility that can (at least in theory) be backed out from the realised variance of the stock. Instead, delta-vega hedging neutralises a portfolio's sensitivity with respect to changes in the (Black--Scholes) implied volatility that is deduced from the market price of a liquidly traded option. While the spot volatility gives the \emph{instantaneous} volatility of the stock price, the implied volatility is rather an estimate for the future volatility realised over the whole time \emph{interval} ranging from today to the maturity of the liquidly traded option. Moreover, in practice, also stochastic volatility models have to be recalibrated once the model and market prices of liquidly traded options diverge.

Another strand of literature studies the \emph{robust} hedging of exotic derivatives. To wit, these studies look for hedging strategies that work in some large class of models (e.g., any continuous martingale model). The hedging strategies are typically of \emph{semi-static} form: they allow a static position in a portfolio of calls and puts (often for one maturity and all strikes) and dynamic trading in the underlying.\footnote{Semi-static hedging problems have also been analysed numerically in the context of the \emph{Lagrangian uncertain volatility model} \cite{AvellanedaParas1996, AvellanedaBuff1999}.}
For variance swaps, this leads to a robust replicating strategy \cite{Neuberger1994}, whereas robust sub- and superhedging strategies have been determined for various other exotic options (cf., e.g., \cite{Hobson1998.lookback,BrownHobsonRogers2001,CoxObloj2011.notouch,CoxObloj2011.touch,CarrLee2010, HobsonNeuberger2012,HobsonKlimmek2012,HobsonKlimmek2015}). In these studies, the goal is to find portfolios that sub- or superreplicate the exotic option in each possible scenario.\footnote{General superhedging duality results in the semi-static context have been obtained, among others, by \cite{AcciaioBeiglbockPenknerSchachermayer2016, BeiglbockHenryLaborderePenkner2013, DolinskySoner2014,  GalichonHenryLabordereTouzi2014, BouchardNutz2015}; see also the references therein.}
The underlying preferences therefore correspond to infinite aversion both against risk in a given model and uncertainty about the model itself. In contrast, as in \cite{HerrmannMuhleKarbeSeifried2017}, we consider a more moderate attitude towards risk and uncertainty that interpolates smoothly between the worst-case approach and the classical setting with one fixed model. The other major difference is that we allow \emph{dynamic} trading in a \emph{single} vanilla option instead of static positions in puts and calls of many strikes.

In practice, even the most liquid at-the-money options have substantially larger bid-ask spreads than the underlying stocks. As a result, a direct implementation of the delta-vega hedge with, e.g., daily rebalancing leads to substantial transaction costs and is found to be inferior to semi-static alternatives in several case studies~\cite{CoxObloj2011.touch,OblojUlmer2012}. As a remedy, the delta-vega hedge needs to be implemented with a suitable ``buffer''. That is, rebalancing trades should only take place once the hedge portfolio deviates sufficiently from its frictionless target. The corresponding trading boundaries for Black--Scholes delta-hedging strategies have been determined explicitly in the small-cost limit by \cite{WhalleyWilmott1997}; cf.~also \cite{KallsenMuhleKarbe2015} and the references therein for extensions to more general settings. Extending these tracking results to more general target strategies involving liquid vanilla options is a major challenge for future research. To date, the only result of this kind concerns the dynamic trading of options to reduce transaction costs~\cite{GoodmanOstrov2011}, which leads to a buffered version of the delta-gamma hedge.

\paragraph{Organisation of the paper.} The remainder of the article is organised as follows. The mathematical framework for the hedging problem under model uncertainty is introduced in Section~\ref{sec:problem formulation}. Section~\ref{sec:heuristics} outlines the heuristic derivation of the asymptotically optimal solution. Our main results are stated and discussed in Section~\ref{sec:main results}. Finally, all proofs are relegated to Section~\ref{sec:proofs}.

\paragraph{Notation.}
Vectors $\bfa \in \RR^n$ and vector-valued functions are printed in boldface type. The transpose of a vector $\bfa$ is denoted by $\bfa^\tr$ and its Euclidean norm by $\enorm{\bfa}$. For the sake of readability, we mostly suppress the arguments of functions in the notation. In calculations and estimates, we typically display the arguments only on the left-most side of (in-)equalities; the omitted arguments should then be clear from the context. Partial derivatives of functions with respect to scalar variables are denoted by subscripts as in \eqref{eqn:intro:drift condition} and $\D_\bfzeta H$ denotes the gradient of a function $H(\ldots;\bfzeta)$ with respect to the vector variable $\bfzeta$.

\section{Problem formulation}
\label{sec:problem formulation}

To allow for dynamic trading in both the stock and an option on the stock, we consider \emph{market models} for the joint evolution of both assets. Instead of prescribing the dynamics of the option, we follow Sch\"onbucher's approach \cite{Schonbucher1999} and model its Black--Scholes implied volatility.\footnote{Other early articles on risk-neutral dynamics for stochastic implied volatility models include \cite{Lyons1997, BraceGoldysKlebanerWomersley2001, LedoitSantaClaraYan2002}. For more recent developments on arbitrage-free market models for (parts of or the whole) option price surface, we refer the reader to \cite{SchweizerWissel2008.existence, SchweizerWissel2008.multistrike, CarmonaNadtochiy2009, JacodProtter2010, CarmonaNadtochiy2011, CarmonaNadtochiy2012,KallsenKruhner2015} and the references therein.}
This approach is outlined in Section~\ref{sec:market models} and motivates the precise setup introduced in Section~\ref{sec:setup}. The hedging problem is in turn formulated in Section~\ref{sec:hedging problem}.

\subsection{Market models for the underlying and its implied volatility}
\label{sec:market models}

We consider a financial market with three liquidly traded securities: a stock $S$, an option written on the stock, and a bank account with zero interest rate. The liquidly traded option has a payoff of the form $\sfC(S_{T_\sfC})$ at maturity ${T_\sfC}$. To avoid  confusion with the non-traded option introduced later, this liquidly traded option will be named ``call'' hereafter. It is market practice to quote option prices in terms of their \emph{(Black--Scholes) implied volatilities}. That is, traders do not quote the market price $p$ of the call $C$, but instead the unique $\Sigma > 0$ such that $p = \cC(t,S_t,\Sigma)$, where $\cC(\cdot,\cdot,\Sigma)$ is the solution of the Black--Scholes PDE
\begin{align}
\label{eqn:C:PDE}
\begin{split}
\cC_t(t,S,\Sigma) + \frac{1}{2}\Sigma^2 S^2 \cC_{SS}(t,S,\Sigma)
&= 0, \quad (t,S) \in (0,{T_\sfC})\times\RR_+,\\
\cC({T_\sfC},S,\Sigma)
&= \sfC(S),\quad S \in \RR_+,
\end{split}
\end{align}
corresponding to volatility $\Sigma$, maturity ${T_\sfC}$, and the terminal payoff $\sfC(S_{T_\sfC})$ of the call. Following this practice and Sch\"onbucher's approach \cite{Schonbucher1999}, we model the implied volatility rather than the price process of the call. To wit, we assume that the joint dynamics of the stock $S$ and the call's implied volatility $\Sigma$ are governed by
\begin{align}
\label{eqn:S:dynamics}
\diff S_t
&= S_t \sigma_t \dd W^0_t,\\
\label{eqn:IV:dynamics}
\diff \Sigma_t
&= \nu_t \dd t + \eta_t \dd W^0_t + \sqrt{\xi_t} \dd W^1_t,
\end{align}
for a bivariate standard Brownian motion $(W^0,W^1)$ and processes $\sigma,\nu,\eta,\xi$. Here, $\sigma$ is the \emph{spot volatility}, and $\nu$, $\eta$, and $\xi$ correspond to the \emph{drift of implied volatility}, the \emph{correlated volatility of implied volatility}, and the \emph{uncorrelated squared\footnote{The parametrisation in terms of the \emph{squared} volatility of implied volatility is explained in Remark~\ref{rem:leading order controls}.} volatility of implied volatility}, respectively. The price process $C$ of the call in turn is
\begin{align}
\label{eqn:C:definition}
C_t
&= \cC(t,S_t,\Sigma_t).
\end{align}
By It\^o's formula, its dynamics are given by
\begin{align*}
\diff C_t
&= \diff\cC(t,S_t,\Sigma_t)
= \cC_t \dd t + \cC_S \dd S_t + \cC_{\Sigma} \dd\Sigma_t + \frac{1}{2}\cC_{SS} \dd \langle S \rangle_t + \cC_{S\Sigma} \dd\langle S, \Sigma \rangle_t + \frac{1}{2}\cC_{\Sigma\Sigma} \dd \langle \Sigma \rangle_t\notag\\
&= \cC_S \dd S_t + \eta_t \cC_{\Sigma} \dd W^0_t + \sqrt{\xi_t} \cC_{\Sigma} \dd W^1_t\\
&\qquad +\left\lbrace \cC_t +  \nu_t\cC_{\Sigma} + \frac{1}{2}\sigma_t^2 S_t^2 \cC_{SS} + \sigma_t \eta_t S_t\cC_{S\Sigma} + \frac{1}{2}(\eta_t^2 + \xi_t) \cC_{\Sigma\Sigma}\right\rbrace \diff t.
\end{align*}
We suppose that all liquidly traded assets are local martingales (cf.~Footnote~\ref{ftn:local martingales} and Remark~\ref{rem:drift}). Thus, the drift of the liquidly traded call must vanish. Using the PDE \eqref{eqn:C:PDE} to substitute $\cC_t=\cC_t(t,S_t,\Sigma_t)$, the following \emph{drift condition} obtains (cf.~\cite[Equation (3.6)]{Schonbucher1999}):
\begin{align}
\label{eqn:motivation:drift condition}
\nu_t\cC_{\Sigma} + \frac{1}{2}S_t^2 \cC_{SS}(\sigma_t^2 - \Sigma_t^2) + \sigma_t \eta_t  S_t\cC_{S\Sigma} + \frac{1}{2}(\eta_t^2 + \xi_t) \cC_{\Sigma\Sigma}
&= 0.
\end{align}
In view of \eqref{eqn:motivation:drift condition}, at most three of the four processes $\nu,\sigma,\eta$, and $\xi$ can be chosen arbitrarily for the resulting model to satisfy the drift condition. Further natural restrictions are $\sigma > 0$, $\xi \geq 0$, and $\Sigma > 0$. Note that the standard Black--Scholes model corresponds to the choice $\nu = \eta = \xi = 0$ and $\sigma_t = \Sigma_t = \Sigma_0$. Then, the drift condition~\eqref{eqn:motivation:drift condition} is clearly satisfied and spot and implied volatilities are constant and identical.

\begin{remark}
\label{rem:drift}
Let us briefly discuss as in \cite[Remark 2.2]{HerrmannMuhleKarbeSeifried2017} why we assume that the traded assets $S$ and $C$ have zero drifts. With nonzero drifts, the agent would use the traded assets not only as hedging instruments, but also as investment vehicles. This would complicate the analysis considerably as the limiting P\&L process would no longer be constant but stochastic. But the real-world drift rates usually have little impact on the \emph{hedging component}, i.e., the difference between a utility-based hedging strategy and the corresponding utility-based optimal investment strategy.\footnote{For example, \cite{HubalekKallsenKrawczyk2006} find in a L\'evy model that the (drift-dependent) variance-optimal hedge is virtually identical to the (drift-independent) Black--Scholes delta hedge.} Assuming that the traded assets have zero drifts allows us to focus on hedging rather than optimal investment. Indeed, the agent then has no incentive to trade the stock and the call other than as hedging instruments for the non-traded option.
\end{remark}

In the following Section~\ref{sec:setup}, we introduce a setup to formulate our hedging problem with \emph{uncertainty} about the processes $\nu,\sigma,\eta,\xi$.

\subsection{Model uncertainty setup}
\label{sec:setup}

Fix a time horizon $T>0$ and constants $S_0 > 0$, $\Sigma_0 > 0$, and $A_0 \in \RR$. Let
\begin{align*}
\Omega
&= \lbrace \omega = (\omega^S_t,\omega^\Sigma_t,\omega^A_t)_{t\in[0,T]} \in C([0,T];\RR^3) : \omega_0 = (S_0,\Sigma_0, A_0) \rbrace
\end{align*}
be the canonical space of continuous paths in $\RR^3$ starting in $(S_0, \Sigma_0, A_0)$, endowed with the topology of uniform convergence. Moreover, let $\cF$ be the Borel $\sigma$-algebra on $\Omega$. We denote by $(S_t)_{t \in [0,T]}$, $(\Sigma_t)_{t\in[0,T]}$, and $(A_t)_{t\in[0,T]}$ the first, second, and third component of the canonical process, respectively, i.e., $S_t(\omega) = \omega^S_t$, $\Sigma_t(\omega) = \omega^\Sigma_t$, and $A_t(\omega) = \omega^A_t$. We write $\FF = (\cF_t)_{t\in[0,T]}$ for the (raw) filtration generated by $(S,\Sigma,A)$, and denote by $M_t := \sup_{u\in[0,t]} S_u$, $t \in [0,T]$, the running maximum of $S$. Unless otherwise stated, all probabilistic notions requiring a filtration, such as progressive measurability etc., pertain to $\FF$. Finally, we write $(\bfX_t)_{t\in[0,T]}$ for the vector-valued process $\bfX_t = (S_t,A_t,M_t,\Sigma_t)$.

\begin{remark}
The processes $S$, $M$, and $\Sigma$, model the stock price, its running maximum, and the implied volatility of the traded call, respectively. The process $A$ is an additional state variable that can be used to track exotic features of the non-traded option the agent has to hedge. For instance, an Asian call option with strike $K>0$ has the payoff $\left(\frac{1}{T}\int_0^T S_t \dd t - K\right)^+$. Setting $A_t = \int_0^t S_u \dd u$, the payoff can be recast as $(\frac{1}{T}A_T - K)^+$ and exploiting the Markovian structure of the process $(S,A)$, the Black--Scholes value of the Asian call can be written as a function $\cV(t,S_t,A_t)$ of time, the current stock price, and the current value of the additional state variable $A$.
\end{remark}

We now introduce a large class of probability measures on $(\Omega, \cF)$ that will serve as alternative models for the evolution of the traded assets.

\begin{definition}
\label{def:fP00}
$\fP^{00}$ is the set of probability measures on $(\Omega,\cF)$ for which there exists a quadruple $\bfzeta^P = (\nu^P_t, \sigma^P_t, \eta^P_t, \xi^P_t)_{t\in[0,T]}$ of real-valued progressively measurable processes such that:
\begin{enumerate}
\item $S$ and $\Sigma - \int_0^\cdot \nu^P_t \dd t$ are (continuous) local $P$-martingales with quadratic (co-)variations
\begin{align}
\label{eqn:covariation:S and IV}
\begin{split}
\diff \langle S \rangle_t 
&= S_t^2 (\sigma^P_t)^2 \dd t,\\
\diff \langle \Sigma \rangle_t
&= ((\eta^P_t)^2 + \xi^P_t) \dd t,\\
\diff \langle S, \Sigma \rangle_t
&= S_t \sigma^P_t \eta^P_t \dd t;
\end{split}
\end{align}
\item $S$ and $\Sigma$ are $P$-a.s.~positive;
\item $\xi^P \geq 0$ $P$-a.s.;
\item the \emph{drift condition}
\begin{align}
\label{eqn:drift condition}
\nu^P_t \cC_\Sigma + \frac{1}{2}S_t^2\cC_{SS}((\sigma^P_t)^2 - \Sigma_t^2) + \sigma^P_t \eta^P_t S_t \cC_{S\Sigma} + \frac{1}{2}((\eta^P_t)^2 + \xi^P_t) \cC_{\Sigma\Sigma}
&= 0
\end{align}
holds $\diff t\times P\text{-a.e.}$ Here, the partial derivatives of $\cC$ are evaluated in $(t,S_t,\Sigma_t)$.
\end{enumerate}
\end{definition}

A probability measure $P \in \fP^{00}$ is called a \emph{model} and the process $\bfzeta^P$ is referred to as the \emph{control} corresponding to the model $P$. Each $P$ represents a market model for the stock price $S$ and the implied volatility $\Sigma$ with dynamics of the form \eqref{eqn:S:dynamics}--\eqref{eqn:IV:dynamics} (with $\sigma$ replaced by $\sigma^P$ etc.) and \eqref{eqn:drift condition} guarantees that the call price process is a local $P$-martingale (cf.~\eqref{eqn:motivation:drift condition}).

\begin{definition}
\label{def:reference model}
The function $\bfzeta^0:\RR_+ \to \RR^4$ given by $\bfzeta^0(\Sigma) = (0,\Sigma,0,0)^\tr$ is called \emph{reference feedback control}.
A probability measure $P \in \fP^{00}$ such that $\bfzeta^P_t = \bfzeta^0(\Sigma_t)$ $\diff t\times P$-a.e.~is called \emph{reference model}.
\end{definition}
Note that a reference model corresponds to a Black--Scholes model with constant volatility $\sigma_t \equiv \Sigma_t \equiv \Sigma_0$ and trivially satisfies the drift condition \eqref{eqn:drift condition}.

Next, we consider a subclass $\fP^0 \subset \fP^{00}$ which (in contrast to $\fP^{00}$) also prescribes the dynamics for the additional state variable $A$ that tracks exotic features of the non-traded option. To this end, we fix Borel functions $\alpha, \beta, \gamma, \delta: [0,T]\times\RR^3\to\RR$.

\begin{definition}
\label{def:fP0}
$\fP^0 = \fP^0(\alpha,\beta,\gamma,\delta) \subset \fP^{00}$ is the subset of probability measures $P$ such that $A$ is a (continuous) $P$-semimartingale with canonical decomposition
\begin{align}
\label{eqn:A:dynamics}
\diff A_t
&= \left( \alpha + \frac{(\sigma^P_t)^2}{2}\beta  \right) \dd t + \gamma \dd S_t + \delta \dd M_t
\end{align}
under $P$ (the functions $\alpha,\beta,\gamma,\delta$ are evaluated in $(t,S_t,A_t,M_t)$).
\end{definition}

The form \eqref{eqn:A:dynamics} for the dynamics of $A$ is flexible enough to express Black--Scholes values of, e.g., Asian options, options on the realised variance, or forward-start options by PDE methods. We also note that given sufficiently regular functions $\alpha,\beta,\gamma,\delta$, there is a unique reference model in $\fP^0$.

\subsection{Hedging problem}
\label{sec:hedging problem}

\paragraph{Dynamic model recalibration.}
Consider an agent who has sold a non-traded option (possibly exotic) on $S$ with sufficiently regular\footnote{See Assumption~\ref{ass:main result} for the precise details.} payoff $\sfV(S_T,A_T,M_T)$\footnote{Recall that $M$ is the running maximum of $S$ and that $A$ is a general state variable with dynamics of the form \eqref{eqn:A:dynamics} which can track exotic features of the option like the average stock price or the stock price at an intermediate time; cf.~\cite[Section~4.2]{HerrmannMuhleKarbeSeifried2017} for examples.} at maturity $T$.  She can hedge her exposure by trading dynamically and frictionlessly in the stock, the call, and the bank account. 

Among all possible dynamics, the agent considers as most plausible the Black--Scholes model corresponding to the currently observed implied volatility, i.e., $\nu = \eta = \xi = 0$ and $\sigma = \Sigma$ (recall that the drift condition \eqref{eqn:motivation:drift condition} holds for this choice). This corresponds to the reference belief that ``The future implied volatility stays at the currently observed level.'' Note that this differs from the conviction that  ``The future implied volatility equals the implied volatility observed at time $0$.'': the former belief allows for dynamic updating of the observed implied volatility, the latter does not. In particular, at each time $t$, the agent (re-)calibrates her Black--Scholes model to the observed market price of the liquidly traded call option. This is in line with the market practice of frequent recalibration of pricing models to observed option prices. The corresponding Black--Scholes value of the non-traded option can readily be obtained by PDE methods. To this end, let $\bfG = \RR_+ \times \RR \times \RR_+$ be the state space of the process $(S, A, M)$ and for each $\Sigma > 0$, let $\cV(\cdot,\Sigma)$ be a classical solution to the PDE
\begin{align}
\label{eqn:V:PDE}
\begin{split}
\cV_t + (\alpha + \frac{1}{2}\beta\Sigma^2)\cV_A + \frac{1}{2}\Sigma^2 S^2 (\cV_{SS} + 2\gamma\cV_{SA} + \gamma^2\cV_{AA})
&= 0\quad \text{on } (0,T)\times\bfG,\\
\delta\cV_A + \cV_M
&= 0\quad \text{on } \lbrace (t,S,A,M) : S \geq M \rbrace,\\
\cV(T,\cdot,\Sigma)
&= \sfV\quad \text{on } \bfG.
\end{split}
\end{align}
Define the process $V = (V_t)_{t\in[0,T]}$ by
\begin{align}
\label{eqn:V:definition}
V_t
&= \cV(t,S_t,A_t,M_t,\Sigma_t).
\end{align}
Then, as is well known, $V_t$ is the Black--Scholes value at time $t$ of the non-traded option $\sfV$ \emph{given the current observation} of the stock price $S_t$, the state variables $A_t$ and $M_t$, \emph{and the implied volatility $\Sigma_t$}. In other words, the Black--Scholes model used to value the option $\sfV$ is dynamically recalibrated to the observed call prices.

\paragraph{Trading strategies and Profit\&Loss processes.}
A (self-financing) \emph{trading strategy} is represented by a pair $\bfupsilon = (\theta, \phi)$ of real-valued, locally bounded,\footnote{For locally bounded, progressively measurable integrands, the stochastic integrals in \eqref{eqn:Y:definition} are well defined under each measure in $\fP^0$. The delta-vega hedge considered in our main result, Theorem~\ref{thm:main result}, is even continuous.} progressively measurable processes $\theta = (\theta_t)_{t\in[0,T]}$ and $\phi = (\phi_t)_{t\in[0,T]}$, which describe the number of stocks and calls held by the agent, respectively. Fix a constant $Y_0\in\RR$ and for each $P\in\fP^0$ and any trading strategy $\bfupsilon$, define the \emph{Profit\&Loss (P\&L) process} $Y^{\bfupsilon,P}=(Y^{\bfupsilon,P}_t)_{t\in[0,T]}$ by
\begin{align}
\label{eqn:Y:definition}
Y^{\bfupsilon,P}_t
&= Y_0 + V_0 + \int_0^t \theta_u \dd S_u + \int_0^t \phi_u \dd C_u - V_t.
\end{align}
Here, the stochastic integrals are constructed under $P$. The process $Y^{\bfupsilon,P}_t$ describes the value of the agent's portfolio at time $t$ under the model $P$, i.e., her initial capital $Y_0 + V_0$ (recall the definition of $V_t$ in \eqref{eqn:V:definition}) plus gains from self-financing trading in the liquidly traded assets (computed under $P$) minus the (recalibrated) Black--Scholes value $V_t$ of the non-traded option at time $t$. Note that while the position in the liquidly traded assets are ``marked to market'' and constitute ``real values'' (because these assets could be liquidated instantly by assumption), the non-traded option has to be ``marked to model'' and thus only has a ``theoretical value''. However, at the maturity $T$ of the non-traded option, $V_T$ equals the option's payoff and the value of the option becomes ``real''. In particular, $Y^{\bfupsilon,P}_T$ is the agent's actual terminal wealth.

\paragraph{Uncertainty aversion.} Fix a model set $\fP \subset \fP^0$ and a set $\fY$ of trading strategies. Similarly to \cite{HerrmannMuhleKarbeSeifried2017},\footnote{In \cite{HerrmannMuhleKarbeSeifried2017}, only the underlying but no liquid call is available for dynamic hedging and the spot volatility is the only control variable of the fictitious adversary.} we assume that the agent ranks trading strategies in $\fY$ according to a numerical representation of her preferences of the form
\begin{align}
\label{eqn:numerical representation}
\inf_{P \in \fP} \EX[P]{U(Y^{\bfupsilon,P}_T) + \frac{1}{\psi}\int_0^T U'(Y^{\bfupsilon,P}_t) f(\Sigma_t,\bfzeta^P_t) \dd t },
\end{align}
where $f$ is a suitable function such that for each $\Sigma > 0$, $\RR^4 \ni \bfzeta \mapsto f(\Sigma,\bfzeta)$ is strictly convex with a unique minimum of $0$ at the reference point $\bfzeta^0(\Sigma)$. The utility function $U$ describes the agent's attitude towards risk in a given model. The infimum over models in $\fP$ together with the penalty term\footnote{\label{ftn:penalty}Note that the penalty is imposed on the fictitious adversary who chooses the model $P$ after the agent has chosen her trading strategy $\bfupsilon$. Alternatively, it can be interpreted as a fictitious bonus for the agent.} (the second summand inside the expectation in \eqref{eqn:numerical representation}) expresses her attitude towards model uncertainty. The parameter $\psi > 0$  quantifies the magnitude of her \emph{uncertainty aversion}. Indeed, in the limit $\psi\downarrow0$, the second summand in \eqref{eqn:numerical representation} converges to the indicator $+\infty\1_{\lbrace \bfzeta^P \neq \bfzeta^0(\Sigma) \rbrace}$ and the criterion \eqref{eqn:numerical representation} collapses to the standard expected utility under the reference model. In this case, the agent faces no uncertainty aversion at all as she only deems the reference model plausible. Conversely, in the limit $\psi\uparrow\infty$, the penalty term converges to $0$ for all $P \in \fP$ and the criterion \eqref{eqn:numerical representation} becomes the familiar worst-case expectation $\inf_{P \in \fP} \EX[P]{U(Y^{\bfupsilon,P}_T)}$. In this case, the agent is very uncertainty-averse in that she regards every model in $\fP$ as equally plausible. The criterion \eqref{eqn:numerical representation} interpolates smoothly between these two extreme cases. The reference model is not penalised, while alternative models are underweighted in the agent's decision making according to their ``distance'' from the reference model. The interpretation is that the reference model is considered most plausible. Alternative models are taken less seriously, but not ruled out a priori.

For tractability, we focus on the following quadratic specification for the penalty function $f$:\footnote{More general functions $f$ are considered in \cite{HerrmannMuhleKarbeSeifried2017}, where it becomes apparent that only the locally quadratic structure at the minimum matters for the leading-order asymptotics.}
\begin{align}
\label{eqn:penalty function}
\begin{split}
f(\Sigma,\bfzeta)
&= \frac{1}{2}(\bfzeta - \bfzeta^0(\Sigma))^\tr \Psi^{-1} (\bfzeta - \bfzeta^0(\Sigma))\\
&= \frac{1}{2}(\nu^2/\psi_\nu + (\sigma - \Sigma)^2/\psi_\sigma + \eta^2/\psi_\eta + \xi^2/\psi_\xi)
\end{split}
\end{align}
where
\begin{align}
\label{eqn:Psi}
\Psi
&= \diag(\psi_\nu,\psi_\sigma,\psi_\eta,\psi_\xi)
\quad\text{and}\quad
\psi_\nu,\psi_\sigma,\psi_\eta,\psi_\xi > 0.
\end{align}
The parameters $\psi_\nu,\psi_\sigma,\psi_\eta,\psi_\xi$ describe the agent's relative uncertainty about the true drift of implied volatility, spot volatility, correlated volatility of implied volatility, and uncorrelated squared volatility of implied volatility, respectively. The scaling parameter $\psi$ measures her overall level of aversion against uncertainty.

\begin{remark}
\label{rem:marginal utility}
Let us argue as in \cite[Remark~2.6]{HerrmannMuhleKarbeSeifried2017} why we include the term $U'(Y^{\bfupsilon,P}_t)$ in the penalty term of the numerical representation \eqref{eqn:numerical representation}.\footnote{Formally, this corresponds to directly imposing the penalty in monetary terms, i.e., inside the utility function in \eqref{eqn:numerical representation}.} First, in the standard expected utility framework, preferences are invariant under affine transformations of the utility function. The term $U'(Y^{\bfupsilon,P}_t)$ ensures that this property is preserved for uncertainty-averse decision makers whose preferences are described by \eqref{eqn:numerical representation}. Second, $U'(Y^{\bfupsilon,P}_t)$ (rather than, e.g., $U'(Y_0)$\footnote{Using $U'(Y_0)$ instead of $U'(Y^{\bfupsilon,P}_t)$ would yield the same expansion for $v(\psi)$ as in Theorem~\ref{thm:main result}. Formally, the delta-vega hedge and the candidate optimal controls for the fictitious adversary would still be leading-order optimal. This is because the P\&L process converges to a constant in the limit of small uncertainty aversion. Consequently, one could also remove $U'(Y^{\bfupsilon,P}_t)$ from the penalty term by replacing the matrix $\Psi$ by $\Psi/U'(Y_0)$. Then $U'(Y_0)$ would reappear in the candidate feedback control for the fictitious adversary and hence also in the cash equivalent $\wt w_0$. Keeping $U'(Y^{\bfupsilon,P}_t)$ in the penalty term avoids that the candidate optimal controls depend on the current P\&L of the agent. This avoids some mathematical subtleties in the formulation of the hedging problem; cf.~\cite{HerrmannMuhleKarbeSeifried2017}, where the P\&L process $Y$ lives on the canonical space so that (progressively measurable) controls may depend on $Y$.}) is the natural choice for a dynamic formulation of the hedging problem \eqref{eqn:value} in terms of a family of conditional problems parametrised by the initial time $t$, stock price $S_t = s$, and P\&L $Y^{\bfupsilon,P}_t = y$. Third, our results show that the preferences described by \eqref{eqn:numerical representation} have approximately ``constant uncertainty aversion'' in the sense that the cash equivalent $\wt w_0$ does not depend on the P\&L (cf.~Proposition~\ref{prop:Feynman-Kac}). This would not be the case if one omitted the term $U'(Y^{\bfupsilon,P}_t)$ in \eqref{eqn:numerical representation}.\footnote{In the context of robust portfolio choice, Maenhout \cite{Maenhout2004} also observes that some modification of the standard (non wealth-dependent) entropic penalty is reasonable to avoid that the agent's uncertainty aversion wears off as her wealth rises, and tackles this effect by directly modifying the HJBI equation.}

We also note that the penalty term depends on the chosen trading strategy $\bfupsilon$ of the agent only through her current P\&L (just as the indirect risk tolerance process of an investor depends on her trading strategy only through her current wealth).
\end{remark}

\paragraph{Hedging problem.}
Fix $\psi > 0$. For each trading strategy $\bfupsilon \in\fY$ and each model $P \in \fP$, we define the \emph{objective} of our hedging problem by
\begin{align}
\label{eqn:objective}
J^\psi(\bfupsilon,P)
&:= \EX[P]{U(Y^{\bfupsilon,P}_T) + \frac{1}{\psi} \int_0^T U'(Y^{\bfupsilon,P}_t)f(\Sigma_t,\bfzeta^P_t) \dd t}.
\end{align}
We note that Assumption~\ref{ass:main result}~\ref{ass:main result:strategy set} below guarantees that the negative part of the integrand in \eqref{eqn:objective} is bounded, so that the expectation is well defined.
The \emph{value} of our hedging problem is 
\begin{align}
\label{eqn:value}
v(\psi)
&= v(\psi;\fY,\fP)
:= \sup_{\bfupsilon\in\fY}\inf_{P\in\fP} J^\psi(\bfupsilon,P).
\end{align}
To wit, the agent wants to find a strategy in $\fY$ that maximises the numerical representation of her preferences \eqref{eqn:numerical representation}. The goal of this paper is to find an asymptotic expansion of the value $v(\psi)$ for small levels of uncertainty aversion $\psi$ and to find a trading strategy that achieves the leading-order optimal performance.

\section{Heuristics}
\label{sec:heuristics}

The asymptotic solution of the family of SDGs \eqref{eqn:value} is related to a linearly constrained quadratic programming problem. In this section, we derive this optimisation problem heuristically from the HJBI equation associated to \eqref{eqn:value}. This motivates the definitions of the functions introduced in the subsequent Section~\ref{sec:main results}.

\paragraph{Effective greeks.} 
Let us assume for the moment that the true dynamics of the stock price are given by the Black--Scholes model with some (constant) volatility $\Sigma_0$. Then It\^o's formula and the PDE \eqref{eqn:V:PDE} for $\cV$ show that the replicating strategy (trading only the stock and the bank account, not the call) of the option with payoff $\sfV(S_T,A_T,M_T)$ is given by $\theta_t = (\cV_S + \gamma \cV_A)(t,S_t,A_t,M_t,\Sigma_0)$. In particular, the \emph{delta} $\cV_S$ of the option only gives the replicating strategy if $\gamma \equiv 0$, i.e., if the additional state variable $A$ is of finite variation (e.g., for vanilla options like the liquidly traded call, or exotics like barrier, Asian, or lookback options). In general, however, the replicating strategy also has to take into account the indirect sensitivity of the option value with respect to changes in the stock price arising from the additional state variable $A$ (e.g., for a forward-start option as in Example~\ref{ex:forward start call}). Therefore, we call
\begin{align*}
\Delta
&= \cV_S + \gamma \cV_A
\end{align*}
the \emph{effective delta} of the option $\sfV$. Similarly, we call
\begin{align*}
\Gamma
&= \cV_{SS} + 2\gamma\cV_{SA} + \gamma^2\cV_{AA}
\quad\text{and}\quad
\Vanna
= \cV_{S\Sigma} + \gamma\cV_{A\Sigma}
\end{align*}
the \emph{effective gamma} and \emph{effective vanna} of the option $\sfV$, respectively.

\begin{example}[Forward-start call]
\label{ex:forward start call}
A \emph{forward-start call} with payoff $(S_T - S_{T_\mathrm{reset}})^+$ is a call option whose strike is set at some future \emph{reset date} $T_{\mathrm{reset}} \in (0,T)$ (cf., e.g.,~\cite[Section~6.2]{MusielaRutkowski2005}). This option payoff can be embedded into our framework by choosing $A_0 = S_0$ and $\gamma(t) = \1_{\lbrace t < T_\mathrm{reset} \rbrace}$. Indeed, then $A_t = S_{t \wedge T_\mathrm{reset}}$ and the option payoff can be written as $\sfV(S_T,A_T) = (S_T - A_T)^+$.
\end{example}

\paragraph{Dynamics of the P\&L process.}
In order to write down the HJBI equation associated to the hedging problem, we need the dynamics of the P\&L process $Y^{\bfupsilon,P}$ for generic strategies $\bfupsilon$ and models $P$. Applying It\^o's formula to $Y^{\bfupsilon,P}$ (defined in \eqref{eqn:Y:definition}) under $P$ (with associated control $\bfzeta^P$) yields (cf.~Lemma~\ref{lem:Y})
\begin{align*}
\diff Y^{\bfupsilon,P}_t
&= \big(\theta_t - (\Delta(t,\bfX_t) - \phi_t \cC_S(t,S_t,\Sigma_t)) \big) \dd S_t + \big(\phi_t \cC_\Sigma(t,S_t,\Sigma_t) -\cV_\Sigma(t,\bfX_t) \big) \dd \Sigma^{\contlocmartpart,P}_t\\
&\qquad -b^\cV(t,\bfX_t;\bfzeta^P_t) \dd t,
\end{align*}
where $\Sigma^{\contlocmartpart,P} = \Sigma - \int_0^\cdot \nu^P_u \dd u$ is the (continuous) local martingale part of $\Sigma$ under $P$, and (writing $\bfx = (S,A,M,\Sigma) \in \RR^4$ and $\bfzeta = (\nu,\sigma,\eta,\xi) \in \RR^4$),
\begin{align}
\label{eqn:heuristic:bV}
b^\cV(t,\bfx;\bfzeta)
&= \nu\cV_{\Sigma}
+ \frac{1}{2}(\beta \cV_A + S^2 \Gamma)(\sigma^2-\Sigma^2)
+ \sigma\eta  S \Vanna + \frac{1}{2}(\eta^2 + \xi) \cV_{\Sigma\Sigma}.
\end{align}
For small uncertainty aversion, models far from the reference model are heavily penalised. Whence, the fictitious adversary needs to choose among small perturbations $\bfzeta = \bfzeta^0(\Sigma) + \wt\bfzeta \psi$ of the reference feedback control $\bfzeta^0(\Sigma) = (0,\Sigma,0,0)$. Plugging this perturbation into \eqref{eqn:heuristic:bV}, we find
\begin{align}
\label{eqn:heuristic:bV:perturbation}
b^\cV(\bfzeta)
&= \bfv^\tr \wt\bfzeta \psi + o(\psi),
\end{align}
where $\bfv = (\cV_\Sigma,\Sigma(\beta \cV_A + S^2 \Gamma),\Sigma S\frac{\partial\Delta}{\partial\Sigma},\frac{1}{2}\cV_{\Sigma\Sigma})$. Note that by expanding the function $b^\cV(\bfzeta^0(\Sigma) + \wt\bfzeta \psi)$ around $\psi=0$, the vector-valued function $\bfv$ in \eqref{eqn:heuristic:bV:perturbation} can also be identified as the gradient $\D_\bfzeta b^\cV$ evaluated in $\bfzeta^0(\Sigma)$.

\begin{remark}
\label{rem:leading order controls}
We now explain the use of the uncorrelated \emph{squared} volatility of implied volatility $\xi$ as a control variable. Equation \eqref{eqn:heuristic:bV:perturbation} shows that an $O(\psi)$-perturbation of the squared volatility around zero (i.e., a positive fourth component of $\wt\bfzeta$) affects the drift $b^\cV$ at the order $O(\psi)$ (at least as long as we are in the generic case where $\cV_{\Sigma\Sigma}$ is nonzero). If we used instead the uncorrelated volatility $\xi' := \sqrt{\xi}$ as a basic control variable, then $\xi$ in \eqref{eqn:heuristic:bV} would be replaced by $(\xi')^2$. Following the arguments that lead to \eqref{eqn:heuristic:bV:perturbation}, we would then find for a perturbation of the form $\bfzeta' = \bfzeta^0(\Sigma) + \wt{\bfzeta'} \psi$ that
\begin{align}
\label{eqn:heuristic:bV:perturbation:other power}
b^\cV(\bfzeta')
&= (\bfv')^\tr  \wt{\bfzeta'} \psi + o(\psi),
\end{align}
where $\bfv'$ is given by $\bfv$ \emph{with the fourth component replaced by zero}. Thus, a perturbation of $\xi'$ around zero of order $O(\psi)$ would then have no impact on the $O(\psi)$ term in the expansion \eqref{eqn:heuristic:bV:perturbation:other power} of $b^\cV$. This is an artefact of the Black--Scholes reference model: for any reference model with a \emph{nonzero} uncorrelated volatility of implied volatility, $\xi'^0 \neq 0$, the fourth component of $\bfv'$ would generically \emph{not} vanish, and hence an $O(\psi)$-perturbation of $\xi'$ around $\xi'^0$ \emph{would} affect the drift $b^\cV$ at the order $O(\psi)$.
\end{remark}

\paragraph{HJBI equation.}
The drift condition \eqref{eqn:drift condition} can be rephrased as $b^\cC(t,\bfX_t;\bfzeta^P_t) = 0$ $\diff t\times P$-a.e., where
\begin{align*}
b^\cC(t,\bfx;\bfzeta)
&= \nu \cC_\Sigma
+\frac{1}{2}S^2\cC_{SS}(\sigma^2 - \Sigma^2)
+\sigma\eta S\cC_{S\Sigma}
+\frac{1}{2}(\eta^2 + \xi)\cC_{\Sigma\Sigma}.
\end{align*}
In addition, the uncorrelated \emph{squared} volatility of implied volatility $\xi^P$ (the fourth component of $\bfzeta^P$) must be nonnegative (cf.~Definition~\ref{def:fP00}). Hence, the HJBI equation associated to the hedging problem reads as
\begin{align}
\label{eqn:heuristic:HJBI}
w^\psi_t(t,\bfx,y)
+ \sup_{\bfupsilon \in \RR^2} \inf_{\substack{\bfzeta \in \RR^4:\\b^\cC(t,\bfx;\bfzeta)=0,\ \zeta_4 \geq 0}} H^\psi(t,\bfx,y;\bfupsilon,\bfzeta)
&= 0,\qquad
w^\psi(T,\bfx,y)
= U(y),
\end{align}
where the function $H^\psi(t,\bfx,y;\bfupsilon,\bfzeta)$ (spelled out explicitly in \eqref{eqn:Hamiltonian}) depends on first- and second-order partial derivatives of $w^\psi$ with respect to the space variables $\bfx$ and $y$, and on the drift and diffusion coefficients describing the dynamics of $S,A,M,\Sigma$, and $Y^{\bfupsilon,P}$ under a model $P$ such that $\bfzeta^P_t = \bfzeta$. We refer to \cite[Section~4.1]{HerrmannMuhleKarbeSeifried2017} for a derivation of the HJBI equation from the martingale optimality principle of stochastic optimal control. In essence, the left-hand side of the HJBI equation arises from the drift of the process $w^\psi(t,\bfX_t,Y^{\bfupsilon,P}_t) + \frac{1}{\psi}\int_0^t U'(Y^{\bfupsilon,P}_u) f(\Sigma_u,\bfzeta^P_u) \dd u$ under $P$, which can be computed via It\^o's formula.

\paragraph{Asymptotic ansatz.}
As the Black--Scholes model is complete and the drift of the liquidly traded assets is zero under each model by assumption, we expect that the zeroth-order term in the expansion of $v(\psi)$ is simply the utility $U(Y_0)$ generated by the initial P\&L. Similarly, the optimal control $\bfzeta$ with zero uncertainty aversion should simply be the reference feedback control $\bfzeta^0(\Sigma)$. This motivates the following ansatz for the asymptotic expansion of the value function and the almost optimal feedback control:\footnote{In view of \cite{HerrmannMuhleKarbeSeifried2017}, it is expected that $\psi$ (and not, e.g., $\psi^{1/2}$ or $\psi^2$) is the correct power for the expansion of the value function. Alternatively, one could write $\psi^\alpha$ instead of $\psi$ in \eqref{eqn:heuristic:ansatz:w} and then find $\alpha = 1$ by matching the powers of the penalty term and the drift term of the P\&L process in the expansion of the HJBI equation in such a way that the optimisation over $\wt\bfzeta$ becomes nontrivial.}
\begin{align}
\label{eqn:heuristic:ansatz:w}
w^\psi(t,\bfx,y)
&= U(y) - U'(y) \wt w(t,\bfx) \psi,\\
\label{eqn:heuristic:ansatz:zeta}
\bfzeta^\psi(t,\bfx)
&= \bfzeta^0(\Sigma) + \wt\bfzeta(t,\bfx) \psi,
\end{align}
for functions $\wt w$ and $\wt\bfzeta = (\wt\nu,\wt\sigma,\wt\eta,\wt\xi)$ to be determined. In the reference model, any strategy in the stock and the call that neutralises the net delta qualifies as a replicating strategy. Whence, it is less obvious whether the delta-vega hedge $\bfupsilon^\star$ or any other strategy that neutralises the agent's net delta (e.g., the standard delta hedge without trading in the call) should be the candidate strategy $\bfupsilon = (\theta,\phi)$ for the hedging problem. Thus, we leave the choice of $\phi$ open for the moment and just assume that
\begin{align}
\label{eqn:heuristic:ansatz:theta}
\theta
&=\Delta - \phi \cC_S
\end{align}
neutralises the (effective) net delta. Plugging \eqref{eqn:heuristic:ansatz:w}--\eqref{eqn:heuristic:ansatz:theta} into the HJBI equation \eqref{eqn:heuristic:HJBI} (using the explicit formula \eqref{eqn:Hamiltonian} for $H^\psi$), dropping the $\sup_{\bfupsilon}\inf_\bfzeta$ (we assume that the candidate strategy and control form a saddle point), using the expansion \eqref{eqn:heuristic:bV:perturbation}, and ordering by powers of $\psi$, we obtain
\begin{align}
\label{eqn:heuristic:HJBI:expanded}
\begin{split}
U' \times &\left(-\wt w_t - (\alpha + \frac{1}{2}\beta\Sigma^2) \wt w_A - \frac{1}{2}\Sigma^2 S^2 (\wt w_{SS} + 2 \gamma \wt w_{SA} + \gamma^2 \wt w_{AA}) \right.\\
&\quad \left.+\frac{1}{2} \wt\bfzeta^\tr \Psi^{-1} \wt\bfzeta - \bfv^\tr \wt\bfzeta - \frac{1}{2}\wt\xi\left(\phi\cC_\Sigma - \cV_\Sigma\right)^2 \frac{-U''}{U'}\right) \psi + o(\psi)
= 0.
\end{split}
\end{align}
Moreover, the constraints in the minimisation part of the HJBI equation transform to
\begin{align}
\label{eqn:heuristic:constraints:expanded}
\left(\wt\nu \cC_\Sigma + \wt\sigma \Sigma S^2 \cC_{SS} + \wt\eta \Sigma S \cC_{S\Sigma} + \wt\xi \frac{1}{2} \cC_{\Sigma\Sigma}\right)\psi + o(\psi)
&= 0
\quad\text{and}\quad
\wt\xi \geq 0.
\end{align}

Our candidates for $\wt\bfzeta$ and $\phi$ now arise as the saddle point of the min-max problem (minimising over $\wt\bfzeta$ and maximising over $\phi$) corresponding to the $O(\psi)$ term in \eqref{eqn:heuristic:HJBI:expanded} subject to the constraints \eqref{eqn:heuristic:constraints:expanded}. Clearly, the vega hedge $\phi^\star = \frac{\cV_\Sigma}{\cC_\Sigma}$ maximises the $O(\psi)$ term over $\phi \in \RR$, irrespective of the choice of $\wt\bfzeta$. With this choice, the constrained minimisation over $\wt\bfzeta$ (ignoring the $o(\psi)$ term in the equality constraint  in \eqref{eqn:heuristic:constraints:expanded}) reduces to a linearly constrained quadratic programming problem:
\begin{align}
\label{eqn:heuristic:LCQP problem}
\text{minimise } \frac{1}{2}\wt\bfzeta^\tr \Psi^{-1} \wt\bfzeta - \bfv^\tr \wt\bfzeta
\quad\text{subject to }
\wt\bfzeta \in \RR^4,
\bfc^\tr\wt\bfzeta = 0,
\wt\zeta_4 \geq 0,
\end{align}
where $\bfc = \big(\cC_\Sigma,\Sigma S^2 \cC_{SS},\Sigma S\cC_{S\Sigma},\frac{1}{2}\cC_{\Sigma\Sigma}\big)$.

\paragraph{Solving the linearly constrained quadratic program.}
The minimisation problem \eqref{eqn:heuristic:LCQP problem} is strictly convex and linearly constrained and thus has a unique minimum. The minimiser $\wt\bfzeta^*$ is characterised by the associated \emph{Karush--Kuhn--Tucker conditions}
\begin{align*}
\Psi^{-1}\wt\bfzeta^* - \bfv  + \lambda^* \bfc - \mu^* \vec\bfe_4 = 0,\quad
\bfc^\tr\wt\bfzeta^* = 0,\quad
\wt\zeta_4^* \geq 0,\quad
\mu^* \geq 0,\quad
\mu^*\wt\zeta_4^* = 0,
\end{align*}
for some scalars $\lambda^*$ and $\mu^*$. It turns out that there is an explicit solution $(\wt\bfzeta^*,\lambda^*,\mu^*)$ (cf. Lemma~\ref{lem:LCQP}~(a)), which motivates our definitions in Section~\ref{sec:assumptions}.

\section{Main results}
\label{sec:main results}

This section contains the mathematically precise statement of our main results. In Section~\ref{sec:assumptions}, we first introduce the required notation and technical assumptions; this notationally heavy part can be skipped at first reading.

\subsection{Notation and assumptions}
\label{sec:assumptions}

Our main result, Theorem~\ref{thm:main result}, provides an asymptotic expansion of the value $v(\psi)$ from \eqref{eqn:value} for small levels of uncertainty aversion $\psi$ and an asymptotic saddle point $(\bfupsilon^\star, P^\psi)_\psi$, where $\bfupsilon^\star$ is the delta-vega hedge and $(P^\psi)_\psi$ is a suitable family of models. To define the PDE that describes the first-order term of the expansion and to define the quadruple $\bfzeta^\psi$ that corresponds approximately (see Definition~\ref{def:candidate asymptotic model family}~(b) below) to $P^\psi$, we need to introduce some notation.

Recall that $\bfG = \RR_+\times\RR\times\RR_+$ is the state space of the process $(S,A,M)$ and set $\bfD^0 = (0,T)\times\bfG\times\RR_+$. A generic element of $\bfD^0$ is written as $(t,S,A,M,\Sigma)$ or $(t,\bfx)$ with $\bfx = (S,A,M,\Sigma)$. The functions $\Delta, \Gamma, \Vanna: \bfD^0 \to \RR$ defined by\footnote{Here and in the following, we assume that all relevant partial derivatives of $\cC$ and $\cV$ exist; precise conditions are given in Assumption~\ref{ass:main result} below.}
\begin{align*}
\Delta(t,\bfx)
&= \cV_{S} + \gamma \cV_{A},\quad
\Gamma(t,\bfx)
= \cV_{SS} + 2\gamma\cV_{SA} + \gamma^2\cV_{AA},\quad
\Vanna(t,\bfx)
= \cV_{S\Sigma} + \gamma\cV_{A\Sigma}
\end{align*}
are called the \emph{effective delta}, \emph{effective gamma}, and \emph{effective vanna} of the option $\sfV$, respectively; we note that these quantities correspond to the standard greeks if $\gamma \equiv 0$ like for vanilla, barrier, or lookback options, for example, and refer to Section~\ref{sec:heuristics} for a motivation of this terminology in the case $\gamma \not\equiv 0$. The functions $\bfc,\bfv:\bfD^0 \to \RR^4$ given by
\begin{align}
\label{eqn:VGVV:c}
\bfc(t,\bfx)
&= \Big(\cC_\Sigma,\Sigma S^2 \cC_{SS},\Sigma S\cC_{S\Sigma},\frac{1}{2}\cC_{\Sigma\Sigma}\Big)^\tr,\\
\label{eqn:VGVV:v}
\bfv(t,\bfx)
&= \Big(\cV_\Sigma,\Sigma(\beta \cV_A + S^2 \Gamma),\Sigma S\frac{\partial\Delta}{\partial\Sigma},\frac{1}{2}\cV_{\Sigma\Sigma}\Big)^\tr,
\end{align}
are called the \emph{vega-gamma-vanna-volga vector} of the call and the option $\sfV$, respectively. With this notation, define the functions $\lambda,\mu:\bfD^0 \to \RR$ and $\wt\bfzeta:\bfD^0 \to \RR^4$ as follows:
\begin{align}
\label{eqn:lagrange multiplier:lambda}
\lambda(t,\bfx)
&=
\begin{cases}
\frac{\bfc^\tr\Psi \bfv}{\bfc^\tr\Psi \bfc}
& \text{if } \cV_{\Sigma\Sigma} - \frac{\bfc^\tr\Psi\bfv}{\bfc^\tr\Psi\bfc} \cC_{\Sigma\Sigma} \geq 0,\\
\frac{\bfc^\tr \Psi \bfv - \frac{1}{4}\cC_{\Sigma\Sigma}\cV_{\Sigma\Sigma}\psi_\xi}{\bfc^\tr \Psi \bfc -\frac{1}{4}\cC_{\Sigma\Sigma}^2 \psi_\xi}
& \text{otherwise,}
\end{cases}\\
\label{eqn:lagrange multiplier:mu}
\mu(t,\bfx)
&= \frac{1}{2}(\cV_{\Sigma\Sigma} - \lambda \cC_{\Sigma\Sigma})^-,\\
\label{eqn:candidate control:first order}
\wt\bfzeta(t,\bfx)
&= \Psi (\bfv - \lambda \bfc + \mu \vec\bfe_4).
\end{align}
Note that the term $\mu \vec\bfe_4$ in \eqref{eqn:candidate control:first order} ensures that the fourth component of $\wt\bfzeta$ is nonnegative. Now, fix constants $0 < \ul\Sigma < \Sigma_0 < \ol\Sigma$, and define for each $\psi > 0$, the \emph{candidate feedback control} $\bfzeta^\psi =(\nu^\psi, \sigma^\psi, \eta^\psi, \xi^\psi):\bfD^0 \to \RR^4$ by
\begin{align}
\label{eqn:candidate control:short}
\bfzeta^\psi(t,\bfx)
&= \bfzeta^0(\Sigma) + \wt\bfzeta\1_{\lbrace\ul\Sigma<\Sigma<\ol\Sigma\rbrace} \psi.
\end{align}
The indicator $\1_{\lbrace\ul\Sigma<\Sigma<\ol\Sigma\rbrace}$ is a technical modification that ensures that the implied volatility stays within the interval $[\ul\Sigma,\ol\Sigma]$ by falling back to the reference feedback control $\bfzeta^0(\Sigma)$ (which corresponds to constant implied volatility) as soon as the implied volatility hits the boundary of $[\ul\Sigma,\ol\Sigma]$.
More explicitly, the candidate feedback control can be expressed as
\begin{align*}
\nu^\psi(t,\bfx)
&= (\cV_\Sigma - \lambda \cC_\Sigma) \1_{\lbrace \ul\Sigma < \Sigma < \ol\Sigma \rbrace} \psi_\nu \psi,\\
\sigma^\psi(t,\bfx)
&= \Sigma + \Sigma\left(\beta\cV_A  + S^2\Gamma- \lambda S^2\cC_{SS}\right) \1_{\lbrace \ul\Sigma < \Sigma < \ol\Sigma \rbrace} \psi_\sigma \psi,\\
\eta^\psi(t,\bfx)
&= \Sigma \left({\textstyle S\frac{\partial \Delta}{\partial \Sigma}} - \lambda S\cC_{S\Sigma}\right) \1_{\lbrace \ul\Sigma < \Sigma < \ol\Sigma \rbrace} \psi_\eta \psi,\\
\xi^\psi(t,\bfx)
&= \frac{1}{2}(\cV_{\Sigma\Sigma} - \lambda \cC_{\Sigma\Sigma})^+ \1_{\lbrace \ul\Sigma < \Sigma < \ol\Sigma \rbrace} \psi_\xi \psi.
\end{align*}

In general, there is no $P^\psi \in \fP^0$ such that $\bfzeta^\psi$ coincides with the control $\bfzeta^{P^\psi}$ corresponding to $P^\psi$ as the process $\bfzeta^\psi(t,\bfX_t)$ fulfils the drift condition \eqref{eqn:drift condition} only at the order $O(\psi)$. However, to match the drift condition exactly, one can perturb $\bfzeta^\psi$ by a suitable, asymptotically small term. This motivates part~(b) of the following definition.

\begin{definition} 
\label{def:candidate asymptotic model family}
Let $\fP \subset \fP^0$.
\begin{enumerate}
\item For each $p\geq1$, we denote by $L^p_\fP$ the vector space of Borel functions $K:\bfD^0\to\RR$ satisfying
\begin{align*}
\norm{K}_{L^p_\fP} := \sup_{P \in \fP} \EX[P]{\int_0^T \left\vert K(t,\bfX_t) \right\vert^p \dd t}^{1/p} < \infty.
\end{align*}

\item A family $(P^\psi)_{\psi\in(0,\psi_0)}\subset\fP$ for some $\psi_0 \in(0,1)$ is called a \emph{candidate asymptotic model family (in $\fP$)} if there is $K_0\in L^4_\fP$ such that for all $\psi \in (0,\psi_0)$,
\begin{align*}
\enorm{\bfzeta^{P^\psi}_t - \bfzeta^\psi(t,\bfX_t)}
&\leq K_0(t,\bfX_t) \psi^2 \quad \diff t \times P^\psi\text{-a.e.}
\end{align*}
\end{enumerate}
\end{definition}
The crucial property of a candidate asymptotic model family formalised in Definition~\ref{def:candidate asymptotic model family}~(b) is that the control $\bfzeta^{P^\psi}_t$ corresponding to $P^\psi$ is $O(\psi^2)$-close to the candidate control $\bfzeta^\psi(t,\bfX_t)$.

The leading-order coefficient of the asymptotic expansion of $v(\psi)$ is given in terms of the solution to a linear second-order parabolic PDE with a source term. Specifically, for each ${\Sigma \in [\ul\Sigma,\ol\Sigma]}$, we consider the PDE
\begin{align}
\label{eqn:cash equivalent:PDE}
\begin{split}
\wt w_t + \big(\alpha + \frac{1}{2}\beta \Sigma^2\big) \wt w_A\qquad\qquad\qquad\qquad\qquad\qquad\,&\\
+ \frac{1}{2}\Sigma^2 S^2 \left(\wt w_{SS} + 2\gamma \wt w_{SA} + \gamma^2 \wt w_{AA} \right) + \frac{1}{2}\wt g(\cdot,\Sigma)
&= 0\quad \text{on } (0,T)\times\bfG,\\
\delta \wt w_A + \wt w_M
&= 0\quad \text{on } \lbrace (t,S,A,M) : S \geq M \rbrace,\\
\wt w(T,\cdot,\Sigma)
&= 0\quad \text{on } \bfG,
\end{split}
\end{align}
where the source term $\wt g:\bfD^0 \to \RR$ is given by
\begin{align}
\label{eqn:cash equivalent:source term}
\wt g(t,\bfx)
&= \bfv(t,\bfx)^\tr\wt\bfzeta(t,\bfx).
\end{align}

We prove our main result under the following assumptions.

\begin{assumption}
\label{ass:main result}
Set $\bfD = (0,T)\times\bfG\times[\ul\Sigma,\ol\Sigma] \subset \bfD^0$.

\begin{enumerate}[label=(\alph*)]
\item
\label{ass:main result:strategy set}
\emph{Trading strategy set:} There is a constant $K_\fY > 0$ such that for each trading strategy $\bfupsilon \in \fY$ and each $P\in\fP$, $Y^{\bfupsilon,P} > -K_\fY$ $\diff t \times P$-a.e.

\item
\label{ass:main result:model set}
\emph{Model set:} $\fP \subset \fP^0$ contains a candidate asymptotic model family, a reference model, and there are constants $\ul\nu<0<\ol\nu$, $0 < \ul\sigma < \ul\Sigma$, $\ol\Sigma < \ol\sigma$, $\ul\eta<0<\ol\eta$, and $\ol\xi > 0$ such that for each $P \in \fP$,
\begin{align}
\label{eqn:ass:model set}
\nu^P \in [\ul\nu,\ol\nu],\;
\sigma^P \in [\ul\sigma,\ol\sigma],\;
\eta^P \in [\ul\eta,\ol\eta],\;
\xi^P \in [0,\ol\xi],\;
\Sigma \in [\ul\Sigma,\ol\Sigma]\quad
\diff t\times P\text{-a.e.}
\end{align}

\item
\label{ass:main result:call}
\emph{Call PDE:} $T_\sfC \geq T$ and there is $\cC \in C^{1,2,2}((0,T_\sfC)\times\RR_+\times\RR_+) \cap C([0,T_\sfC]\times\ol\RR_+\times\ol\RR_+)$ such that for each $\Sigma\in[\ul\Sigma,\ol\Sigma]$, $\cC(\cdot,\Sigma)$ is a classical solution to the PDE \eqref{eqn:C:PDE} and
\begin{align}
\label{eqn:ass:C}
\cC_\Sigma \neq 0
\quad\text{and}\quad
\left\vert\cC_{\Sigma\Sigma}\right\vert
&\leq K_\cC \left( \vert \cC_\Sigma \vert + \vert S^2 \cC_{SS} \vert + \vert S \cC_{S\Sigma} \vert \right)
\text{ on } (0,T)\times\RR_+\times[\ul\Sigma,\ol\Sigma]
\end{align}
for some $K_\cC \in L^2_\fP$.

\item
\label{ass:main result:option}
\emph{Non-traded option PDE:} There is $\cV \in C^{1,2,2,1,2}(\bfD^0) \cap C(\ol{\bfD^0})$ such that for each $\Sigma \in [\ul\Sigma,\ol\Sigma]$, $\cV(\cdot,\Sigma)$ is a classical solution to the PDE \eqref{eqn:V:PDE} with
\begin{align}
\label{eqn:ass:V}
\left\vert \cV_\Sigma \right\vert,
\left\vert \beta \cV_A + S^2 (\cV_{SS} + 2 \gamma \cV_{SA} + \gamma^2\cV_{AA})\right\vert,
\left\vert S(\cV_{S\Sigma} + \gamma \cV_{A\Sigma}) \right\vert,
\left\vert \cV_{\Sigma\Sigma} \right\vert
&\leq K_\cV
\text{ on } \bfD
\end{align}
for some constant $K_\cV > 0$.

\item
\label{ass:main result:cash equivalent}
\emph{Cash equivalent PDE:} There is $\wt w \in C^{1,2,2,1,2}(\bfD^0)\cap C(\ol{\bfD^0})$ such that for each $\Sigma\in[\ul\Sigma,\ol\Sigma]$, $\wt w(\cdot, \Sigma)$ is a classical solution to the PDE \eqref{eqn:cash equivalent:PDE},
\begin{align*}
0
&\leq \wt w
\leq K_{\wt w}
\text{ on } \bfD
\end{align*}
for some constant $K_{\wt w} > 0$, and
\begin{align*}
\wt w_\Sigma, S (\wt w_S + \gamma \wt w_A), \beta \wt w_A + S^2 (\wt w_{SS} + 2 \gamma \wt w_{SA} + \gamma^2 \wt w_{AA}), S(\wt w_{S\Sigma} + \gamma \wt w_{A\Sigma}), \wt w_{\Sigma\Sigma}
&\in L^4_\fP.
\end{align*}

\item
\label{ass:main result:utility}
\emph{Utility function:} $U:\RR \to \RR$ is $C^3$ with $U' > 0$, $U'' < 0$ everywhere and has \emph{decreasing absolute risk aversion}, i.e., $y \mapsto -\frac{U''(y)}{U'(y)}$ is nonincreasing on $\RR$.
\end{enumerate}
\end{assumption}

\begin{remark} Let us discuss the various requirements in Assumption~\ref{ass:main result}:
\label{rem:ass:main result}
\begin{enumerate}
\item This constraint on the agent's credit line is an admissibility condition for the set of trading strategies. The P\&L process $Y^{\bfupsilon,P}$ is required to be bounded from below, uniformly over all strategies in $\fY$ and all models in $\fP$. We show in Corollary~\ref{cor:Y:delta-vega} that this is satisfied for the P\&L process associated to the delta-vega hedge $\bfupsilon^\star$ (cf.~\eqref{eqn:thm:main result:hedge}). Hence, making the constant $K_\fY$ larger if necessary, the delta-vega hedge can always be added to the set of strategies $\fY$.

\item A construction of a candidate asymptotic model family compatible with \eqref{eqn:ass:model set} is outlined in Section~\ref{sec:existence}. The existence of uniform bounds on the controls as well as the implied volatility are essential for various steps in the proof of the main result. This is not as big an assumption is it might appear at first glance. Indeed, as the conclusions of our main result do not depend on the choice of these bounds,  they can be chosen arbitrarily large.

\item These regularity assumptions ensure that $\cC$ corresponds to the Black--Scholes value of the liquidly traded call. The condition $\cC_\Sigma \neq 0$ guarantees that the delta-vega hedge (cf.~\eqref{eqn:thm:main result:hedge}) is well defined. The second condition in \eqref{eqn:ass:C} ensures that the volga of the call is dominated by the sum of its vega, cash gamma, and cash vanna.

For a plain-vanilla call option with payoff $\sfC(S) = (S - K)^+$, explicit formulas for these greeks show that this requirement is met if $\log S \in L^2_\fP$. This in turn follows easily from the explicit representation of $S$ as a stochastic exponential together with the boundedness of the spot volatility from Assumption~\ref{ass:main result}~\ref{ass:main result:model set}.

Another example is the log-contract with payoff $\sfC(S) = \log(S)$, for which $\cC(t,S,\Sigma) = \log(S) - \frac{1}{2}\Sigma^2(T_{\sfC}-t)$. Computing the relevant greeks shows that \eqref{eqn:ass:C} holds in this case, too. Moreover, if $T_\sfC > T$, then even the stronger condition \eqref{eqn:lem:existence:C} of Lemma~\ref{lem:existence} is satisfied.

\item This is a regularity assumption on the option $\sfV$ similar to \ref{ass:main result:call}. However, we additionally enforce the bounds \eqref{eqn:ass:V} to ensure that the vega-gamma-vanna-volga vector $\bfv$ is bounded. This assumption is satisfied if the option payoff $\sfV$ is sufficiently regular.

For example, consider the case where the payoff function $\sfV(S,A,M)=H(S)$ only depends on the stock price $S$. The corresponding Black--Scholes value can be written as
\begin{align*}
\cV(t,S,A,M,\Sigma)
&= \int_{-\infty}^\infty H\left(S \exp\Big( \Sigma \sqrt{T-t} x -\frac{1}{2}\Sigma^2(T-t) \Big) \right) \phi(x) \dd x,
\end{align*}
where $\phi$ is the density function of the standard normal distribution. If the ``terminal cash delta'' $y H'(y)$ and the ``terminal cash gamma'' $y^2 H''(y)$ are bounded in $y\in\RR_+$,\footnote{This holds, e.g., for a ``smooth put'', whose payoff is the Black--Scholes put value with some arbitrarily short maturity.} then using dominated convergence to differentiate under the integral sign shows that $\cV$ indeed satisfies Assumption~\ref{ass:main result}~\ref{ass:main result:option}. 

For exotic options, one can argue along the same lines. For example, for a lookback option with sufficiently regular payoff $\sfV(S,A,M) = H(S,M)$ (recall that $M$ is the variable for the running maximum of the stock), one can again verify that the probabilistic representation of its Black--Scholes value solves the the PDE \eqref{eqn:V:PDE} and inherits the required regularity of Assumption~\ref{ass:main result}~\ref{ass:main result:option} from the regularity of the payoff function $H$.

\item This assumption posits that a (classical) solution $\wt w$ to the PDE \eqref{eqn:cash equivalent:PDE} exists and satisfies certain bounds. The validity of this assumption depends on the regularity of the input quantities $\cC$, $\cV$, $\alpha$, $\beta$, $\gamma$, and $\delta$, and can be checked along the lines of (d) above.\footnote{See also \cite[Remark 3.2]{HerrmannMuhleKarbeSeifried2017} for a discussion of such regularity assumptions in a similar setting.}

\item It is not essential that the utility function is defined on the whole real line. In fact, as we only consider strategies such that the P\&L process is bounded from below by $-K_\fY$ uniformly over trading strategies and models, we could also work with a (suitably displaced) utility function on $\RR_+$. Also note that power and exponential utilities both have decreasing absolute risk aversion.
\end{enumerate}
\end{remark}

\begin{remark}
\label{rem:stochastic volatility}
As long as the traded option is regular enough, there are many models which fulfil assumption \eqref{eqn:ass:model set} for the coefficients of the implied volatility dynamics. For instance, consider a stochastic volatility model of the form
\begin{align}
\label{eqn:rem:stochastic volatility}
\begin{split}
\dd S_t
&= S_t a(Y_t) \dd W^0_t,\\
\dd Y_t
&= b(Y_t) \dd t + c_0(Y_t) \dd W^0_t + c_1(Y_t) \dd W^1_t,
\end{split}
\end{align}
where the functions $a, b, c_0, c_1$ as well as their derivatives are all Lipschitz and bounded, and $a, c_0, c_1$ are in addition positive and bounded away from zero. Then the spot volatility $\sigma_t = a(Y_t)$ evolves in some bounded interval $[\ul\sigma,\ol\sigma]$. Now, let $\cC^\mathrm{sv}(t,S_t,Y_t)$ be the value of a log-contract with payoff $\log(S_{T_\sfC})$ for some $T_\sfC > T$ computed in this stochastic volatility model (under some pricing measure). As the spot volatility $\sigma_t$ is bounded from above and from below, the value $\cC^\mathrm{sv}$ of the log-contract can be bounded from above and below by its Black--Scholes values for volatility $\ul\sigma$ and $\ol\sigma$, respectively. Whence, the implied volatility $\Sigma_t$ is uniformly bounded and bounded away from zero, too. To determine its drift and diffusion coefficients $\nu_t$, $\eta_t$, and $\xi_t$, apply It\^o's formula on both sides of the equation $\cC(t,S_t,\Sigma_t) = \cC^\mathrm{sv}(t,S_t,Y_t)$ that defines $\Sigma_t$ and compare the coefficients of the $\diff W^0$- and $\diff W^1$-terms. Using also that the cash delta for the log-contract is $S\cC_S = S\cC^\mathrm{sv}_S = 1$, this leads to
\begin{align*}
\eta_t
&= c_0(Y_t)\frac{\cC^\mathrm{sv}_Y(t,S_t,Y_t)}{\cC_\Sigma(t,S_t,\Sigma_t)},
\qquad
\xi_t
= \left(c_1(Y_t)\frac{\cC^\mathrm{sv}_Y(t,S_t,Y_t)}{\cC_\Sigma(t,S_t,\Sigma_t)}\right)^2.
\end{align*}
Now, differentiating the PDE for $\cC^\mathrm{sv}$ yields a PDE for its partial derivative $\cC^\mathrm{sv}_Y$ whose probabilistic representation shows that $\cC^\mathrm{sv}_Y$ is bounded. As $\cC_\Sigma$ is uniformly bounded away from zero for the log-contract with $T_\sfC > T$, $\eta$ and $\xi$ are uniformly bounded as well. Finally, by the drift condition \eqref{eqn:drift condition} (which holds automatically here because $\cC^\mathrm{sv}(t,S_t,Y_t)$ is a local martingale by construction), it follows that also the drift coefficient $\nu_t$ of the implied volatility is uniformly bounded. In summary, the market model derived from the stochastic volatility model \eqref{eqn:rem:stochastic volatility} fulfils \eqref{eqn:ass:model set}.
\end{remark}

\subsection{Main result}
\label{sec:main result}

We are now in a position to state our main result, which provides an asymptotic expansion of the value in \eqref{eqn:value} and a corresponding asymptotically optimal policy. The existence of a suitable corresponding model set $\fP$ and a candidate asymptotic model family is considered in Section~\ref{sec:existence} below. Recall from Remark~\ref{rem:ass:main result}~(a) that the delta-vega hedge $\bfupsilon^\star$ can always be included into the set of trading strategies $\fY$ by making the constant $K_\fY$ from Assumption~\ref{ass:main result}~\ref{ass:main result:strategy set} larger if necessary.

The number $\wt w_0 := \wt w(0,\bfX_0)$ defined through the solution $\wt w$ to the PDE \eqref{eqn:cash equivalent:PDE} determines the leading-order coefficient in the expansion of the value $v(\psi)$. As it also describes the (normalised) premium that the agent demands as a compensation for exposing herself to model misspecification (cf.~the expansion \eqref{eqn:price:ask} of the indifference ask price below), we call it the \emph{cash equivalent (of small uncertainty aversion)}.

\begin{theorem}
\label{thm:main result}
Let $\fY$ be a set of trading strategies, $\fP \subset \fP^0$ a model set, and suppose that Assumption~\ref{ass:main result} is satisfied. Define the \emph{delta-vega hedging strategy} $\bfupsilon^\star = (\theta^\star_t,\phi^\star_t)_{t\in[0,T]}$ by
\begin{align}
\label{eqn:thm:main result:hedge}
\begin{split}
\theta^\star_t
&= \left(\Delta - \frac{\cV_\Sigma}{\cC_\Sigma} \cC_S \right)(t,S_t,A_t,M_t,\Sigma_t),\\
\phi^\star_t
&= \frac{\cV_\Sigma}{\cC_\Sigma}(t,S_t,A_t,M_t,\Sigma_t).
\end{split}
\end{align}
If $\bfupsilon^\star \in \fY$ and $(P^\psi)_{\psi\in(0,\psi_0)}\subset\fP$ is a candidate asymptotic model family, then as $\psi \downarrow 0$:
\begin{align}
\label{eqn:thm:main result:expansion}
\begin{split}
v(\psi)
&= \sup_{\bfupsilon\in\fY} \inf_{P \in \fP} J^\psi(\bfupsilon, P)
= \inf_{P \in \fP} \sup_{\bfupsilon\in\fY} J^\psi(\bfupsilon, P) + o(\psi)\\
&= J^\psi(\bfupsilon^\star,P^\psi) + o(\psi)
= \sup_{\bfupsilon\in\fY} J^\psi(\bfupsilon, P^\psi) + o(\psi)
= \inf_{P \in \fP} J^\psi(\bfupsilon^\star, P) + o(\psi)\\
&= U(Y_0) - U'(Y_0) \wt w_0\psi + o(\psi).
\end{split}
\end{align}
In particular, the delta-vega hedge $\bfupsilon^\star$ is an optimal strategy at the leading order $O(\psi)$ among all strategies in $\fY$, and $P^\psi$ is a leading-order optimal choice of model for the fictitious adversary among all models in $\fP$.
\end{theorem}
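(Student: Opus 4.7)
Following the \textbf{Techniques} paragraph of the introduction, I take the ansatz $w^\psi(t,\bfx,y) := U(y) - U'(y)\wt w(t,\bfx)\psi$ from \eqref{eqn:heuristic:ansatz:w}, with $\wt w$ the classical PDE solution from Assumption~\ref{ass:main result}\ref{ass:main result:cash equivalent}, as a candidate value function and aim to prove the two one-sided asymptotic inequalities
\[
\sup_{\bfupsilon\in\fY} J^\psi(\bfupsilon,P^\psi) \leq U(Y_0) - U'(Y_0)\wt w_0\psi + o(\psi), \qquad
\inf_{P\in\fP} J^\psi(\bfupsilon^\star,P) \geq U(Y_0) - U'(Y_0)\wt w_0\psi + o(\psi).
\]
Together with the trivial ordering $\sup_\bfupsilon\inf_P J^\psi \leq \inf_P\sup_\bfupsilon J^\psi$ and the a fortiori bounds $\inf_P\sup_\bfupsilon J^\psi \leq \sup_\bfupsilon J^\psi(\cdot,P^\psi)$ and $\sup_\bfupsilon\inf_P J^\psi \geq \inf_P J^\psi(\bfupsilon^\star,\cdot)$, these collapse the full chain \eqref{eqn:thm:main result:expansion} at leading order in $\psi$.

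\textbf{Analytic step.} Substituting $w^\psi$ into the Hamiltonian $H^\psi$ of the HJBI equation \eqref{eqn:heuristic:HJBI} and using the PDE \eqref{eqn:cash equivalent:PDE} to cancel the $O(1)$-terms, the pointwise saddle reduces at order $O(\psi)$ to an unconstrained quadratic maximisation over $\phi$ (solved by $\phi^\star = \cV_\Sigma/\cC_\Sigma$) and the linearly constrained quadratic programme \eqref{eqn:heuristic:LCQP problem}, whose KKT solution is the explicit triple $(\wt\bfzeta,\lambda,\mu)$ of \eqref{eqn:lagrange multiplier:lambda}--\eqref{eqn:candidate control:first order}. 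Invoking weak Lagrangian duality for the LCQP, I then derive uniform pointwise Hamiltonian inequalities: for every trading strategy $\bfupsilon$ and every control $\bfzeta$ with $b^\cC(\cdot;\bfzeta)=0$ and $\zeta_4 \geq 0$,
\[
H^\psi(\cdot;\bfupsilon^\star,\bfzeta) \leq -\tfrac{1}{2}U'(y)\wt g(t,\bfx)\psi + R^\psi(t,\bfx,y) \leq H^\psi(\cdot;\bfupsilon,\bfzeta^\psi),
\]
with a remainder $R^\psi$ controlled (uniformly in $\bfupsilon$ and $\bfzeta$) by $\psi^2$ times an $L^1_\fP$-function, using the boundedness \eqref{eqn:ass:V}--\eqref{eqn:ass:C} of the greeks and the $L^4_\fP$ integrability of the derivatives of $\wt w$ from Assumption~\ref{ass:main result}\ref{ass:main result:cash equivalent}.

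\textbf{Probabilistic step.} For each admissible $\bfupsilon$ and $P$, I apply It\^o's formula to
\[
w^\psi(t,\bfX_t,Y^{\bfupsilon,P}_t) + \tfrac{1}{\psi}\int_0^t U'(Y^{\bfupsilon,P}_u) f(\Sigma_u,\bfzeta^P_u)\dd u
\]
under $P$; the Neumann-type boundary condition in \eqref{eqn:cash equivalent:PDE} on $\{S=M\}$ handles the running maximum, while Assumption~\ref{ass:main result}\ref{ass:main result:strategy set} together with the decreasing absolute risk aversion of Assumption~\ref{ass:main result}\ref{ass:main result:utility} bound $U'(Y^{\bfupsilon,P})$ and $-U''(Y^{\bfupsilon,P})/U'(Y^{\bfupsilon,P})$ uniformly. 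Taking $P$-expectations, the local-martingale part vanishes thanks to the $L^4_\fP$ bounds from Assumption~\ref{ass:main result}\ref{ass:main result:cash equivalent}, and combining with the Hamiltonian inequalities above and the terminal condition $w^\psi(T,\cdot)=U$ yields the two one-sided bounds of the plan, instantiated with $P=P^\psi$ for the upper estimate and with $\bfupsilon=\bfupsilon^\star$ for the lower.

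\textbf{Main obstacle.} The central difficulty is that the candidate feedback control $\bfzeta^\psi$ satisfies the drift condition $b^\cC=0$ only to order $O(\psi)$, whereas every admissible $P\in\fP$ must satisfy \eqref{eqn:drift condition} exactly; this is precisely why the proof must run over the $O(\psi^2)$-close true control $\bfzeta^{P^\psi}$ provided by Definition~\ref{def:candidate asymptotic model family}. Propagating the mismatch through $H^\psi$ produces remainders involving $(\bfzeta^{P^\psi}-\bfzeta^\psi)$ multiplied by the first and second derivatives of $\wt w$; showing that they are genuinely $o(\psi)$ (and not just $O(\psi)$) will rely on the quartic integrability $K_0\in L^4_\fP$ together with Cauchy--Schwarz against the $L^4_\fP$ bounds of Assumption~\ref{ass:main result}\ref{ass:main result:cash equivalent}. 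A further subtlety is the nonsmooth case split in \eqref{eqn:lagrange multiplier:lambda} at the level where the inequality constraint $\xi\geq 0$ becomes active: the two regimes must be estimated separately and reconciled along $\{\mu=0\}$, and the truncation indicator $\1_{\{\ul\Sigma<\Sigma<\ol\Sigma\}}$ in \eqref{eqn:candidate control:short} must be controlled via exit-time localisation of $\Sigma$ from $[\ul\Sigma,\ol\Sigma]$.
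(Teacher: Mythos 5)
Your plan follows the paper's architecture almost verbatim: the candidate value function $w^\psi(t,\bfx,y)=U(y)-U'(y)\wt w(t,\bfx)\psi$, an analytic step that reduces the HJBI saddle to the vega hedge plus the linearly constrained quadratic programme \eqref{eqn:heuristic:LCQP problem} solved via KKT/Lagrange duality, a probabilistic verification by It\^o's formula combined with the elementary min-max ordering, and exit-time control of the truncation indicator $\1_{\lbrace\ul\Sigma<\Sigma<\ol\Sigma\rbrace}$ — this is precisely the structure of Section~\ref{sec:proofs:main result} (Lemmas~\ref{lem:HJBI:lower bound}, \ref{lem:HJBI:upper bound}, \ref{lem:lower bound}, \ref{lem:upper bound} and Proposition~\ref{prop:exit time}).

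There is, however, a genuine problem with your central display: the ``Hamiltonian inequalities'' are stated with the inequality signs reversed, and as written they are false. The verification needs the saddle-type estimates in the opposite orientation: a \emph{lower} bound on $H^\psi(t,\bfx,y;\bfupsilon^\star(t,\bfx),\bfzeta)$, uniform over all admissible $\bfzeta$, feeding a submartingale argument that yields $\inf_{P}J^\psi(\bfupsilon^\star,P)\geq w^\psi_0+o(\psi)$; and an \emph{upper} bound on $H^\psi(t,\bfx,y;\bfupsilon,\bfzeta)$, uniform over $\bfupsilon\in\RR^2$ but only for $\bfzeta$ that are $O(\psi^2)$-close to $\bfzeta^\psi(t,\bfx)$, feeding a supermartingale argument that yields $\sup_{\bfupsilon}J^\psi(\bfupsilon,P^\psi)\leq w^\psi_0+o(\psi)$. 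Your sandwich claims instead that $H^\psi(\cdot;\bfupsilon^\star,\bfzeta)$ is bounded \emph{above} uniformly in admissible $\bfzeta$ and that $H^\psi(\cdot;\bfupsilon,\bfzeta^\psi)$ is bounded \emph{below} uniformly in $\bfupsilon$; neither can hold: for $\bfzeta$ far from $\bfzeta^0(\Sigma)$ the penalty $\frac{1}{2\psi}\enorm{\bfzeta-\bfzeta^0(\Sigma)}^2$ makes $H^\psi(\cdot;\bfupsilon^\star,\bfzeta)$ large and positive, and $H^\psi(\cdot;\bfupsilon,\bfzeta^\psi)$ is concave in $\bfupsilon$ (as $w^\psi_{YY}<0$), hence not bounded below by the value uniformly in $\bfupsilon$. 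Since your probabilistic step then uses the bounds in the correct way, this reads as a slip, but it is the key analytic claim and must be reoriented. Two further points your sketch glosses over: (i) for the lower bound, the uniform-in-$\bfzeta$ $O(\psi^2)$ remainder only comes after first localising the adversary, i.e.\ observing that one may restrict to $\enorm{\bfzeta-\bfzeta^0(\Sigma)}\leq K(t,\bfx)\psi$ because otherwise the penalty dominates and the inequality is trivial (this is the auxiliary estimate in Lemma~\ref{lem:HJBI:lower bound}, needed before the duality estimate of Proposition~\ref{prop:HJBI:H1} can be applied); (ii) after It\^o, the stochastic-integral term does not simply ``vanish in expectation'': for a general $\bfupsilon\in\fY$ one only has the uniform lower bound on the P\&L, so the paper argues that the local martingale is a sub- resp.\ supermartingale (bounded above resp.\ below by an integrable random variable), which still produces the inequality in the required direction.
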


The lengthy proof of Theorem~\ref{thm:main result} is postponed to Section~\ref{sec:proofs:main result}. The first-order term in the expansion of $v(\psi)$ in \eqref{eqn:thm:main result:expansion} is determined by the cash equivalent $\wt w_0$. Its probabilistic representation allows to identify the main factors that determine an option's susceptibility to model misspecification:

\begin{proposition}[Feynman--Kac representation]
\label{prop:Feynman-Kac}
Suppose that Assumption~\ref{ass:main result} holds and let $P^0 \in \fP$ be a reference model. Then
\begin{align*}
\wt w_0
&= \frac{1}{2}\EX[P^0]{\int_0^T \wt g(t,S_t,A_t,M_t,\Sigma_0) \dd t}.
\end{align*}
Here, the function $\wt g$ (defined in \eqref{eqn:cash equivalent:source term}) can be written as
\begin{align}
\label{eqn:prop:Feynman-Kac:g}
\begin{split}
&\wt g(t,S,A,M,\Sigma)\\
&\quad= -\Sigma \left(\phi^\star S^2 \cC_{SS} - (\beta \cV_A + S^2 \Gamma)\right)\wt\sigma
-\Sigma \left(\phi^\star S\cC_{S\Sigma} - S\Vanna \right) \wt\eta
-\frac{1}{2}\left(\phi^\star \cC_{\Sigma\Sigma} - \cV_{\Sigma\Sigma}\right)\wt\xi,
\end{split}
\end{align}
where the functions $(\wt\nu,\wt\sigma,\wt\eta,\wt\xi) = \wt\bfzeta$ are defined in \eqref{eqn:candidate control:first order} and $\phi^\star = \frac{\cV_\Sigma}{\cC_\Sigma}$ is the vega hedge from Theorem~\ref{thm:main result}.
\end{proposition}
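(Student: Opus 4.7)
The proof naturally splits into two parts: a Feynman--Kac representation of $\wt w_0$ through the source term $\wt g$, and an algebraic simplification of $\wt g = \bfv^\tr \wt\bfzeta$ into the form advertised in \eqref{eqn:prop:Feynman-Kac:g}.

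For the Feynman--Kac step, the plan is to apply It\^o's formula to the process $\wt w(t, S_t, A_t, M_t, \Sigma_0)$ under $P^0$. By Definition~\ref{def:reference model}, $\bfzeta^{P^0}_t = (0,\Sigma_t,0,0)$, so $\Sigma_t\equiv\Sigma_0$ along $P^0$-paths and $\diff S_t = S_t\Sigma_0\,\diff W^0_t$, while $\diff A_t = (\alpha + \tfrac{1}{2}\beta\Sigma_0^2)\,\diff t + \gamma\,\diff S_t + \delta\,\diff M_t$. Grouping the $\diff t$-terms in the resulting expansion reproduces exactly the spatial operator in the PDE \eqref{eqn:cash equivalent:PDE} at $\Sigma=\Sigma_0$, and the PDE identifies that drift as $-\tfrac{1}{2}\wt g(t,S_t,A_t,M_t,\Sigma_0)$. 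The coefficient of $\diff M_t$ is $\delta\wt w_A + \wt w_M$, which vanishes on the support $\{S_t = M_t\}$ of $\diff M_t$ by the Neumann-type boundary condition in \eqref{eqn:cash equivalent:PDE}. Thus
\begin{align*}
\diff \wt w(t,S_t,A_t,M_t,\Sigma_0)
&= -\tfrac{1}{2}\wt g(t,S_t,A_t,M_t,\Sigma_0)\,\diff t + (\wt w_S + \gamma\wt w_A)(t,S_t,A_t,M_t,\Sigma_0)\,\diff S_t.
\end{align*}
Integrating from $0$ to $T$, using the terminal condition $\wt w(T,\cdot,\Sigma_0) = 0$, and taking $P^0$-expectations produces the representation, provided the stochastic integral $\int_0^T S_t\Sigma_0(\wt w_S + \gamma\wt w_A)\,\diff W^0_t$ is a true $P^0$-martingale. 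This follows from the $L^4_\fP$-integrability of $S(\wt w_S + \gamma\wt w_A)$ assumed in Assumption~\ref{ass:main result}\ref{ass:main result:cash equivalent} together with the boundedness of $\Sigma_0$.

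For the rearrangement of $\wt g$, the key identity is $\bfc^\tr\wt\bfzeta = 0$. In the first case of \eqref{eqn:lagrange multiplier:lambda} (where $\mu = 0$), this is immediate from $\wt\bfzeta = \Psi(\bfv - \lambda\bfc)$ and the definition $\lambda = \bfc^\tr\Psi\bfv/\bfc^\tr\Psi\bfc$. In the second case (where $\mu > 0$ is engineered so that the fourth component $\wt\xi$ vanishes), one substitutes the formula for $\lambda$ and $\mu = -\tfrac{1}{2}(\cV_{\Sigma\Sigma} - \lambda\cC_{\Sigma\Sigma})$ into $\bfc^\tr\wt\bfzeta = \bfc^\tr\Psi\bfv - \lambda\bfc^\tr\Psi\bfc + \tfrac{1}{2}\mu\cC_{\Sigma\Sigma}\psi_\xi$ and verifies by direct computation that the resulting expression collapses to zero. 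Given $\bfc^\tr\wt\bfzeta = 0$, one writes $\wt g = \bfv^\tr\wt\bfzeta = (\bfv - \phi^\star\bfc)^\tr\wt\bfzeta$. The first component of $\bfv - \phi^\star\bfc$ is $\cV_\Sigma - \phi^\star\cC_\Sigma = 0$ by the very definition of the vega hedge, so the $\wt\nu$-term drops out. The three remaining components of $\bfv - \phi^\star\bfc$ equal $-\Sigma(\phi^\star S^2\cC_{SS} - (\beta\cV_A + S^2\Gamma))$, $-\Sigma S(\phi^\star\cC_{S\Sigma} - \Vanna)$, and $-\tfrac{1}{2}(\phi^\star\cC_{\Sigma\Sigma} - \cV_{\Sigma\Sigma})$, which when paired against $\wt\sigma$, $\wt\eta$, $\wt\xi$ deliver precisely the three-term decomposition in \eqref{eqn:prop:Feynman-Kac:g}.

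The main obstacle, mild as it is, is the verification of $\bfc^\tr\wt\bfzeta = 0$ in the constrained case where $\wt\xi = 0$: the specific form of $\lambda$ in the second branch of \eqref{eqn:lagrange multiplier:lambda} is tuned precisely so that the equality constraint of the underlying quadratic program is preserved after projecting the unconstrained minimiser onto the hyperplane $\{\wt\zeta_4 = 0\}$. Once this KKT identity is in hand, the rest of the proposition reduces to routine application of It\^o's formula and linear algebra, and should consequently be stated as a short calculation referring back to the heuristic derivation of Section~\ref{sec:heuristics} and to Lemma~\ref{lem:LCQP}(a).
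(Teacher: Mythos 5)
Your proposal is correct and follows essentially the same route as the paper: It\^o's formula applied to $\wt w$ along the reference model combined with the PDE \eqref{eqn:cash equivalent:PDE} and the integrability from Assumption~\ref{ass:main result}~\ref{ass:main result:cash equivalent} for the Feynman--Kac part, and the identity $\bfc^\tr\wt\bfzeta=0$ together with vega-neutrality of $\bfv-\phi^\star\bfc$ for the rewriting of $\wt g$ (the paper obtains $\bfc^\tr\wt\bfzeta=0$ from the KKT conditions in Lemma~\ref{lem:HJBI:H1LCQP}~(a)/Lemma~\ref{lem:LCQP}, which your direct substitution of $\lambda$ and $\mu$ reproves in the constrained branch).
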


\begin{proof}
The Feynman--Kac representation is proved in Proposition~\ref{prop:Feynman-Kac raw} (also note that $\Sigma_t = \Sigma_0$ ${\diff t\times P^0}$-a.e.~because $P^0$ is a reference model). The representation of $\wt g$ is the content of Corollary~\ref{cor:source term representation}.
\end{proof}

For an interpretation of this representation in the case of $\beta\equiv\gamma \equiv 0$, we refer to the discussion after equation \eqref{eqn:intro:source term} in the introduction. If $\gamma \not\equiv 0$ (e.g., for a forward-start call as in Example~\ref{ex:forward start call}), then the effective gamma and effective vanna are in general different from the gamma and vanna of the option. If the option $\sfV$ depends on the realised variance of the stock (e.g., a call on the realised variance), then a term $\beta \cV_A$ is added to the effective gamma in \eqref{eqn:prop:Feynman-Kac:g}.

The next proposition implies that whenever the vega-gamma-vanna-volga vectors of the call and the non-traded option $\sfV$ are collinear, the local impact of uncertainty aversion vanishes at the leading order.

\begin{proposition}
\label{prop:VGVV collinear}
Fix $(t,\bfx) \in \bfD^0$. If the vega-gamma-vanna-volga vectors $\bfc(t,\bfx)$ and $\bfv(t,\bfx)$ are collinear, then $\wt g(t,\bfx) = 0$.
\end{proposition}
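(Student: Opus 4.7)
The plan is to trace through the definitions of $\lambda$, $\mu$, and $\wt\bfzeta$ under the collinearity hypothesis and show that $\wt\bfzeta(t,\bfx)$ itself vanishes, from which $\wt g = \bfv^\tr\wt\bfzeta = 0$ follows immediately.

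First, I would observe that Assumption~\ref{ass:main result}\ref{ass:main result:call} guarantees $\cC_\Sigma\neq 0$, so $\bfc(t,\bfx)\neq\bfNull$. Consequently, collinearity of $\bfc$ and $\bfv$ means there exists $\kappa\in\RR$ with $\bfv(t,\bfx) = \kappa\,\bfc(t,\bfx)$. Reading off the fourth components via \eqref{eqn:VGVV:c}--\eqref{eqn:VGVV:v} then yields $\cV_{\Sigma\Sigma} = \kappa\,\cC_{\Sigma\Sigma}$.

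Next, I would evaluate the quantity that selects the branch in \eqref{eqn:lagrange multiplier:lambda}: since $\bfv = \kappa\bfc$, one has $\bfc^\tr\Psi\bfv = \kappa\,\bfc^\tr\Psi\bfc$, so $\frac{\bfc^\tr\Psi\bfv}{\bfc^\tr\Psi\bfc} = \kappa$ (the denominator is strictly positive because $\Psi$ is positive definite and $\bfc\neq\bfNull$). The branching condition becomes
\begin{equation*}
\cV_{\Sigma\Sigma} - \frac{\bfc^\tr\Psi\bfv}{\bfc^\tr\Psi\bfc}\cC_{\Sigma\Sigma} \;=\; \kappa\cC_{\Sigma\Sigma} - \kappa\cC_{\Sigma\Sigma} \;=\; 0 \;\geq\; 0,
\end{equation*}
so we are in the first case of \eqref{eqn:lagrange multiplier:lambda} and hence $\lambda(t,\bfx) = \kappa$. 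Then \eqref{eqn:lagrange multiplier:mu} gives $\mu(t,\bfx) = \tfrac{1}{2}(\cV_{\Sigma\Sigma} - \kappa\cC_{\Sigma\Sigma})^- = 0$.

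Plugging these into \eqref{eqn:candidate control:first order} yields
\begin{equation*}
\wt\bfzeta(t,\bfx) \;=\; \Psi\bigl(\bfv - \kappa\bfc + 0\cdot\vec\bfe_4\bigr) \;=\; \Psi\bigl(\kappa\bfc - \kappa\bfc\bigr) \;=\; \bfNull,
\end{equation*}
and therefore $\wt g(t,\bfx) = \bfv(t,\bfx)^\tr\wt\bfzeta(t,\bfx) = 0$. There is no real obstacle here; the only point requiring a touch of care is verifying that the collinear configuration lies exactly on the boundary between the two cases of $\lambda$, so that both branches in fact give the same value $\lambda = \kappa$ and the volga-positivity multiplier $\mu$ inactive.
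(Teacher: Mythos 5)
Your proposal is correct and follows essentially the same route as the paper: under $\bfv = \kappa\bfc$ one reads off $\lambda = \kappa$, $\mu = 0$, hence $\wt\bfzeta = \bfNull$ and $\wt g = \bfv^\tr\wt\bfzeta = 0$. The paper's proof is just a terser version of this; your extra care with the branch condition in \eqref{eqn:lagrange multiplier:lambda} and with $\bfc \neq \bfNull$ is a welcome elaboration, not a difference in method.
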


\begin{proof}
Fix $(t,\bfx) \in \bfD^0$ and let $k \in \RR$ such that $\bfv(t,\bfx) = k \bfc(t,\bfx)$. Then by construction (cf.~\eqref{eqn:lagrange multiplier:lambda}--\eqref{eqn:candidate control:first order}), ${\lambda(t,\bfx) = k}$, $\mu(t,\bfx) = 0$, and $\wt\bfzeta(t,\bfx) = 0$. Thus, $\wt g(t,\bfx) = 0$.
\end{proof}

For example, consider the case where the non-traded option is a put with the same strike and maturity as the liquidly traded call. Then the put-call parity implies that the vegas, gammas, vannas, and volgas of both options coincide everywhere. Thus, $\wt g \equiv 0$ and hence the cash equivalent $\wt w_0$ vanishes. This is expected as put-call parity also provides a model-free hedge for this situation.

\paragraph{Indifference prices.}
The \emph{indifference ask price} (for the non-traded option $\sfV$) is the price at which the agent is indifferent between keeping a flat position and changing her position by selling the non-traded option for that price.

Recall that $V_0$ is the initial reference value of the non-traded option $\sfV$ and that $\wt w_0$ is its cash equivalent. Let $v(y;\psi)$ denote the value of our hedging problem corresponding to initial P\&L $y$. If the agent decides to sell the non-traded option for a price $p_a(\psi)$, then her initial P\&L for the hedging problem is $Y_0 + p_a(\psi) - V_0$. Therefore, the equation determining the indifference ask price $p_a(\psi)$ reads as follows:
\begin{align*}
U(Y_0)
&= v(Y_0 + p_a(\psi) - V_0;\psi).
\end{align*}
Using the expansion of $v$ from Theorem~\ref{thm:main result}, straightforward computations yield
\begin{align}
\label{eqn:price:ask}
p_a(\psi)
&= V_0 + \wt w_0 \psi + o(\psi).
\end{align}
Therefore, $\wt w_0 \psi$ is the leading-order premium demanded by the agent as a compensation for exposing herself to model uncertainty.

\begin{remark}
\label{rem:asymmetry}
Buying an option is the same as selling the negative of that option. However, the cash equivalents corresponding to $\sfV$ and $-\sfV$ are in general different. This asymmetry is caused by the constraint that the uncorrelated squared volatility must be \emph{nonnegative} and the fact that the reference model has \emph{zero} uncorrelated squared volatility. In other words, the uncorrelated squared volatility can only depart from its reference value in one direction. In contrast, the other control variables can deviate from their reference value in both directions.
\end{remark}

\subsection{On the existence of a candidate asymptotic model family}
\label{sec:existence}

Our main result, Theorem~\ref{thm:main result}, assumes that the set of models $\fP$ contains a candidate asymptotic model family. In this section, we prescribe a set of models $\fP$ and sketch the construction of a candidate asymptotic model family in $\fP$. Fix constants $0 < \ul\Sigma < \Sigma_0 < \ol\Sigma$, $\ul\nu<0<\ol\nu$, $0 < \ul\sigma < \ul\Sigma$, $\ol\Sigma < \ol\sigma$, $\ul\eta<0<\ol\eta$, and $\ol\xi > 0$, and let $\fP$ denote the subset of models $P$ in $\fP^0$ such that the bounds \eqref{eqn:ass:model set} are satisfied. Under some further regularity assumptions on the greeks of the liquid option, $\fP$ then contains a candidate asymptotic model family.

The construction of the candidate asymptotic model family comprises two steps. The first is to prove that the candidate feedback control $\bfzeta^\psi$ can be modified by a term of order $O(\psi^2)$ such that the resulting \emph{modified feedback control} $\check\bfzeta^\psi$ satisfies the drift condition \eqref{eqn:drift condition}: 

\begin{lemma}
\label{lem:existence}
Let $\fP \subset \fP^0$ be such that \eqref{eqn:ass:model set} holds for every $P\in\fP$. Suppose in addition that Assumption~\ref{ass:main result}~\ref{ass:main result:call}--\ref{ass:main result:option} holds with \eqref{eqn:ass:C} in Assumption~\ref{ass:main result}~\ref{ass:main result:call} replaced by the stronger condition that
\begin{align}
\label{eqn:lem:existence:C}
\vert \cC_\Sigma \vert
&\geq 1/K_\cC
\quad\text{and}\quad
\vert S^2 \cC_{SS} \vert, \vert S \cC_{S\Sigma} \vert, \vert \cC_{\Sigma\Sigma} \vert
\leq K_\cC \text{ on } (0,T)\times \RR_+ \times [\ul\Sigma,\ol\Sigma],
\end{align}
for some constant $K_\cC > 0$. Then there are $\psi_0 > 0$ and functions
\begin{align*}
\check\bfzeta^\psi: \bfD^0 \to [\ul\nu,\ol\nu]\times[\ul\sigma,\ol\sigma]\times[\ul\eta,\ol\eta]\times[0,\ol\xi], \quad \psi \in (0,\psi_0),
\end{align*}
such that for each $\psi \in (0,\psi_0)$, the restriction $\left.\check\bfzeta^\psi\right.\vert_{(0,T)\times\bfG\times(\ul\Sigma,\ol\Sigma)}$ is continuous and can be extended to a continuous function on $\bfD^0 = (0,T)\times\bfG\times\RR_+$. Moreover, there is $K_0 > 0$ such that for each $(t,\bfx) = (t,S,A,M,\Sigma) \in \bfD$ and $\psi \in (0,\psi_0)$,
\begin{enumerate}
\item $\check\bfzeta^\psi(t,\bfx) = \bfzeta^0(\Sigma)$ if $\Sigma \not\in(\ul\Sigma,\ol\Sigma)$, i.e., the modified feedback control falls back to the reference feedback control if the bounds on the implied volatility are reached;

\item writing $(\check\nu^\psi,\check\sigma^\psi,\check\eta^\psi,\check\xi^\psi) = \check\bfzeta^\psi(t,\bfx)$, we have
\begin{align*}
\check\nu^\psi \cC_\Sigma + \frac{1}{2}S^2\cC_{SS}((\check\sigma^\psi)^2 - \Sigma^2) + \check\sigma^\psi \check\eta^\psi S \cC_{S\Sigma} + \frac{1}{2}((\check\eta^\psi)^2 + \check\xi^\psi) \cC_{\Sigma\Sigma}
&= 0,
\end{align*}
i.e., the drift condition \eqref{eqn:drift condition} is satisfied for the modified feedback control $\check\bfzeta^\psi$;
\item
\begin{align}
\label{eqn:lem:existence:estimate}
\enorm{\check\bfzeta^\psi(t,\bfx) - \bfzeta^\psi(t,\bfx)}
&\leq K_0\psi^2,
\end{align}
i.e., the modified feedback control $\check\bfzeta^\psi$ is $O(\psi^2)$-close to the candidate $\bfzeta^\psi$.
\end{enumerate}
\end{lemma}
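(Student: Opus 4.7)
The plan is to view the gap between $\bfzeta^\psi$ and the exact drift condition as a quantity of order $\psi^2$ that can be absorbed into a small correction of the drift component $\nu^\psi$. This works because $\lvert\cC_\Sigma\rvert\ge 1/K_\cC$ is the coefficient of $\nu$ in $b^\cC$, so a bounded perturbation of $\nu$ can annihilate a bounded residual.

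\textbf{Step 1: residual analysis.} The map $\bfzeta\mapsto b^\cC(t,\bfx;\bfzeta)$ is a polynomial of degree two, and $b^\cC(t,\bfx;\bfzeta^0(\Sigma))=0$ because $(\nu,\eta,\xi)=0$ and $\sigma=\Sigma$. Hence the Taylor expansion terminates exactly at the quadratic term,
\begin{align*}
b^\cC(t,\bfx;\bfzeta^0(\Sigma)+\epsilon)
= \bfc(t,\bfx)^\tr\epsilon + Q(t,\bfx;\epsilon),\qquad
Q(t,\bfx;\epsilon) := \tfrac12\epsilon_2^2 S^2\cC_{SS} + \epsilon_2\epsilon_3 S\cC_{S\Sigma} + \tfrac12\epsilon_3^2 \cC_{\Sigma\Sigma}.
\end{align*}
The KKT stationarity condition for the programme \eqref{eqn:heuristic:LCQP problem} encoded in \eqref{eqn:lagrange multiplier:lambda}–\eqref{eqn:candidate control:first order} contains the linearised drift constraint $\bfc^\tr\wt\bfzeta=0$. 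Choosing $\epsilon=\wt\bfzeta\,\1_{\{\ul\Sigma<\Sigma<\ol\Sigma\}}\psi$ therefore kills the linear term and leaves only the purely quadratic residual
\begin{align*}
R^\psi(t,\bfx) := b^\cC\bigl(t,\bfx;\bfzeta^\psi(t,\bfx)\bigr)
= Q\bigl(t,\bfx;\wt\bfzeta(t,\bfx)\bigr)\,\1_{\{\ul\Sigma<\Sigma<\ol\Sigma\}}\,\psi^2.
\end{align*}

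\textbf{Step 2: construction and verification of (a)–(c).} For $(t,\bfx)\in\bfD^0$ with $\Sigma\in(\ul\Sigma,\ol\Sigma)$, set
\begin{align*}
\check\bfzeta^\psi(t,\bfx) := \bfzeta^\psi(t,\bfx) - \frac{R^\psi(t,\bfx)}{\cC_\Sigma(t,S,\Sigma)}\,\vec\bfe_1,
\end{align*}
and $\check\bfzeta^\psi(t,\bfx):=\bfzeta^0(\Sigma)$ otherwise (with a harmless projection onto $[\ul\sigma,\ol\sigma]$ in the second component if $\Sigma\not\in[\ul\sigma,\ol\sigma]$, a region the corresponding process never visits). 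Since $b^\cC$ is affine in $\nu$ with slope $\cC_\Sigma$, this single subtraction exactly cancels $R^\psi$, so $b^\cC(t,\bfx;\check\bfzeta^\psi)=0$ and (b) holds. Property (a) is built into the definition. For (c), the uniform bound $\lvert R^\psi\rvert\le K_Q\psi^2$, with $K_Q:=\sup_\bfD\lvert Q(\cdot;\wt\bfzeta)\rvert$, combined with $\lvert\cC_\Sigma\rvert\ge 1/K_\cC$ yields \eqref{eqn:lem:existence:estimate} with $K_0:=K_QK_\cC$.

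\textbf{Step 3: boundedness, range condition, and continuity.} Under \eqref{eqn:lem:existence:C} the entries of $\bfc$ are uniformly bounded, and both $\bfc^\tr\Psi\bfc\ge\psi_\nu\cC_\Sigma^2\ge\psi_\nu/K_\cC^2$ and $\bfc^\tr\Psi\bfc-\tfrac14\cC_{\Sigma\Sigma}^2\psi_\xi\ge\psi_\nu/K_\cC^2$ hold uniformly on $\bfD$; together with the bounds \eqref{eqn:ass:V} on $\bfv$, formulas \eqref{eqn:lagrange multiplier:lambda}–\eqref{eqn:candidate control:first order} then show that $\lambda,\mu,\wt\bfzeta$ are uniformly bounded on $\bfD$, giving $K_Q<\infty$. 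A direct computation (substitute $\cV_{\Sigma\Sigma}=\lambda_1\cC_{\Sigma\Sigma}$ into the second branch of \eqref{eqn:lagrange multiplier:lambda}) shows that the two branches of $\lambda$ coincide at the interface, so $\lambda,\mu,\wt\bfzeta$ are continuous on $\bfD$; dropping the indicator in \eqref{eqn:candidate control:short} and in the definition of $\check\bfzeta^\psi$ yields a continuous function on all of $\bfD^0$ that agrees with $\check\bfzeta^\psi$ on $(0,T)\times\bfG\times(\ul\Sigma,\ol\Sigma)$, which is the required extension. Finally, $\bfzeta^0(\Sigma)=(0,\Sigma,0,0)$ lies strictly inside $[\ul\nu,\ol\nu]\times[\ul\sigma,\ol\sigma]\times[\ul\eta,\ol\eta]\times[0,\ol\xi]$ for $\Sigma\in[\ul\Sigma,\ol\Sigma]$, because $\ul\sigma<\ul\Sigma\le\Sigma\le\ol\Sigma<\ol\sigma$ and $0$ lies strictly inside both $[\ul\nu,\ol\nu]$ and $[\ul\eta,\ol\eta]$; the nonnegativity $\check\xi^\psi\ge 0$ is automatic since the positive part in the definition of $\mu$ makes the fourth component of $\wt\bfzeta$ nonnegative; hence choosing $\psi_0$ small enough that the $O(\psi)$ perturbation does not exit the box concludes the proof. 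The main delicacy is quantitative: without the stronger lower bound on $\lvert\cC_\Sigma\rvert$ in \eqref{eqn:lem:existence:C} (as opposed to the mere non-vanishing in \eqref{eqn:ass:C}), neither $\wt\bfzeta$ would be known to be bounded on $\bfD$ nor could the correction $R^\psi/\cC_\Sigma$ be controlled uniformly at order $\psi^2$.
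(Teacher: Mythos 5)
Your construction is exactly the paper's: the correction $-R^\psi/\cC_\Sigma\,\vec\bfe_1$ coincides with the paper's perturbation $\wh\nu\,\vec\bfe_1\,\psi^2$ of the drift of implied volatility, and your verification of (a)--(c), the range condition, and continuity (including the coincidence of the two branches of $\lambda$ at the interface) follows the same lines, resting on $\bfc^\tr\wt\bfzeta=0$ and boundedness of $\wt\bfzeta$ on $\bfD$. The only, purely cosmetic, difference is that the paper obtains boundedness of $\wt\bfzeta$ from the quadratic-programming bound $\enorm{\wt\bfzeta}\leq\psi_{\max}\enorm{\bfv}$ (so the strengthened condition \eqref{eqn:lem:existence:C} is really only needed to control the correction $R^\psi/\cC_\Sigma$), whereas you re-derive it directly from \eqref{eqn:lagrange multiplier:lambda}--\eqref{eqn:candidate control:first order} using the lower bound on the denominators; both routes are valid.
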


\begin{proof}
See Section~\ref{sec:proofs:existence}.
\end{proof}

Let $\psi_0$, $K_0$, and $\check\bfzeta^\psi$ be as in Lemma~\ref{lem:existence}. The second step now is to show that the stochastic differential equations (SDEs) corresponding to the modified feedback control $\check\bfzeta^\psi = (\check\nu^\psi, \check\sigma^\psi, \check\eta^\psi, \check\xi^\psi)^\tr$ have a weak solution. Fix $\psi \in (0,\psi_0)$. Writing $\nu,\sigma,\eta,\xi$ instead of $\check\nu^\psi, \check\sigma^\psi, \check\eta^\psi, \check\xi^\psi$ to ease the notation, the relevant SDEs read as
\begin{align}
\label{eqn:existence:SDE}
\begin{split}
\diff S'_t
&= S'_t \sigma \dd W^0_t,\\
\diff \Sigma'_t
&= \nu \dd t + \eta \dd W^0_t + \sqrt{\xi} \dd W^1_t,\\
\diff A'_t
&= \left(\alpha + \frac{1}{2}\beta \sigma^2\right) \dd t
+ \gamma S'_t \sigma \dd W^0_t + \delta \dd M'_t,
\end{split}
\end{align}
where $\alpha,\beta,\gamma$, and $\delta$ are evaluated at $(t,S'_t,A'_t,M'_t := \sup_{u\in[0,t]} S'_u)$, $\nu,\sigma,\eta$, and $\xi$ are evaluated at $(t,S'_t,A'_t,M'_t,\Sigma'_t)$, and $(W^0, W^1)$ is a bivariate standard Brownian motion.

Suppose there exists a weak solution to \eqref{eqn:existence:SDE} (starting in $S_0,\Sigma_0,A_0$) with the property that $\Sigma'$ evolves in $[\ul\Sigma,\ol\Sigma]$ almost surely and denote by $P^\psi$ its image measure (under $(S',\Sigma',A')$) on the canonical space $(\Omega,\cF)$. Then by construction (cf.~Definitions~\ref{def:fP00} and \ref{def:fP0}), $P^\psi \in \fP^0$ and $\bfzeta^{P^\psi}_t = \check\bfzeta^\psi(t,\bfX_t)$ $\diff t \times P^\psi$-a.e. Moreover, by Lemma~\ref{lem:existence} and the fact that under $P^\psi$, $\Sigma$ evolves in $[\ul\Sigma,\ol\Sigma]$ almost surely, \eqref{eqn:ass:model set} holds for every $P \in (P^\psi)_{\psi\in(0,\psi_0)}$ and
\begin{align*}
\enorm{\bfzeta^{P^\psi}_t - \bfzeta^\psi(t,\bfX_t)}
&= \enorm{\check\bfzeta^\psi(t,\bfX_t)- \bfzeta^\psi(t,\bfX_t)}
\leq K_0\psi^2 \quad \diff t\times P^\psi\text{-a.e.}
\end{align*}
So $(P^\psi)_{\psi\in(0,\psi_0)}$ is a candidate asymptotic model family in $\fP$.

It remains to argue the existence of a weak solution to \eqref{eqn:existence:SDE} with the property that $\Sigma'$ evolves in $[\ul\Sigma,\ol\Sigma]$. Note that we cannot directly apply standard existence results for weak solutions as the control $\check\bfzeta^\psi$ is not continuous in $\Sigma \in \RR_+$. However, one can apply a standard existence result to the SDEs corresponding to the continuous extension of $\left.\check\bfzeta^\psi\right.\vert_{(0,T)\times\bfG\times(\ul\Sigma,\ol\Sigma)}$ to $\bfD^0$. Then the obvious idea is to stop the resulting weak solution as soon as $\Sigma'$ hits the boundary of $[\ul\Sigma,\ol\Sigma]$ and restart the SDEs with new dynamics from there. After the restart, we keep $\Sigma' \in \lbrace \ul\Sigma,\ol\Sigma \rbrace$ constant, let $S'$ evolve like a standard Black--Scholes model with constant volatility $\Sigma'$, and (assuming suitable Lipschitz and linear growth conditions on the coefficients of the SDE for $A$; cf.~\cite[Appendix B]{HerrmannMuhleKarbeSeifried2017}) find a solution $A'$ according to the dynamics in \eqref{eqn:existence:SDE}, but with the new dynamics of $S'$. Then one can check that the constructed process satisfies the SDEs \eqref{eqn:existence:SDE} with the original feedback control $\check\bfzeta^\psi$; see \cite[Theorem 3.7]{HerrmannMuhleKarbeSeifried2017} for more details in a similar setup.

\section{Proofs}
\label{sec:proofs}

This section contains the proofs of our main results. We first establish the value expansion and almost-optimality of the delta-vega hedge asserted in Theorem~\ref{thm:main result}. Afterwards, we turn to the construction of the modified feedback control from Lemma~\ref{lem:existence}.

\subsection{Value expansion and almost optimality of the delta-vega hedge}
\label{sec:proofs:main result}

In this section, we prove Theorem~\ref{thm:main result}. Throughout, we assume that Assumption~\ref{ass:main result} is in force, that $\bfupsilon^\star \in \fY$, and that  $(P^\psi)_{\psi\in(0,\psi_0)} \subset \fP$ is a candidate asymptotic model family.\footnote{Recall from Remark~\ref{rem:ass:main result}~(a) that the delta-vega hedge $\bfupsilon^\star$ can always be included into the set of trading strategies $\fY$ by making the constant $K_\fY$ from Assumption~\ref{ass:main result}~\ref{ass:main result:strategy set} larger if necessary. The existence of a candidate asymptotic model family is discussed in Section~\ref{sec:existence}.}
In particular (recall Definition~\ref{def:candidate asymptotic model family}~(b)), we fix $1 \leq K_0 \in L^4_\fP$ such that for every $\psi \in (0,\psi_0)$,
\begin{align}
\label{eqn:pf:estimate:candidate asymptotic model family}
\enorm{\bfzeta^{P^\psi}_t - \bfzeta^\psi(t,\bfX_t)}
&\leq K_0(t,\bfX_t) \psi^2 \quad \diff t \times P^\psi\text{-a.e.}
\end{align}

For each $\psi>0$, define the \emph{candidate value function} $w^\psi:\ol{\bfD^0}\times\RR\to\RR$ by
\begin{align}
\label{eqn:value function}
w^\psi(t,\bfx,y)
&= U(y) - U'(y)\wt w(t,\bfx)\psi
\end{align}
and set $w^\psi_0 := w^\psi(0,S_0,A_0,M_0,\Sigma_0,Y_0)$. Suppose for the moment that we have already proved the following two inequalities (cf.~Lemmas~\ref{lem:lower bound} and \ref{lem:upper bound}):
\begin{align}
\inf_{P \in \fP} J^\psi(\bfupsilon^\star,P)
&\geq w^\psi_0 + o(\psi), \qquad \text{as } \psi \downarrow 0,\label{eqn:SDG inequalities1}\\
\sup_{\bfupsilon\in\fY} J^\psi(\bfupsilon,P^\psi)
&\leq w^\psi_0 + o(\psi),
\qquad \text{as } \psi \downarrow 0.\label{eqn:SDG inequalities2}
\end{align}
Denoting by $\lesssim$ ``less or equal up to a term of order $o(\psi)$'', we obtain from \eqref{eqn:SDG inequalities1}--\eqref{eqn:SDG inequalities2} that
\begin{align*}
w^\psi_0
&\lesssim \inf_{P \in \fP} J^\psi(\bfupsilon^\star,P)
\lesssim \sup_{\bfupsilon \in \fY} \inf_{P \in \fP} J^\psi(\bfupsilon,P)
\lesssim \inf_{P \in \fP} \sup_{\bfupsilon \in \fY} J^\psi(\bfupsilon,P)
\lesssim \sup_{\bfupsilon \in \fY} J^\psi(\bfupsilon,P^\psi)
\lesssim w^\psi_0
\end{align*}
and
\begin{align*}
w^\psi_0
&\lesssim \inf_{P \in \fP} J^\psi(\bfupsilon^\star,P)
\lesssim J^\psi(\bfupsilon^\star,P^\psi)
\lesssim \sup_{\bfupsilon \in \fY} J^\psi(\bfupsilon,P^\psi)
\lesssim w^\psi_0.
\end{align*}
Hence, we have equality up to a term of order $o(\psi)$ everywhere. In particular, assertion~\eqref{eqn:thm:main result:expansion} of Theorem~\ref{thm:main result} holds. This completes the proof of Theorem~\ref{thm:main result} modulo the proof of \eqref{eqn:SDG inequalities1}--\eqref{eqn:SDG inequalities2}. The proof of these two inequalities is based on careful estimates of the HJBI equation associated to the SDG~\eqref{eqn:value}. Section~\ref{sec:proofs:notation} introduces the notation used in the rest of the proof as well as some preliminary results. Sections~\ref{sec:proofs:Hamiltonian}--\ref{sec:proofs:candidate value function} are purely analytic and provide the required estimates of the HJBI equation. Finally, Sections~\ref{sec:proofs:lower bound} and \ref{sec:proofs:upper bound} contain the proofs of the inequalities \eqref{eqn:SDG inequalities1} and \eqref{eqn:SDG inequalities2}.

\subsubsection{Notation and preliminaries}
\label{sec:proofs:notation}

Set $\psi_{\min} = \min(\psi_\nu,\psi_\sigma,\psi_\eta,\psi_\xi)$, $\psi_{\max} = \max(\psi_\nu,\psi_\sigma,\psi_\eta,\psi_\xi)$ (recall \eqref{eqn:Psi}), and denote by $\norm{Q}_F$ the Frobenius norm of a matrix $Q$.
Recalling that the \emph{squared} uncorrelated volatility $\xi^P$ has to be nonnegative, let $\bfZ^0 := \RR^3 \times [0,\infty)$ be the natural range for the controls $\bfzeta^P$. A generic element of $\bfZ^0$ is always denoted by $\bfzeta = (\nu,\sigma,\eta,\xi)^\tr$. Next, define the function $b^\cC:\bfD^0 \times \bfZ^0 \to \RR$ by
\begin{align}
b^\cC(t,\bfx;\bfzeta)
&= \nu \cC_\Sigma
+\frac{1}{2}S^2\cC_{SS}(\sigma^2 - \Sigma^2)
+\sigma\eta S\cC_{S\Sigma}
+\frac{1}{2}(\eta^2 + \xi)\cC_{\Sigma\Sigma}\notag\\
\label{eqn:bC:matrix form}
&= \bfc(t,\bfx)^\tr (\bfzeta-\bfzeta^0(\Sigma))
+\frac{1}{2}
\begin{pmatrix}
\sigma - \Sigma\\
\eta
\end{pmatrix}^\tr
\begin{pmatrix}
S^2 \cC_{SS}    & S \cC_{S\Sigma}\\
S \cC_{S\Sigma} & \cC_{\Sigma\Sigma}
\end{pmatrix}
\begin{pmatrix}
\sigma - \Sigma\\
\eta
\end{pmatrix};
\end{align}
cf.~\eqref{eqn:VGVV:c} for the definition of the vega-gamma-vanna-volga vector $\bfc(t,\bfx)$ of the call. This definition is motivated by the drift condition \eqref{eqn:drift condition}, which states that $b^\cC(t,\bfX_t;\bfzeta^P_t) = 0$ $\diff t\times P$-a.e.~for every $P\in\fP^0$. For each $(t,\bfx) \in \bfD$, write
\begin{align*}
\bfZ^0(t,\bfx)
&= \lbrace \bfzeta \in \bfZ^0 : b^\cC(t,\bfx;\bfzeta) = 0 \rbrace
\end{align*}
for the set of controls $\bfzeta$ that fulfil the drift condition at $(t,\bfx)$, and define
\begin{align}
\label{eqn:Z0lin}
\bfZ^0_\lin(t,\bfx)
&= \lbrace \bfzeta \in \bfZ^0 : \bfc(t,\bfx)^\tr(\bfzeta -\bfzeta^0(\Sigma))= 0\rbrace,
\end{align}
the set of controls $\bfzeta$ that satisfy the ``linearised drift condition'' at $(t,\bfx)$.
Next, set
\begin{align*}
\bfZ
&= [\ul\nu,\ol\nu]\times[\ul\sigma,\ol\sigma]\times[\ul\eta,\ol\eta]\times[0,\ol\xi]
\end{align*}
for the range of the controls in $\fP$ (cf.~Assumption~\ref{ass:main result}~\ref{ass:main result:model set}) and denote by $\bfZ(t,\bfx) = \bfZ^0(t,\bfx) \cap \bfZ$ and $\bfZ_\lin(t,\bfx) = \bfZ^0_\lin(t,\bfx) \cap \bfZ$ the intersections of $\bfZ^0(t,\bfx)$ and $\bfZ^0_\lin(t,\bfx)$ with $\bfZ$, respectively. Also recall from Definition~\ref{def:reference model} that the reference feedback control is $\bfzeta^0(\Sigma) = (0,\Sigma,0,0)^\tr$.

We start with the probabilistic representation of the solution to the PDE \eqref{eqn:cash equivalent:PDE} for the cash equivalent $\wt w_0 = \wt w(0,\bfX_0)$.

\begin{proposition}[Feynman--Kac representation]
\label{prop:Feynman-Kac raw}
Let $P^0 \in \fP$ be a reference model. Then
\begin{align}
\label{eqn:prop:Feynman-Kac raw}
\wt w_0
= \wt w(0,\bfX_0)
&= \frac{1}{2}\EX[P^0]{\int_0^T \wt g(t,\bfX_t) \dd t}.
\end{align}
\end{proposition}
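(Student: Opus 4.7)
The plan is a standard Feynman--Kac-type verification argument: apply It\^o's formula to $\wt w(t,\bfX_t)$ under the reference model $P^0$, use the PDE \eqref{eqn:cash equivalent:PDE} to identify the drift, show that the local martingale part is in fact a true martingale, and then take expectations.

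First, note that by Definition~\ref{def:reference model} the reference model has $\bfzeta^{P^0}_t = (0,\Sigma_t,0,0)$ with $\sigma^{P^0}_t = \Sigma_t$, so that $\diff\Sigma_t \equiv 0$ under $P^0$; combined with $\Sigma_0 > 0$ this gives $\Sigma_t = \Sigma_0$ $P^0$-a.s. for all $t\in[0,T]$. Under $P^0$ the remaining components of $\bfX_t = (S_t,A_t,M_t,\Sigma_t)$ evolve according to
\begin{align*}
\diff S_t &= S_t\Sigma_0 \dd W^0_t, \\
\diff A_t &= \big(\alpha + \tfrac{1}{2}\beta\Sigma_0^2\big)\dd t + \gamma S_t\Sigma_0 \dd W^0_t + \delta \dd M_t,
\end{align*}
where $\alpha,\beta,\gamma,\delta$ are evaluated at $(t,S_t,A_t,M_t)$, while $M$ is of finite variation. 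By Assumption~\ref{ass:main result}\ref{ass:main result:cash equivalent}, $\wt w \in C^{1,2,2,1,2}$, so It\^o's formula applied to $\wt w(t,S_t,A_t,M_t,\Sigma_0)$ yields
\begin{align*}
\wt w(t,\bfX_t) - \wt w(0,\bfX_0)
&= \int_0^t \Big[\wt w_t + (\alpha+\tfrac12\beta\Sigma_0^2)\wt w_A + \tfrac12\Sigma_0^2 S_u^2(\wt w_{SS} + 2\gamma\wt w_{SA} + \gamma^2\wt w_{AA})\Big] \dd u \\
&\qquad + \int_0^t (\wt w_M + \delta \wt w_A) \dd M_u + \int_0^t \Sigma_0 S_u(\wt w_S + \gamma \wt w_A) \dd W^0_u.
\end{align*}

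Second, I would identify the three pieces on the right. The drift bracket equals $-\tfrac12 \wt g(u,\bfX_u)$ by the PDE \eqref{eqn:cash equivalent:PDE} evaluated at $\Sigma = \Sigma_0$. The $\dd M$-integral vanishes because $M$ only increases on $\{S=M\}$, where the boundary condition $\wt w_M + \delta \wt w_A = 0$ holds. It remains to handle the stochastic integral $N_t := \int_0^t \Sigma_0 S_u(\wt w_S + \gamma \wt w_A)(u,\bfX_u) \dd W^0_u$.

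Third, I would argue that $N$ is a genuine $P^0$-martingale (not merely a local martingale), so that $\EX[P^0]{N_T}=0$. By Assumption~\ref{ass:main result}\ref{ass:main result:cash equivalent}, $S(\wt w_S + \gamma\wt w_A) \in L^4_\fP$, and since $P^0 \in \fP$ we have
\begin{align*}
\EX[P^0]{\langle N\rangle_T^{1/2}}
= \Sigma_0\, \EX[P^0]{\Big(\int_0^T [S_u(\wt w_S + \gamma\wt w_A)]^2 \dd u\Big)^{1/2}} < \infty,
\end{align*}
so the Burkholder--Davis--Gundy inequality implies that $N$ is a uniformly integrable martingale. Taking $P^0$-expectations in the It\^o expansion at $t=T$, using the terminal condition $\wt w(T,\cdot,\Sigma_0)=0$, gives
\begin{align*}
0 - \wt w(0,\bfX_0) = -\tfrac{1}{2}\EX[P^0]{\int_0^T \wt g(t,\bfX_t)\dd t},
\end{align*}
which is exactly \eqref{eqn:prop:Feynman-Kac raw}.

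The only step that is not purely mechanical is the integrability of the stochastic integral, which is precisely what the $L^4_\fP$-bounds in Assumption~\ref{ass:main result}\ref{ass:main result:cash equivalent} were designed to supply; modulo this, the argument is a textbook Feynman--Kac verification.
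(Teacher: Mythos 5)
Your proposal is correct and follows essentially the same route as the paper's own (sketched) proof: It\^o's formula for $\wt w(t,\bfX_t)$ under $P^0$, the PDE \eqref{eqn:cash equivalent:PDE} to identify the drift as $-\tfrac12\wt g$ and kill the $\diff M$-integral via the boundary condition, and the $L^4_\fP$-bound on $S(\wt w_S+\gamma\wt w_A)$ from Assumption~\ref{ass:main result}~\ref{ass:main result:cash equivalent} to upgrade the local martingale to a true martingale before taking expectations. Your write-up merely supplies the details (constancy of $\Sigma$ under the reference model and the BDG-based integrability argument) that the paper leaves to the reader.
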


\begin{proof}
We only sketch the standard proof. Applying It\^o's formula to $\wt w(t,\bfX_t)$ under $P^0$ and using the PDE \eqref{eqn:cash equivalent:PDE} for $\wt w$ shows that
\begin{align*}
\wt w(0,\bfX_0)
&= \frac{1}{2}\int_0^T \wt g(t,\bfX_t) \dd t + \text{(local martingale)}.
\end{align*}
Using Assumption~\ref{ass:main result}~\ref{ass:main result:cash equivalent}, the local martingale term is easily shown to be a martingale. Hence, taking expectations yields the Feynman--Kac representation \eqref{eqn:prop:Feynman-Kac raw}.
\end{proof}

The next lemma provides the dynamics of the P\&L processes:

\begin{lemma}
\label{lem:Y}
Let $\bfupsilon = (\theta,\phi) \in \fY$ and $P\in\fP$. Then under $P$,
\begin{align}
\label{eqn:lem:Y:dynamics}
\begin{split}
\diff Y^{\bfupsilon,P}_t
&= \big(\theta_t - (\Delta(t,\bfX_t) - \phi_t \cC_S(t,S_t,\Sigma_t)) \big) \dd S_t + \big(\phi_t \cC_\Sigma(t,S_t,\Sigma_t) -\cV_\Sigma(t,\bfX_t) \big) \dd \Sigma^{\contlocmartpart,P}_t\\
&\qquad -b^\cV(t,\bfX_t;\bfzeta^P_t) \dd t.
\end{split}
\end{align}
Here,
\begin{align*}
\Sigma^{\contlocmartpart,P}
&= \Sigma - \int_0^\cdot \nu^P_u \dd u
\end{align*}
is the (continuous) local martingale part of $\Sigma$ under $P$ and $b^\cV:\bfD^0\times\bfZ^0 \to \RR$ is given by
\begin{align}
\label{eqn:lem:Y:bV}
\begin{split}
b^\cV(t,\bfx;\bfzeta)
&= \nu\cV_{\Sigma}
+ \frac{1}{2}(\beta \cV_A + S^2 \Gamma)(\sigma^2-\Sigma^2)
+ \sigma\eta  S \Vanna + \frac{1}{2}(\eta^2 + \xi) \cV_{\Sigma\Sigma}\\
&= \bfv(t,\bfx)^\tr (\bfzeta-\bfzeta^0(\Sigma))
+\frac{1}{2}
\begin{pmatrix}
\sigma - \Sigma\\
\eta
\end{pmatrix}^\tr
\begin{pmatrix}
\beta \cV_A + S^2 \Gamma    & S \frac{\partial\Delta}{\partial\Sigma}\\
S \frac{\partial\Delta}{\partial\Sigma} & \cV_{\Sigma\Sigma}
\end{pmatrix}
\begin{pmatrix}
\sigma - \Sigma\\
\eta
\end{pmatrix},
\end{split}
\end{align}
where $\bfv$ is the vega-gamma-vanna-volga vector of the non-traded option (cf.~\eqref{eqn:VGVV:v}).
\end{lemma}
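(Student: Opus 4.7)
The plan is to apply It\^o's formula separately to $C_t=\cC(t,S_t,\Sigma_t)$ and $V_t=\cV(t,\bfX_t)=\cV(t,S_t,A_t,M_t,\Sigma_t)$ under a fixed $P\in\fP$, use the PDEs \eqref{eqn:C:PDE} and \eqref{eqn:V:PDE} to eliminate the time derivatives $\cC_t$ and $\cV_t$, invoke the drift condition \eqref{eqn:drift condition} to conclude that $C$ is a local $P$-martingale, and finally assemble the expression $\diff Y^{\bfupsilon,P}=\theta\,\diff S+\phi\,\diff C-\diff V$.

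Concretely, by \eqref{eqn:covariation:S and IV} the quadratic (co)variations are $\diff\langle S\rangle=S^2(\sigma^P)^2\,\diff t$, $\diff\langle\Sigma\rangle=((\eta^P)^2+\xi^P)\,\diff t$, $\diff\langle S,\Sigma\rangle=S\sigma^P\eta^P\,\diff t$, and the decomposition $\Sigma=\Sigma^{\contlocmartpart,P}+\int_0^\cdot\nu^P_u\,\diff u$ holds under $P$. It\^o's formula applied to $C_t$, combined with the Black--Scholes PDE \eqref{eqn:C:PDE} (to substitute $\cC_t=-\tfrac12\Sigma^2S^2\cC_{SS}$), yields $\diff C_t=\cC_S\,\diff S_t+\cC_\Sigma\,\diff\Sigma^{\contlocmartpart,P}_t+b^\cC(t,\bfX_t;\bfzeta^P_t)\,\diff t$; the drift condition \eqref{eqn:drift condition} makes the last term vanish, so $\diff C_t=\cC_S\,\diff S_t+\cC_\Sigma\,\diff\Sigma^{\contlocmartpart,P}_t$.

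For $V_t$, It\^o's formula produces contributions from $\diff S$, $\diff\Sigma$, $\diff A$ (with drift $(\alpha+\tfrac12\beta(\sigma^P)^2)\,\diff t$ and martingale part $\gamma\,\diff S$, using \eqref{eqn:A:dynamics}), $\diff M$, and all quadratic (co)variations. Since $M$ has zero quadratic variation, only $\diff\langle S\rangle$, $\diff\langle S,\Sigma\rangle$, and $\diff\langle\Sigma\rangle$ enter, which after substituting $\diff\langle A\rangle=\gamma^2\,\diff\langle S\rangle$, $\diff\langle S,A\rangle=\gamma\,\diff\langle S\rangle$, $\diff\langle A,\Sigma\rangle=\gamma\,\diff\langle S,\Sigma\rangle$ bundle into the effective greeks $\Gamma$ and $\frac{\partial\Delta}{\partial\Sigma}$. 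Collecting the $\diff S$ coefficients produces $\cV_S+\gamma\cV_A=\Delta$; the $\diff M$ coefficient is $\cV_M+\delta\cV_A$, which vanishes on $\{S\geq M\}$ by the boundary condition in \eqref{eqn:V:PDE}, and this is precisely the support of $\diff M$ since $M$ increases only at $\{S=M\}$. The $\diff\Sigma^{\contlocmartpart,P}$ coefficient is $\cV_\Sigma$. Substituting $\cV_t=-(\alpha+\tfrac12\beta\Sigma^2)\cV_A-\tfrac12\Sigma^2 S^2\Gamma$ from \eqref{eqn:V:PDE} into the drift term and grouping by $(\sigma^P)^2-\Sigma^2$ delivers exactly $b^\cV(t,\bfX_t;\bfzeta^P_t)\,\diff t$ in the first form of \eqref{eqn:lem:Y:bV}. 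Subtracting $\diff V_t$ from $\theta\,\diff S_t+\phi\,\diff C_t$ and rearranging yields \eqref{eqn:lem:Y:dynamics}.

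It remains to verify the matrix/quadratic form of $b^\cV$. Expanding
\[
\bfv^\tr(\bfzeta-\bfzeta^0(\Sigma))+\tfrac12\begin{pmatrix}\sigma-\Sigma\\\eta\end{pmatrix}^{\!\tr}\begin{pmatrix}\beta\cV_A+S^2\Gamma & S\tfrac{\partial\Delta}{\partial\Sigma}\\ S\tfrac{\partial\Delta}{\partial\Sigma}&\cV_{\Sigma\Sigma}\end{pmatrix}\begin{pmatrix}\sigma-\Sigma\\\eta\end{pmatrix}
\]
and using the two identities $\Sigma(\sigma-\Sigma)+\tfrac12(\sigma-\Sigma)^2=\tfrac12(\sigma^2-\Sigma^2)$ and $\Sigma+(\sigma-\Sigma)=\sigma$ recovers the first expression for $b^\cV$ in \eqref{eqn:lem:Y:bV}. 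There is no genuine obstacle here; the only delicate bookkeeping points are (i) isolating the continuous local martingale part $\Sigma^{\contlocmartpart,P}$ cleanly from the $\nu^P$ drift, and (ii) ensuring the $(\cV_M+\delta\cV_A)\,\diff M$ term drops out, which follows from the boundary condition in \eqref{eqn:V:PDE} combined with the support of $\diff M$.
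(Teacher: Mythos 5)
Your proposal is correct and follows essentially the same route as the paper: apply It\^o's formula to $C$ and $V$, use the PDEs \eqref{eqn:C:PDE} and \eqref{eqn:V:PDE} together with the drift condition \eqref{eqn:drift condition} (and the boundary condition to kill the $\diff M$ term), and then assemble $\diff Y^{\bfupsilon,P}=\theta\,\diff S+\phi\,\diff C-\diff V$. Your direct algebraic verification of the quadratic-form representation of $b^\cV$ is the same computation the paper packages as an (exact, since $b^\cV$ is quadratic in $\bfzeta$) Taylor expansion around $\bfzeta^0(\Sigma)$.
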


\begin{proof}
Fix $\bfupsilon = (\theta,\phi) \in \fY$ and $P \in \fP$ and recall from \eqref{eqn:Y:definition} and \eqref{eqn:V:definition} that
\begin{align}
\label{eqn:lem:Y:pf:Y dynamics}
\diff Y^{\bfupsilon,P}_t
&= \theta_t \dd S_t + \phi_t \dd C_t - \diff V_t,
\end{align}
where $V_t = \cV(t,S_t,A_t,M_t,\Sigma_t)$. Thus, it remains to compute the dynamics of $C$ and $V$ under $P$.

First, by \eqref{eqn:C:definition}, It\^o's formula (under $P$), and the drift condition \eqref{eqn:drift condition}, we have
\begin{align}
\label{eqn:lem:Y:pf:C dynamics}
\diff C_t
&= \cC_S \dd S_t + \cC_\Sigma \dd \Sigma^{\contlocmartpart,P}_t.
\end{align}
Second, applying It\^o's formula to $V_t = \cV(t,S_t,A_t,M_t,\Sigma_t)$ and using the PDE \eqref{eqn:V:PDE} to substitute $\cV_t = \cV_t(t,\bfX_t)$ and to eliminate the $\diff M_t$-term, we arrive at
\begin{align}
\diff V_t
\label{eqn:lem:Y:pf:V dynamics:2}
&= \Delta \dd S_t + \cV_\Sigma \dd \Sigma^{\contlocmartpart,P}_t + b^\cV(\bfzeta^P_t) \dd t.
\end{align}
Finally, inserting \eqref{eqn:lem:Y:pf:C dynamics} and \eqref{eqn:lem:Y:pf:V dynamics:2} into \eqref{eqn:lem:Y:pf:Y dynamics} yields \eqref{eqn:lem:Y:dynamics}. The last equality in the definition \eqref{eqn:lem:Y:bV} of $b^\cV$ is the Taylor expansion of $b^\cV(\bfzeta)$ around $\bfzeta^0(\Sigma)$ and can be verified by computing the gradient and the Hessian of $b^\cV(\bfzeta)$ at $\bfzeta^0(\Sigma)$.
\end{proof}

We next analyse the dynamics of the P\&L process $Y^{\bfupsilon^\star,P}$ corresponding to the delta-vega hedge~$\bfupsilon^\star$. To this end we define, for each $(t,\bfx) \in \bfD^0$:
\begin{align}
\label{eqn:delta-vega hedge:feedback}
\bfupsilon^\star(t,\bfx)
&= \left( \Delta - \frac{\cV_\Sigma}{\cC_\Sigma} \cC_S, \frac{\cV_\Sigma}{\cC_\Sigma} \right).
\end{align}
Note that, with a slight abuse of notation, we use the symbol $\bfupsilon^\star$ both for the function defined in \eqref{eqn:delta-vega hedge:feedback} and the delta-vega hedge defined in Theorem~\ref{thm:main result}. This is, of course, motivated by the relationship $\bfupsilon^\star_t = \bfupsilon^\star(t,\bfX_t)$.\footnote{With a slight abuse of notation, $\bfupsilon^\star_t$ always denotes the time-$t$ value of the \emph{process} $\bfupsilon^\star$ and not the partial derivative of the \emph{function} $\bfupsilon^\star$ with respect to the first variable.} The following corollary to Lemma~\ref{lem:Y} shows that the P\&L process $Y^{\bfupsilon^\star,P}$ corresponding to the  delta-vega hedge $\bfupsilon^\star$ has no local martingale part and is bounded, uniformly in $P\in\fP$.

\begin{corollary}
\label{cor:Y:delta-vega}
There are constants $\ul Y, \ol Y \in \RR$ such that for each $P \in \fP$,
\begin{align*}
Y^{\bfupsilon^\star,P} \in [\ul Y, \ol Y]
\quad \diff t \times P\text{-a.e.}
\end{align*}
Moreover, under each $P \in \fP$,
\begin{align*}
\diff Y^{\bfupsilon^\star,P}_t
&= -b^\cV(t,\bfX_t;\bfzeta^P_t) \dd t,
\end{align*}
where $b^{\cV}$ is defined in \eqref{eqn:lem:Y:bV}.
\end{corollary}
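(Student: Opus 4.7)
The corollary has two assertions: (i) the drift-only SDE $\diff Y^{\bfupsilon^\star,P}_t = -b^\cV(t,\bfX_t;\bfzeta^P_t)\dd t$, and (ii) uniform two-sided bounds on $Y^{\bfupsilon^\star,P}$ over $P\in\fP$. My plan is to derive (i) by direct substitution into Lemma~\ref{lem:Y} and then deduce (ii) from bounds on $b^\cV$ supplied by Assumption~\ref{ass:main result}\ref{ass:main result:model set}--\ref{ass:main result:option}.

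For (i), I plug the feedback representation \eqref{eqn:delta-vega hedge:feedback} into the P\&L dynamics \eqref{eqn:lem:Y:dynamics}. The coefficient of $\dd S_t$ is
$\theta^\star_t-\bigl(\Delta(t,\bfX_t)-\phi^\star_t\cC_S(t,S_t,\Sigma_t)\bigr)=0$
by construction of $\theta^\star$, while the coefficient of $\dd\Sigma^{\contlocmartpart,P}_t$ is
$\phi^\star_t\cC_\Sigma(t,S_t,\Sigma_t)-\cV_\Sigma(t,\bfX_t)=\tfrac{\cV_\Sigma}{\cC_\Sigma}\cC_\Sigma-\cV_\Sigma=0$
by construction of $\phi^\star$. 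Hence both stochastic integrals vanish and only the drift $-b^\cV(t,\bfX_t;\bfzeta^P_t)\dd t$ remains, which is exactly the stated identity. Note that $\phi^\star_t$ is well defined since $\cC_\Sigma\neq 0$ on $(0,T)\times\RR_+\times[\ul\Sigma,\ol\Sigma]$ by Assumption~\ref{ass:main result}\ref{ass:main result:call}.

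For (ii), I show that $b^\cV(t,\bfX_t;\bfzeta^P_t)$ is uniformly bounded in $(t,\omega,P)$. By Assumption~\ref{ass:main result}\ref{ass:main result:model set}, $\Sigma_t\in[\ul\Sigma,\ol\Sigma]$ $\diff t\times P$-a.e.\ so that $(t,\bfX_t)\in\bfD$ and $\bfzeta^P_t$ lies in the bounded set $\bfZ=[\ul\nu,\ol\nu]\times[\ul\sigma,\ol\sigma]\times[\ul\eta,\ol\eta]\times[0,\ol\xi]$. Using the matrix-form representation of $b^\cV$ in \eqref{eqn:lem:Y:bV}, the components of the vega-gamma-vanna-volga vector $\bfv$ and the four entries of the $2\times 2$ Hessian-like matrix
$\bigl(\begin{smallmatrix} \beta\cV_A+S^2\Gamma & S\,\partial\Delta/\partial\Sigma\\ S\,\partial\Delta/\partial\Sigma & \cV_{\Sigma\Sigma}\end{smallmatrix}\bigr)$
are all bounded by $K_\cV$ on $\bfD$ via \eqref{eqn:ass:V}, while $|\sigma-\Sigma|\leq\ol\sigma-\ul\sigma$, $|\eta|\leq\max(|\ul\eta|,\ol\eta)$, and $|\bfzeta-\bfzeta^0(\Sigma)|$ is uniformly bounded. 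Consequently there exists a constant $M\geq 0$ with $|b^\cV(t,\bfX_t;\bfzeta^P_t)|\leq M$ $\diff t\times P$-a.e.\ for every $P\in\fP$.

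Combining (i) with this bound and the initial value $Y^{\bfupsilon^\star,P}_0=Y_0+V_0-V_0=Y_0$ gives
$Y^{\bfupsilon^\star,P}_t=Y_0-\int_0^t b^\cV(u,\bfX_u;\bfzeta^P_u)\dd u\in[Y_0-MT,\;Y_0+MT]$
$\diff t\times P$-a.e., so $\ul Y:=Y_0-MT$ and $\ol Y:=Y_0+MT$ work uniformly in $P\in\fP$. There is no real obstacle here; the only point that requires some care is recognising that the quadratic part of $b^\cV$ in \eqref{eqn:lem:Y:bV} is controlled by precisely the four greek-bounds posited in Assumption~\ref{ass:main result}\ref{ass:main result:option} combined with the uniform bounds on $(\sigma-\Sigma,\eta)$ from Assumption~\ref{ass:main result}\ref{ass:main result:model set}.
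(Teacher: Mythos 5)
Your proposal is correct and follows essentially the same route as the paper: substitute the delta-vega feedback form into the dynamics of Lemma~\ref{lem:Y} so that the local martingale part vanishes by construction, and then bound the remaining drift $b^\cV(t,\bfX_t;\bfzeta^P_t)$ uniformly in $P$ using the compactness of $\bfZ$ and $[\ul\Sigma,\ol\Sigma]$ from Assumption~\ref{ass:main result}~\ref{ass:main result:model set} together with the greek bounds \eqref{eqn:ass:V} of Assumption~\ref{ass:main result}~\ref{ass:main result:option}. You merely spell out the uniform bound $M$ and the explicit choice $\ul Y = Y_0 - MT$, $\ol Y = Y_0 + MT$, which the paper leaves implicit.
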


\begin{proof}
By construction of $\bfupsilon^\star_t = \bfupsilon^\star(t,\bfX_t)$, the local martingale part in the dynamics \eqref{eqn:lem:Y:dynamics} of $Y^{\bfupsilon^\star,P}$ is zero for each $P \in \fP$. Thus, it suffices to find a uniform bound (independent of $P \in \fP$) for the drift coefficient $b^\cV(t,\bfX_t;\bfzeta^P_t)$. But this is immediate from Assumption~\ref{ass:main result}~\ref{ass:main result:model set} and \ref{ass:main result:option}.
\end{proof}

Lemma~\ref{lem:Y} together with the covariations of $S$ and $\Sigma$ in \eqref{eqn:covariation:S and IV} and the semimartingale decomposition \eqref{eqn:A:dynamics} of $A$ specifies the joint dynamics of the process $(S,A,M,\Sigma,Y^{\bfupsilon,P})$ under $P\in\fP$. This allows to write down the Hamilton--Jacobi--Bellman--Isaacs equation corresponding to the SDG \eqref{eqn:value}: for each $\psi > 0$, the HJBI equation reads as
\begin{align}
\label{eqn:HJBI}
w^\psi_t(t,\bfx,y) + \sup_{\bfupsilon \in \RR^2}\inf_{\bfzeta \in \bfZ(t,\bfx)} H^\psi(t,\bfx,y;\bfupsilon,\bfzeta)
&= 0,
\end{align}
where the \emph{Hamiltonian}\footnote{Note that the definition of $H^\psi$ already contains the candidate first-order expansion $w^\psi$ of the value function and thus does not feature a general solution function and its derivatives as arguments.} $H^\psi:\bfD^0\times\RR\times\RR^2\times\bfZ^0 \to \RR$ is given by
\begin{align}
H^\psi(t,\bfx,y;\bfupsilon,\bfzeta)
&= \frac{1}{\psi}U'(y)f(\Sigma,\bfzeta)
+ \nu w^\psi_\Sigma + (\alpha + \frac{1}{2}\beta \sigma^2) w^\psi_A - b^\cV(\bfzeta) w^\psi_Y\notag\\
&\qquad+ \frac{1}{2} \sigma^2 S^2(w^\psi_{SS} + 2\gamma w^\psi_{SA} + \gamma^2 w^\psi_{AA})\notag\\
&\qquad+ \sigma S \eta (w^\psi_{S\Sigma} + \gamma w^\psi_{A\Sigma})\notag\\
&\qquad+ \frac{1}{2}(\eta^2 + \xi) w^\psi_{\Sigma\Sigma}\notag\\
&\qquad+ \sigma^2 S^2 [\theta - (\Delta - \phi \cC_S)] (w^\psi_{SY} + \gamma w^\psi_{AY})
+ \sigma S \eta [\phi \cC_\Sigma - \cV_\Sigma](w^\psi_{SY} + \gamma w^\psi_{AY})\notag\\
&\qquad+ \sigma S \eta [\theta - (\Delta - \phi \cC_S)] w^\psi_{\Sigma Y}
+ [\phi \cC_\Sigma - \cV_\Sigma](\eta^2 + \xi) w^\psi_{\Sigma Y}\notag\\
&\qquad+ \frac{1}{2}\sigma^2 S^2 [\theta - (\Delta - \phi \cC_S)]^2 w^\psi_{YY}
+ \frac{1}{2} (\eta^2 + \xi) [\phi \cC_\Sigma - \cV_\Sigma]^2 w^\psi_{YY}\notag\\
\label{eqn:Hamiltonian}
&\qquad\qquad+ \sigma S \eta [\theta - (\Delta - \phi \cC_S)] [\phi \cC_\Sigma - \cV_\Sigma] w^\psi_{YY}.
\end{align}
We emphasise that our candidate value function $w^\psi$ defined in \eqref{eqn:value function} does not solve the HJBI equation \eqref{eqn:HJBI} \emph{exactly}. However, a key step in the proof of the two inequalities \eqref{eqn:SDG inequalities1}--\eqref{eqn:SDG inequalities2} is to show that $w^\psi$ is \emph{asymptotically} (in a suitable sense) a solution to \eqref{eqn:HJBI}; cf.~Lemmas~\ref{lem:HJBI:lower bound} and \ref{lem:HJBI:upper bound} below.

We close this preliminary section by providing an auxiliary lemma that allows to estimate quantities like $b^\cV(t,\bfx;\bfzeta)$ or $b^\cC(t,\bfx;\bfzeta)$ in terms of $\enorm{\bfzeta - \bfzeta^0(\Sigma)}$.

\begin{lemma}
\label{lem:general drift function}
Define the function $q:\RR_+\times\RR^4\times\RR^4 \to \RR$ by
\begin{align}
\label{eqn:lem:general drift function:q}
q(\Sigma,\bfa,\bfzeta)
&= \nu a_1 + \frac{1}{2} a_2 (\sigma^2 - \Sigma^2) + \sigma\eta a_3 + \frac{1}{2}(\eta^2 + \xi)a_4,
\end{align}
where $\bfa = (a_1,a_2,a_3,a_4)^\tr$ and $\bfzeta = (\nu, \sigma, \eta, \xi)^\tr$. Then: 
\begin{align*}
\vert q(\Sigma,\bfa,\bfzeta) \vert 
&\leq \max(1,\Sigma) \enorm{\bfa} \enorm{\bfzeta - \bfzeta^0(\Sigma)} + \enorm{\bfa} \enorm{\bfzeta - \bfzeta^0(\Sigma)}^2.
\end{align*}
\end{lemma}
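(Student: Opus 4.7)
The plan is to perform a Taylor-type split of $q(\Sigma,\bfa,\bfzeta)$ around the reference point $\bfzeta^0(\Sigma)=(0,\Sigma,0,0)^\tr$ into a linear part and a quadratic part in the deviation $\bfzeta-\bfzeta^0(\Sigma)=(\nu,\sigma-\Sigma,\eta,\xi)^\tr$, then to estimate each part separately by Cauchy--Schwarz.

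First, writing $\wt\sigma := \sigma-\Sigma$, the identities $\sigma^2-\Sigma^2 = 2\Sigma\wt\sigma + \wt\sigma^2$ and $\sigma\eta = \Sigma\eta + \wt\sigma\eta$ transform \eqref{eqn:lem:general drift function:q} into
\begin{align*}
q(\Sigma,\bfa,\bfzeta)
&= \bigl(\nu a_1 + \Sigma\wt\sigma\, a_2 + \Sigma\eta\, a_3 + \tfrac{1}{2}\xi\, a_4\bigr) + \bigl(\tfrac{1}{2}\wt\sigma^2 a_2 + \wt\sigma\eta\, a_3 + \tfrac{1}{2}\eta^2 a_4\bigr).
\end{align*}
The first bracket is the inner product $\bfa_\Sigma^\tr(\bfzeta-\bfzeta^0(\Sigma))$ with $\bfa_\Sigma := (a_1,\Sigma a_2,\Sigma a_3,\tfrac{1}{2}a_4)^\tr$, so Cauchy--Schwarz gives a bound by $\enorm{\bfa_\Sigma}\,\enorm{\bfzeta-\bfzeta^0(\Sigma)}$; the componentwise estimate $\enorm{\bfa_\Sigma}^2 \leq \max(1,\Sigma^2)\enorm{\bfa}^2$ yields the claimed $\max(1,\Sigma)\enorm{\bfa}\enorm{\bfzeta-\bfzeta^0(\Sigma)}$ contribution.

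For the quadratic part, I would write it as $\tfrac{1}{2}(\wt\sigma,\eta)\,M\,(\wt\sigma,\eta)^\tr$ with the symmetric matrix $M=\begin{pmatrix} a_2 & a_3\\ a_3 & a_4\end{pmatrix}$. Since $\wt\sigma^2+\eta^2 \leq \enorm{\bfzeta-\bfzeta^0(\Sigma)}^2$ and $\norm{M}_F = \sqrt{a_2^2+2a_3^2+a_4^2} \leq \sqrt{2}\,\enorm{\bfa}$, the elementary bound $|\bfz^\tr M \bfz| \leq \norm{M}_F \enorm{\bfz}^2$ yields a bound by $\tfrac{\sqrt{2}}{2}\enorm{\bfa}\enorm{\bfzeta-\bfzeta^0(\Sigma)}^2 \leq \enorm{\bfa}\enorm{\bfzeta-\bfzeta^0(\Sigma)}^2$.

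Combining the two estimates by the triangle inequality gives the asserted bound. There is no real obstacle here: the only points requiring a little care are getting the right constants (in particular ensuring the quadratic coefficient comes out at most $1$, which is why bounding via $\norm{M}_F$ rather than term-by-term is preferable), and keeping track of the factor $\max(1,\Sigma)$ in the linear piece so that the case $\Sigma<1$ is handled uniformly.
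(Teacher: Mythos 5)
Your proof is correct and follows essentially the same route as the paper: the same split of $q$ into the linear term $\bfa_\Sigma^\tr(\bfzeta-\bfzeta^0(\Sigma))$ and the quadratic form in $(\sigma-\Sigma,\eta)$, with Cauchy--Schwarz for the former and a Frobenius-norm bound for the latter. Your constants are even slightly sharper (the factor $\tfrac{\sqrt{2}}{2}$ in the quadratic piece), but the argument is the same as the paper's.
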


\begin{proof}
Fix $\Sigma \in \RR_+$, $\bfa \in \RR^4$, and $\bfzeta\in\RR^4$. As $q$ is quadratic in $\bfzeta$, we can recast it in matrix form:
\begin{align}
\label{eqn:lem:general drift function:pf:matrix form}
q(\Sigma,\bfa,\bfzeta)
&=
\begin{pmatrix}
a_1\\ \Sigma a_2 \\ \Sigma a_3 \\ \frac{1}{2} a_4
\end{pmatrix}^\tr
\left(\bfzeta - \bfzeta^0(\Sigma)\right)
+ \frac{1}{2}
\begin{pmatrix}
\sigma - \Sigma\\\eta
\end{pmatrix}^\tr
\begin{pmatrix}
a_2 & a_3 \\ a_3 & a_4
\end{pmatrix}
\begin{pmatrix}
\sigma - \Sigma\\\eta
\end{pmatrix}.
\end{align}
Using the Cauchy--Schwarz inequality, the absolute value of the first summand on the right-hand side of \eqref{eqn:lem:general drift function:pf:matrix form} is easily estimated from above by $\max(1,\Sigma) \enorm{\bfa} \enorm{\bfzeta - \bfzeta^0(\Sigma)}$. Likewise, using also the compatibility of the Frobenius norm with the Euclidean norm, the absolute value of the second summand is dominated by
\begin{align*}
&\frac{1}{2}
\norm{
\begin{pmatrix}
a_2 & a_3 \\ a_3 & a_4
\end{pmatrix}
}_F
\enorm{
\begin{pmatrix}
\sigma - \Sigma\\\eta
\end{pmatrix}
}^2
\leq \enorm{\bfa}
\enorm{\bfzeta - \bfzeta^0(\Sigma)}^2.\qedhere
\end{align*}
\end{proof}

\subsubsection{Estimates for the Hamiltonian}
\label{sec:proofs:Hamiltonian}

In order to prove that the candidate value function is -- asymptotically -- a solution to the HJBI equation \eqref{eqn:HJBI}, we need several estimates for the Hamiltonian $H^\psi$ defined in \eqref{eqn:Hamiltonian}. To this end, we decompose it into four parts: 
\begin{align}
\label{eqn:Hamiltonian:decomposition}
H^\psi(t,\bfx,y;\bfupsilon,\bfzeta)
&= U'(y) H^\psi_1(t,\bfx;\bfzeta)
+ U'(y) H^\psi_2(t,\bfx,y;\bfzeta)
- U'(y) H_3(t,\bfx;\bfzeta)\psi 
- H^\psi_4(t,\bfx;\bfupsilon,\bfzeta),
\end{align}
where
\begin{align}
\label{eqn:H1}
H^\psi_1(t,\bfx;\bfzeta)
&:=  \frac{1}{2\psi}(\bfzeta-\bfzeta^0(\Sigma))^\tr \Psi^{-1} (\bfzeta-\bfzeta^0(\Sigma))
- \bfv(t,\bfx)^\tr (\bfzeta - \bfzeta^0(\Sigma)),\\
\notag
H^\psi_2(t,\bfx,y;\bfzeta)
&:= -\frac{1}{2}
\begin{pmatrix}
\sigma - \Sigma\\
\eta
\end{pmatrix}^\tr
\begin{pmatrix}
\beta \cV_A + S^2 \Gamma                & S \frac{\partial\Delta}{\partial\Sigma}\\
S \frac{\partial\Delta}{\partial\Sigma} & \cV_{\Sigma\Sigma}
\end{pmatrix}
\begin{pmatrix}
\sigma - \Sigma\\
\eta
\end{pmatrix}
+ \frac{U''(y)}{U'(y)} b^\cV(\bfzeta)\wt w \psi,\\
\notag
\begin{split}
H_3(t,\bfx;\bfzeta)
&:= \nu \wt w_\Sigma
+ (\alpha +\frac{1}{2}\beta\sigma^2) \wt w_A
+ \frac{1}{2}\sigma^2 S^2 (\wt w_{SS} + 2\gamma \wt w_{SA} + \gamma^2 \wt w_{AA})\\
&\qquad+\sigma \eta S (\wt w_{S\Sigma}
+ \gamma \wt w_{A\Sigma})
+ \frac{1}{2}(\eta^2 + \xi) \wt w_{\Sigma\Sigma},
\end{split}\\
\notag
\begin{split}
H^\psi_4(t,\bfx,y;\bfupsilon,\bfzeta)
&:= -\frac{w^\psi_{YY}}{2}
\begin{pmatrix}
\sigma S (\theta - (\Delta - \phi \cC_S))\\
\phi \cC_\Sigma - \cV_\Sigma
\end{pmatrix}^\tr
\begin{pmatrix}
1 & \eta\\
\eta & \eta^2 + \xi
\end{pmatrix}
\begin{pmatrix}
\sigma S (\theta - (\Delta - \phi \cC_S))\\
\phi \cC_\Sigma - \cV_\Sigma
\end{pmatrix}\\
&\qquad+ \psi U''(y)
\begin{pmatrix}
\sigma S (\wt w_S + \gamma \wt w_A)\\
\wt w_\Sigma
\end{pmatrix}^\tr
\begin{pmatrix}
1 & \eta\\
\eta & \eta^2 + \xi
\end{pmatrix}
\begin{pmatrix}
\sigma S (\theta - (\Delta - \phi \cC_S))\\
\phi \cC_\Sigma - \cV_\Sigma
\end{pmatrix}.
\end{split}
\end{align}
$H^\psi_1$ includes the penalty term (cf.~the definition of $f$ in \eqref{eqn:penalty function}) and the linear $O(1)$ part of $b^\cV(\bfzeta)w^\psi_Y$; $H^\psi_2$ contains the quadratic $O(1)$ part and the $O(\psi)$ part of $b^\cV(\bfzeta)w^\psi_Y$; $H^\psi_4$ collects all second-order partial derivatives of $w^\psi$ that involve at least one partial derivative with respect to $Y$; and $H_3$ takes care of all remaining partial derivatives of $w^\psi$.

For later reference, we note that by the definition of $H_3$ and $\bfzeta^0$, the PDE \eqref{eqn:cash equivalent:PDE} for $\wt w$ can be written as
\begin{align}
\label{eqn:cash equivalent:PDE:H3}
\wt w_t(t,\bfx) + H_3(t,\bfx;\bfzeta^0(\Sigma)) + \frac{1}{2}\wt g(t,\bfx)
&=0\quad \text{for } (t,\bfx)\in\bfD.
\end{align}
Moreover, for every $(t,\bfx,y)\in\bfD \times \RR$ and $\bfzeta \in \bfZ$,
\begin{align}
\label{eqn:H1 H2 H4:zero}
H^\psi_1(t,\bfx;\bfzeta^0(\Sigma))
&= 0,\quad
H^\psi_2(t,\bfx;\bfzeta^0(\Sigma))
= 0,\quad
H^\psi_4(t,\bfx,y;\bfupsilon^\star(t,\bfx),\bfzeta)
=0,
\end{align}
by construction of the reference feedback control $\bfzeta^0$ (cf.~Definition~\ref{def:reference model}) and the delta-vega hedge $\bfupsilon^\star$ (cf.~\eqref{eqn:delta-vega hedge:feedback}).

\begin{remark}
\label{rem:proof}
Recall that the HJBI equation \eqref{eqn:HJBI} involves a minimisation over $\bfzeta\in \bfZ(t,\bfx)$ and a maximisation over $\bfupsilon \in \RR^2$. The strategy variable $\bfupsilon$ only shows up in the $H^\psi_4$ term. Moreover, using an ansatz of the form $\bfzeta = \bfzeta^0(\Sigma) + \wt\bfzeta \psi$, one can check that $\wt\bfzeta$ only affects the $O(\psi)$ term of $H^\psi$ through $H^\psi_1$ (provided that $\bfupsilon = \bfupsilon^\star$ so that the $H^\psi_4$ term vanishes; cf.~\eqref{eqn:H1 H2 H4:zero}). The impact of $\wt\bfzeta$ through $H^\psi_2$ and $H_3$ only appears at higher orders. This distinction is reflected in the proofs of this section as follows.

On the one hand, the estimates for the terms $H^\psi_2$ and $H_3$ in Propositions~\ref{prop:HJBI:H2}--\ref{prop:HJBI:H3} and Corollary~\ref{cor:HJBI:H2 and H3} are rather direct and provide simultaneously asymptotic upper and lower bounds. On the other hand, the proofs of the estimates for $H^\psi_4$ and, in particular, $H^\psi_1$ are more difficult as the corresponding bounds arise from optimisation problems over the strategy variables and the controls, respectively. The asymptotic bound for $H^\psi_4$ is the easier one because $H^\psi_4$ is quadratic in $\bfupsilon$ and the optimisation is unconstrained. In contrast, the asymptotic bound for $H^\psi_1$ in Proposition~\ref{prop:HJBI:H1} arises from the linearly constrained quadratic programming problem \eqref{eqn:heuristic:LCQP problem}. An additional difficulty stems from the fact that we need this bound to hold for controls $\bfzeta \in \bfZ(t,\bfx)$ that satisfy the nonlinear contraint $b^\cC(t,\bfx;\bfzeta) = 0$ instead of the linear one (cf.~Proposition~\ref{prop:HJBI:H1}~(a)).
\end{remark}

We first provide the solution to a linearly constrained quadratic programming problem  involving the vega-gamma-vanna-volga vectors $\bfc(t,\bfx)$ and $\bfv(t,\bfx)$ that lies at the core of the minimisation part of the HJBI equation. In particular, the candidate feedback control $\bfzeta^\psi$ (cf.~\eqref{eqn:candidate control:short}) is a suitably modified version of the minimiser $\bfzeta^{\psi*}$ (cf.~\eqref{eqn:lem:HJBI:H1LCQP:minimiser} below) of this quadratic programming problem; both controls differ only by the indicator $\1_{\lbrace \ul\Sigma < \Sigma < \ol\Sigma \rbrace}$ that ensures that $\bfzeta^{\psi}$ falls back to the reference feedback control $\bfzeta^0(\Sigma)$ once the implied volatility hits the boundary of $[\ul\Sigma,\ol\Sigma]$. Recall the definitions of $\Psi$, $\bfc$, $\bfv$, $\lambda$, $\mu$, $\wt\bfzeta$, and $\bfzeta^\psi$ in \eqref{eqn:Psi} and \eqref{eqn:VGVV:c}--\eqref{eqn:candidate control:short}.

\begin{lemma}
\label{lem:HJBI:H1LCQP}
For each $(t,\bfx) \in \bfD$ and $\psi > 0$, consider the linearly constrained minimisation problem
\begin{align}
\label{eqn:lem:HJBI:H1LCQP:problem}
\text{minimise } H^\psi_1(t,\bfx;\bfzeta)
\quad\text{subject to } \bfzeta \in \bfZ^0_\lin(t,\bfx).
\end{align}
\begin{enumerate}
\item For each $(t,\bfx) \in \bfD$ and $\psi > 0$, we have
\begin{align}
\label{eqn:lem:HJBI:H1LCQP:minimum}
\min_{\bfzeta \in \bfZ^0_\lin(t,\bfx)} H^\psi_1(t,\bfx;\bfzeta)
&= -\frac{1}{2}\wt g(t,\bfx) \psi
\end{align}
and the minimum is attained at
\begin{align}
\label{eqn:lem:HJBI:H1LCQP:minimiser}
\bfzeta^{\psi*}(t,\bfx)
&= \bfzeta^0(\Sigma) + \psi \wt\bfzeta(t,\bfx).
\end{align}
In particular, as $\bfzeta^{\psi*} \in \bfZ^0_\lin(t,\bfx)$, $\bfc(t,\bfx)^\tr \wt\bfzeta(t,\bfx) = 0$ and $\vec\bfe_4^\tr\wt\bfzeta(t,\bfx)\geq 0$.

\item For each $(t,\bfx)\in\bfD$ and $\psi > 0$, $(\lambda(t,\bfx),\mu(t,\bfx))$ is a Lagrange multiplier for \eqref{eqn:lem:HJBI:H1LCQP:problem} (independent of $\psi$), i.e.,
\begin{align*}
-\frac{1}{2}\wt g(t,\bfx) \psi 
&= \inf_{\bfzeta\in\RR^4} L^\psi_1(t,\bfx;\bfzeta,\lambda(t,\bfx),\mu(t,\bfx)),
\end{align*}
where
\begin{align}
\label{eqn:lem:HJBI:H1LCQP:Lagrangian}
L^\psi_1(t,\bfx;\bfzeta,\lambda',\mu')
&= H^\psi_1(t,\bfx;\bfzeta) + \lambda' \bfc(t,\bfx)^\tr (\bfzeta - \bfzeta^0(\Sigma)) - \mu' \vec\bfe_4^\tr (\bfzeta-\bfzeta^0(\Sigma))
\end{align}
is the Lagrangian corresponding to the constrained minimisation problem \eqref{eqn:lem:HJBI:H1LCQP:problem}.

\item There is $K_{\wt g} > 0$ such that $0 \leq \wt g \leq K_{\wt g}$ on $\bfD$.

\item There is $K_\bfzeta \geq 1$ such that for every $(t,\bfx) \in \bfD$ and $\psi > 0$,
\begin{align*}
\enorm{\bfzeta^\psi(t,\bfx) - \bfzeta^0(\Sigma)}
&\leq K_\bfzeta \psi.
\end{align*}

\item For every $(t,\bfx)\in\bfD$ and $\psi > 0$, 
\begin{align*}
H^\psi_1(t,\bfx;\bfzeta^\psi(t,\bfx))
&= -\frac{1}{2}\wt g(t,\bfx) \1_{\lbrace \Sigma \in (\ul\Sigma,\ol\Sigma)\rbrace} \psi.
\end{align*}

\item There is $K_\lambda \in L^2_\fP$ such that for every $(t,\bfx)\in\bfD$,
\begin{align}
\label{eqn:lem:HJBI:H1LCQP:K lambda}
\vert \lambda(t,\bfx) \vert
\norm{
\begin{pmatrix}
S^2 \cC_{SS}    & S \cC_{S\Sigma}\\
S \cC_{S\Sigma} & \cC_{\Sigma\Sigma}
\end{pmatrix}
}_F
&\leq K_\lambda(t,\bfx).
\end{align}
\end{enumerate}
\end{lemma}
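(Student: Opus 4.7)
The six claims are all consequences of analysing the linearly constrained strictly convex quadratic program \eqref{eqn:lem:HJBI:H1LCQP:problem} via KKT duality, after a rescaling that removes the $\psi$-dependence.

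\emph{Reduction and KKT analysis (parts (a), (b)).} The first step is to substitute $\bfzeta = \bfzeta^0(\Sigma) + \psi\wt\bfzeta'$, rewriting $H^\psi_1$ as $\psi\bigl(\tfrac{1}{2} (\wt\bfzeta')^\tr \Psi^{-1}\wt\bfzeta' - \bfv^\tr\wt\bfzeta'\bigr)$; the conditions defining $\bfZ^0_\lin$ become the linear equality $\bfc^\tr\wt\bfzeta' = 0$ and the sign constraint $\wt\zeta_4' \geq 0$. Since $\wt\bfzeta' = 0$ is feasible, Slater's condition holds, so the KKT conditions are necessary and sufficient. Stationarity of the Lagrangian \eqref{eqn:lem:HJBI:H1LCQP:Lagrangian} gives $\wt\bfzeta' = \Psi(\bfv - \lambda\bfc + \mu\vec\bfe_4)$ with $\mu \geq 0$ and $\mu\wt\zeta_4' = 0$. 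Splitting by whether the sign constraint is active and plugging this form into $\bfc^\tr\wt\bfzeta' = 0$ (and, in the active case, also into $\wt\zeta_4' = 0$) yields a linear system whose unique solution matches \eqref{eqn:lagrange multiplier:lambda}--\eqref{eqn:lagrange multiplier:mu}; this proves (b) and identifies the minimiser in (a). The minimum value then follows from stationarity combined with $\bfc^\tr\wt\bfzeta = 0$ and $\mu\wt\zeta_4 = 0$: $\tfrac{1}{2} \wt\bfzeta^\tr\Psi^{-1}\wt\bfzeta - \bfv^\tr\wt\bfzeta = -\tfrac{1}{2}\bfv^\tr\wt\bfzeta = -\tfrac{1}{2}\wt g$, which is \eqref{eqn:lem:HJBI:H1LCQP:minimum}.

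\emph{Parts (c), (d), (e).} Nonnegativity of $\wt g$ follows from the feasibility of $\bfzeta^0(\Sigma)$: $-\tfrac{1}{2}\wt g\,\psi = H^\psi_1(\bfzeta^{\psi*}) \leq H^\psi_1(\bfzeta^0(\Sigma)) = 0$. For the uniform bound on $\wt\bfzeta$ in (d), I would combine the explicit componentwise formulas for $\wt\bfzeta$ with an estimate for $|\lambda\, c_j|$ obtained by Cauchy--Schwarz in the $\Psi$-inner product: in the inactive case $|\lambda| \leq \sqrt{\bfv^\tr\Psi\bfv}/\sqrt{\bfc^\tr\Psi\bfc}$, so $|\lambda\,c_j| \leq \sqrt{\bfv^\tr\Psi\bfv/\psi_{\min}}$, which is bounded on $\bfD$ because $\bfv$ is (Assumption~\ref{ass:main result}~\ref{ass:main result:option}); in the active case the same argument applies to the first three components after dropping the fourth entries, while by construction $\wt\xi = 0$ in that case, so the fourth entry of $\wt\bfzeta$ is automatically controlled. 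This yields the constant $K_\bfzeta$, and then $\wt g = \bfv^\tr\wt\bfzeta$ delivers $K_{\wt g}$ as well. Part (e) is a direct case distinction: if $\Sigma \in (\ul\Sigma,\ol\Sigma)$ then $\bfzeta^\psi = \bfzeta^{\psi*}$ and (a) applies, otherwise $\bfzeta^\psi = \bfzeta^0(\Sigma)$ and $H^\psi_1$ vanishes by construction.

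\emph{Part (f), the main obstacle.} The goal is to bound $|\lambda|$ times the Frobenius norm in \eqref{eqn:lem:HJBI:H1LCQP:K lambda} by a function in $L^2_\fP$. In the inactive case, both factors are uniformly bounded on $\bfD$ by the estimates above. The difficulty is the active case: the denominator of $\lambda$ loses the $\tfrac{1}{4}\psi_\xi\cC_{\Sigma\Sigma}^2$ contribution, so $|\lambda|$ alone can fail to stay bounded when $\cC_{\Sigma\Sigma}$ dominates $(\cC_\Sigma, S^2\cC_{SS}, S\cC_{S\Sigma})$. The plan here is to invoke the second inequality in \eqref{eqn:ass:C} to dominate the Frobenius norm by $C_1(1+K_\cC)\sqrt{\cC_\Sigma^2 + (S^2\cC_{SS})^2 + (S\cC_{S\Sigma})^2}$, and simultaneously to bound the square of this expression below by a constant multiple of the active-case denominator of $\lambda$. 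Combining with Cauchy--Schwarz on the active-case numerator then yields $|\lambda| \cdot (\text{Frobenius norm}) \leq C_2(1+K_\cC)$ times a bounded expression in $\bfv$, which lies in $L^2_\fP$ since $K_\cC$ does. Carrying out this interplay between the structural assumption \eqref{eqn:ass:C} and the active-case denominator is the principal technical point of the proof.
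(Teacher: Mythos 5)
Your proposal is correct in substance and follows essentially the same route as the paper: the paper packages your KKT analysis into an appendix result (Lemma~\ref{lem:LCQP}) on linearly constrained quadratic programs, applied after the substitution $\bfz=\bfzeta-\bfzeta^0(\Sigma)$ with $D=\Psi^{-1}/\psi$, and parts (c) and (e) are treated exactly as you do. Two places where the paper is shorter: for (d) it skips the componentwise $|\lambda c_j|$ bookkeeping, since the optimal value $-\tfrac12\bfv^\tr\wt\bfzeta\,\psi\le0$ together with strict convexity gives $\enorm{\wt\bfzeta}\le\psi_{\max}\enorm{\bfv}$ in one line; for (f) it avoids your active/inactive case split by using the dual bound $\enorm{\lambda\bfc-\mu\vec\bfe_4}\le(1+\psi_{\max}/\psi_{\min})\enorm{\bfv}$, which controls the first three components of $\lambda\bfc$ uniformly in both cases, and then invokes \eqref{eqn:ass:C} once. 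Your alternative organisation of (d) and (f) does work, but buys nothing extra.

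Three statements need repair, all with estimates you already have. First, in (a)--(b), $\wt\bfzeta'=0$ is feasible but not \emph{strictly} feasible for $\wt\zeta_4'\ge0$, so Slater as usually stated does not apply; appeal instead to the fact that all constraints are affine (linear constraint qualification), which suffices for necessity and sufficiency of KKT for a convex quadratic programme. Second, in the inactive case of (f) it is not true that ``both factors are uniformly bounded on $\bfD$'': neither $|\lambda|$ nor the Frobenius norm is bounded in general (e.g.\ for a vanilla call with $T_\sfC=T$ the cash gamma explodes near the money as $t\to T$; this is precisely why Assumption~\ref{ass:main result}~\ref{ass:main result:call} only asks for $K_\cC\in L^2_\fP$). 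What is true, and what your own $|\lambda c_j|$ bounds give, is that the \emph{product} is bounded there: $|\lambda|\,|S^2\cC_{SS}|\le|\lambda c_2|/\ul\Sigma$, $|\lambda|\,|S\cC_{S\Sigma}|\le|\lambda c_3|/\ul\Sigma$, and $|\lambda|\,|\cC_{\Sigma\Sigma}|=2|\lambda c_4|$. Third, in the active case your comparison is stated in the wrong direction: writing $N^2=\cC_\Sigma^2+(S^2\cC_{SS})^2+(S\cC_{S\Sigma})^2$ and $D_a=\psi_\nu\cC_\Sigma^2+\psi_\sigma\Sigma^2(S^2\cC_{SS})^2+\psi_\eta\Sigma^2(S\cC_{S\Sigma})^2$ for the active denominator, the chain $|\lambda|\,\norm{\cdot}_F\le C(1+K_\cC)\,N\,\sqrt{\psi_{\max}}\enorm{\bfv}/\sqrt{D_a}$ requires $D_a\ge c\,N^2$, not $N^2\ge c\,D_a$; fortunately both inequalities hold with constants depending only on $\Psi$, $\ul\Sigma$, $\ol\Sigma$, so the slip is harmless once corrected.
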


\begin{proof}
Recalling the definitions of $H^\psi_1(t,\bfx;\bfzeta)$ (cf.~\eqref{eqn:H1}) and $\bfZ^0_\lin(t,\bfx)$ (cf.~\eqref{eqn:Z0lin}) and using the substitution $\bfz = \bfzeta - \bfzeta^0(\Sigma)$, it is easy to see that for each $(t,\bfx)\in\bfD$ and $\psi > 0$, the minimisation problem \eqref{eqn:lem:HJBI:H1LCQP:problem} can be recast as
\begin{align}
\label{eqn:lem:HJBI:H1LCQP:pf:recast}
\text{minimise } \frac{1}{2\psi}\bfz^\tr \Psi^{-1} \bfz - \bfv(t,\bfx)^\tr \bfz
\quad\text{subject to } \bfz \in \RR^4, \bfz^\tr \bfc(t,\bfx) = 0, z_4 \geq 0.
\end{align}
Note that \eqref{eqn:lem:HJBI:H1LCQP:pf:recast} is a linearly constrained minimisation problem of the form \eqref{eqn:lem:LCQP:primal problem} with $n=4$, $D=\Psi^{-1}/\psi$, $\bfv = \bfv(t,\bfx)$, and $\bfc = \bfc(t,\bfx)$. Also note that with this choice of $D$, we have (denoting by $d_{\max}$ and $d_{\min}$ the maximal and minimal element on the diagonal of $D$, respectively) $d_{\max} = \psi_{\min}^{-1}/\psi$ and $d_{\min} = \psi_{\max}^{-1}/\psi$, so that, in particular, $\frac{d_{\max}}{d_{\min}} = \frac{\psi_{\max}}{\psi_{\min}}$. 

(a): Fix $(t,\bfx) \in \bfD$ and $\psi > 0$. By Lemma~\ref{lem:LCQP}~(a), the minimiser of \eqref{eqn:lem:HJBI:H1LCQP:pf:recast} is $\bfz^* = \psi \wt\bfzeta(t,\bfx)$. After resubstitution, this yields the minimiser \eqref{eqn:lem:HJBI:H1LCQP:minimiser} of the original minimisation problem \eqref{eqn:lem:HJBI:H1LCQP:problem}. Moreover, by Lemma~\ref{lem:LCQP}~(c), the minimum of \eqref{eqn:lem:HJBI:H1LCQP:pf:recast} (which clearly coincides with the minimum of \eqref{eqn:lem:HJBI:H1LCQP:problem}) is
\begin{align*}
-\frac{1}{2}\bfv(t,\bfx)^\tr \bfz^*
&= -\frac{1}{2}\bfv(t,\bfx)^\tr \wt\bfzeta(t,\bfx) \psi
= -\frac{1}{2}\wt g(t,\bfx) \psi;
\end{align*}
recall the definition of $\wt g$ in \eqref{eqn:cash equivalent:source term}. For further reference, we also note that the bound on $\enorm{\bfz^*}$ from Lemma~\ref{lem:LCQP}~(a) translates to
\begin{align}
\label{eqn:lem:HJBI:H1LCQP:pf:zeta tilde:norm bound}
\enorm{\wt\bfzeta(t,\bfx)}
&\leq \psi_{\max} \enorm{\bfv(t,\bfx)}.
\end{align}

(b): This follows immediately from the second assertion of Lemma~\ref{lem:LCQP}~(d).

(c): By Assumption~\ref{ass:main result}~\ref{ass:main result:option} and the definition of $\bfv(t,\bfx)$, there is a constant $K_\bfv > 0$ such that $\enorm{\bfv(t,\bfx)} \leq K_\bfv$ for all $(t,\bfx)\in\bfD$. Set $K_{\wt g} = \psi_{\max} K_\bfv^2$ and fix $(t,\bfx) \in \bfD$ and $\psi > 0$. As $\bfzeta^0(\Sigma) \in \bfZ^0_\lin(t,\bfx)$ and $H^\psi_1(t,\bfx;\bfzeta^0(\Sigma)) = 0$, we have $\wt g(t,\bfx) \geq 0$ by \eqref{eqn:lem:HJBI:H1LCQP:minimum}. On the other hand, using the Cauchy--Schwarz inequality and \eqref{eqn:lem:HJBI:H1LCQP:pf:zeta tilde:norm bound}, we have for every $(t,\bfx) \in \bfD$:
\begin{align*}
g(t,\bfx)
&= \bfv(t,\bfx)^\tr \wt\bfzeta(t,\bfx)
\leq \enorm{\bfv(t,\bfx)} \enorm{\wt\bfzeta(t,\bfx)}
\leq \psi_{\max}\enorm{\bfv(t,\bfx)}^2 
\leq K_{\wt g}.
\end{align*}

(d): Set $K_\bfzeta = \max(\psi_{\max}K_\bfv,1)$ where $K_\bfv$ is as in the proof of part~(c), and fix $(t,\bfx)\in\bfD$ as well as $\psi > 0$. If $\Sigma \in \lbrace \ul\Sigma,\ol\Sigma\rbrace$, then $\bfzeta^\psi(t,\bfx) = \bfzeta^0(\Sigma)$ by construction and the assertion is trivial. Otherwise, if $\Sigma \in (\ul\Sigma,\ol\Sigma)$, then $\bfzeta^\psi(t,\bfx) = \bfzeta^{\psi*}(t,\bfx)$ and \eqref{eqn:lem:HJBI:H1LCQP:pf:zeta tilde:norm bound} implies that
\begin{align}
\label{eqn:lem:HJBI:H1LCQP:pf:norm estimate}
\enorm{\bfzeta^\psi(t,\bfx) - \bfzeta^0(\Sigma)}
&= \enorm{\bfzeta^{\psi*}(t,\bfx) - \bfzeta^0(\Sigma)}
= \enorm{\wt\bfzeta(t,\bfx)} \psi
\leq \psi_{\max}\enorm{\bfv(t,\bfx)}\psi 
\leq K_\bfzeta \psi.
\end{align}

(e): Fix $(t,\bfx)\in\bfD$ and $\psi > 0$. First, suppose that $\Sigma \in \lbrace \ul\Sigma,\ol\Sigma \rbrace$. Then $\bfzeta^\psi(t,\bfx) = \bfzeta^0(\Sigma)$ by construction of $\bfzeta^\psi$ and the assertion follows from the fact that $H^\psi_1(t,\bfx;\bfzeta^0(\Sigma)) = 0$ (cf.~\eqref{eqn:H1 H2 H4:zero}). Second, suppose that $\Sigma \in (\ul\Sigma,\ol\Sigma)$. Then $\bfzeta^\psi(t,\bfx) = \bfzeta^{\psi*}(t,\bfx)$ and the assertion follows from part~(a).

(f): Let $K_\bfv > 0$ be as in the proof of part~(c). Then the bound \eqref{eqn:lem:LCQP:Lagrange multiplier:norm bound} from Lemma~\ref{lem:LCQP}~(b) implies that, for every $(t,\bfx)\in\bfD$,
\begin{align}
\label{eqn:lem:HJBI:H1LCQP:pf:Lagrange multiplier:norm bound}
\enorm{\lambda(t,\bfx) \bfc(t,\bfx) - \mu(t,\bfx) \vec\bfe_4}
&\leq \left(1 + \frac{\psi_{\max}}{\psi_{\min}}\right) K_{\bfv}.
\end{align}
Recall from \eqref{eqn:VGVV:c} that $\bfc(t,\bfx) = (\cC_\Sigma, \Sigma S^2 \cC_{SS}, \Sigma S \cC_{S\Sigma}, \frac{1}{2} \cC_{\Sigma\Sigma})^\tr$. Clearly, each of the first three components of $\lambda(t,\bfx)\bfc(t,\bfx)$ is bounded in absolute value by the length of the vector $\lambda(t,\bfx) \bfc(t,\bfx) - \mu(t,\bfx) \vec\bfe_4$. Using also that $\ul\Sigma > 0$, we can find a constant $K > 0$ such that for every $(t,\bfx) = (t,S,A,M,\Sigma) \in \bfD$,
\begin{align}
\label{eqn:lem:HJBI:H1LCQP:pf:lambda c bounds}
\vert \lambda(t,\bfx) \cC_\Sigma(t,S,\Sigma) \vert
&\leq K,\quad
\vert \lambda(t,\bfx) S^2 \cC_{SS}(t,S,\Sigma) \vert
\leq K,\quad
\vert \lambda(t,\bfx) S \cC_{S\Sigma}(t,S,\Sigma) \vert
\leq K.
\end{align}
(This argument does not work for the fourth component of $\lambda(t,\bfx)\bfc(t,\bfx)$ due to the presence of the term $\mu(t,\bfx)\vec\bfe_4$ in \eqref{eqn:lem:HJBI:H1LCQP:pf:Lagrange multiplier:norm bound}.) Set $K_\lambda(t,\bfx) = 3K(2 + K_\cC(t,\bfx))$ where $K_\cC \in L^2_\fP$ is as in Assumption~\ref{ass:main result}~\ref{ass:main result:call}. Clearly, $K_\lambda \in L^2_\fP$.

Now, fix $(t,\bfx) \in \bfD$. Using that the Euclidean norm of a vector is dominated by the sum of the absolute values of each of its entries as well as Assumption~\ref{ass:main result}~\ref{ass:main result:call} to bound $\vert \cC_{\Sigma\Sigma} \vert$,
\begin{align*}
\vert \lambda(t,\bfx) \vert
\norm{
\begin{pmatrix}
S^2 \cC_{SS}    & S \cC_{S\Sigma}\\
S \cC_{S\Sigma} & \cC_{\Sigma\Sigma}
\end{pmatrix}
}_F
&= \vert \lambda(t,\bfx) \vert \enorm{(S^2 C_{SS}, S\cC_{S\Sigma}, S\cC_{S\Sigma}, \cC_{\Sigma\Sigma})}\\
&\leq \vert \lambda(t,\bfx) \vert \left( \vert S^2 \cC_{SS} \vert + 2 \vert S \cC_{S\Sigma} \vert +  \vert \cC_{\Sigma\Sigma} \vert \right)\\
&\leq (2 + K_\cC(t,\bfx)) \vert\lambda(t,\bfx)\vert \left( \vert \cC_\Sigma \vert + \vert S^2 \cC_{SS} \vert + \vert S \cC_{S\Sigma} \vert \right).
\end{align*}
Combining this with \eqref{eqn:lem:HJBI:H1LCQP:pf:lambda c bounds} and the choice of $K_\lambda$ completes the proof.
\end{proof}

\begin{corollary}
\label{cor:source term representation}
For each $(t,\bfx)\in\bfD$,
\begin{align*}
\wt g(t,\bfx)
&= -\Sigma \left(\phi^\star S^2 \cC_{SS} - (\beta \cV_A + S^2 \Gamma)\right)\wt\sigma
-\Sigma  \left(\phi^\star S\cC_{S\Sigma} - S\Vanna \right) \wt\eta
-\frac{1}{2}\left(\phi^\star \cC_{\Sigma\Sigma} - \cV_{\Sigma\Sigma}\right)\wt\xi,
\end{align*}
where the functions $(\wt\nu,\wt\sigma,\wt\eta,\wt\xi) = \wt\bfzeta$ are defined in \eqref{eqn:candidate control:first order} and $\phi^\star = \frac{\cV_\Sigma}{\cC_\Sigma}$.
\end{corollary}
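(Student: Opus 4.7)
The plan is to start from the definition $\wt g(t,\bfx) = \bfv(t,\bfx)^\tr \wt\bfzeta(t,\bfx)$ in \eqref{eqn:cash equivalent:source term} and substitute the explicit form of the vega-gamma-vanna-volga vector $\bfv$ from \eqref{eqn:VGVV:v}. Writing $\wt\bfzeta = (\wt\nu, \wt\sigma, \wt\eta, \wt\xi)$, this yields
\begin{align*}
\wt g
&= \cV_\Sigma \wt\nu + \Sigma(\beta \cV_A + S^2 \Gamma)\wt\sigma + \Sigma S\tfrac{\partial\Delta}{\partial\Sigma}\wt\eta + \tfrac{1}{2}\cV_{\Sigma\Sigma}\wt\xi.
\end{align*}
Three of these four terms already appear (with opposite signs and the $\phi^\star$-correction) on the right-hand side of the claimed identity. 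So it suffices to show that the remaining term $\cV_\Sigma \wt\nu$ equals $-\phi^\star\big(\Sigma S^2\cC_{SS}\wt\sigma + \Sigma S\cC_{S\Sigma}\wt\eta + \tfrac{1}{2}\cC_{\Sigma\Sigma}\wt\xi\big)$.

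The key input is the linear constraint from Lemma~\ref{lem:HJBI:H1LCQP}~(a): the minimiser satisfies $\bfc^\tr \wt\bfzeta = 0$, i.e.
\begin{align*}
\cC_\Sigma \wt\nu + \Sigma S^2 \cC_{SS}\wt\sigma + \Sigma S\cC_{S\Sigma}\wt\eta + \tfrac{1}{2}\cC_{\Sigma\Sigma}\wt\xi &= 0.
\end{align*}
Since $\cC_\Sigma \neq 0$ by Assumption~\ref{ass:main result}~\ref{ass:main result:call}, I can solve for $\wt\nu$, multiply through by $\cV_\Sigma/\cC_\Sigma = \phi^\star$, and obtain exactly the required identity. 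Substituting back into the expression for $\wt g$ and collecting terms then yields the asserted formula.

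The main work is merely bookkeeping; there is no real obstacle, because the identity encodes precisely the statement that the asymptotically optimal perturbation $\wt\bfzeta$ is orthogonal (with respect to the standard inner product) to the vega-gamma-vanna-volga vector $\bfc$ of the liquid call, which is the linearised form of the drift condition. The replacement $\cV_\Sigma\wt\nu \mapsto -\phi^\star(\Sigma S^2\cC_{SS}\wt\sigma + \Sigma S\cC_{S\Sigma}\wt\eta + \tfrac{1}{2}\cC_{\Sigma\Sigma}\wt\xi)$ is what transforms the ``raw'' vega-greek representation of $\wt g$ into the ``net greek'' representation involving the hedged differences $\phi^\star \cC_{\bullet} - \cV_{\bullet}$ advertised in the introduction.
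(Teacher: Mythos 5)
Your proposal is correct and follows essentially the same route as the paper: both arguments hinge on the orthogonality relation $\bfc^\tr\wt\bfzeta=0$ from Lemma~\ref{lem:HJBI:H1LCQP}~(a) together with the definition $\phi^\star=\cV_\Sigma/\cC_\Sigma$, the only cosmetic difference being that you solve the constraint for $\wt\nu$ and substitute, whereas the paper subtracts $\phi^\star\bfc^\tr\wt\bfzeta$ (which vanishes) from $\wt g=\bfv^\tr\wt\bfzeta$ and observes that the first component of $\bfv-\phi^\star\bfc$ is zero.
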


\begin{proof}
We fix $(t,\bfx) \in \bfD$ and drop all arguments in the following to ease the notation. Recall that $\wt g = \bfv^\tr \wt\bfzeta$ by definition (cf.~\eqref{eqn:cash equivalent:source term}). Moreover, $\bfc^\tr \wt\bfzeta = 0$ by Lemma~\ref{lem:HJBI:H1LCQP}~(a). Hence,
\begin{align*}
\wt g
&= \left(\bfv - \phi^\star \bfc\right)^\tr\wt\bfzeta.
\end{align*}
Note that the first component of $\bfv - \phi^\star \bfc$ is zero by the choice of $\phi^\star$ (the vega hedge neutralises the portfolio vega). Now the assertion follows from the definitions of $\bfc$ and $\bfv$ in \eqref{eqn:VGVV:c} and \eqref{eqn:VGVV:v}.
\end{proof}

The remainder of this subsection provides estimates for the four terms $H^\psi_1, H^\psi_2, H_3$, and $H^\psi_4$. Roughly speaking, part~(a) of the first of the following propositions shows that $-\frac{1}{2}\wt g(t,\bfx) \psi$ is not only a lower bound for $H^\psi_1(t,\bfx;\bfzeta)$ over $\bfzeta\in\bfZ^0_\lin(t,\bfx)$ (as is shown by Lemma~\ref{lem:HJBI:H1LCQP}), but also, up to a term of order $O(\psi^2)$, for $\bfzeta \in \bfZ(t,\bfx)$ that are close to $\bfzeta^0(\Sigma)$. Moreover, part~(b) shows that this lower bound is approximately attained by controls $\bfzeta$ that are close to the candidate feedback control $\bfzeta^\psi$.

\begin{proposition}[$H^\psi_1$ estimate]~
\label{prop:HJBI:H1}
\begin{enumerate}
\item Let $0 \leq K \in L^4_\fP$. There is a nonnegative $K_1 \in L^1_\fP$ (depending on $K$) such that for every $(t,\bfx)\in\bfD$, $\bfzeta \in \bfZ(t,\bfx)$, and $\psi \in (0,1)$ satisfying
\begin{align}
\label{eqn:prop:HJBI:H1:lower bound:condition}
\enorm{\bfzeta - \bfzeta^0(\Sigma)}
&\leq K(t,\bfx)\psi,
\end{align}
we have
\begin{align*}
H^\psi_1(t,\bfx;\bfzeta)
&\geq - \frac{1}{2}\wt g(t,\bfx)\psi - K_1(t,\bfx) \psi^2.
\end{align*}

\item Let $0 \leq \bar K \in L^4_\fP$. There is a nonnegative $K_1 \in L^2_\fP$ (depending on $\bar K$) such that for every $(t,\bfx)\in\bfD$, $\bfzeta \in \bfZ$, and $\psi \in (0,1)$ satisfying
\begin{align}
\label{eqn:prop:HJBI:H1:upper bound:condition}
\enorm{\bfzeta - \bfzeta^\psi(t,\bfx)}
&\leq \bar K(t,\bfx)\psi^2,
\end{align}
we have
\begin{align*}
H^\psi_1(t,\bfx;\bfzeta)
&\leq -\frac{1}{2}\wt g(t,\bfx)\1_{\lbrace \Sigma \in (\ul\Sigma,\ol\Sigma)\rbrace}\psi + K_1(t,\bfx) \psi^2.
\end{align*}
\end{enumerate}
\end{proposition}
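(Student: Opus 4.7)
The key tool for part (a) is weak Lagrange duality combined with the observation that the nonlinear drift condition $b^\cC(t,\bfx;\bfzeta)=0$ differs from its linearisation only by a quadratic term in $\bfzeta-\bfzeta^0(\Sigma)$. Specifically, the representation \eqref{eqn:bC:matrix form} yields, for any $\bfzeta \in \bfZ(t,\bfx)$,
\begin{align*}
\bfc(t,\bfx)^\tr(\bfzeta-\bfzeta^0(\Sigma))
&= -\tfrac{1}{2}\begin{pmatrix}\sigma-\Sigma\\\eta\end{pmatrix}^\tr\begin{pmatrix}S^2\cC_{SS} & S\cC_{S\Sigma}\\S\cC_{S\Sigma} & \cC_{\Sigma\Sigma}\end{pmatrix}\begin{pmatrix}\sigma-\Sigma\\\eta\end{pmatrix}.
\end{align*}
Using the definition \eqref{eqn:lem:HJBI:H1LCQP:Lagrangian} of the Lagrangian, write $H^\psi_1(\bfzeta)=L^\psi_1(\bfzeta;\lambda,\mu)-\lambda\bfc^\tr(\bfzeta-\bfzeta^0(\Sigma))+\mu\vec\bfe_4^\tr(\bfzeta-\bfzeta^0(\Sigma))$. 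Because $\xi\geq 0$ for $\bfzeta\in\bfZ$ and $\mu\geq 0$ by \eqref{eqn:lagrange multiplier:mu}, the last term is nonnegative; and by Lemma~\ref{lem:HJBI:H1LCQP}~(b), $L^\psi_1(\bfzeta;\lambda,\mu)\geq -\tfrac{1}{2}\wt g(t,\bfx)\psi$. Plugging in the displayed identity and estimating the resulting quadratic form by the compatibility of the Frobenius norm with the Euclidean norm, one obtains
\begin{align*}
H^\psi_1(t,\bfx;\bfzeta)
&\geq -\tfrac{1}{2}\wt g(t,\bfx)\psi - \tfrac{1}{2}|\lambda(t,\bfx)|\,\norm{M_\cC(t,\bfx)}_F\,\enorm{\bfzeta-\bfzeta^0(\Sigma)}^2,
\end{align*}
where $M_\cC$ denotes the $2\times 2$ matrix of call greeks appearing above. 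Invoking the bound \eqref{eqn:lem:HJBI:H1LCQP:K lambda} of Lemma~\ref{lem:HJBI:H1LCQP}~(f) and the hypothesis \eqref{eqn:prop:HJBI:H1:lower bound:condition} gives the claim with $K_1(t,\bfx):=\tfrac{1}{2}K_\lambda(t,\bfx)K(t,\bfx)^2$, which lies in $L^1_\fP$ by Hölder's inequality since $K_\lambda\in L^2_\fP$ and $K\in L^4_\fP$.

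Part (b) is more elementary: I expand $H^\psi_1$ around the candidate $\bfzeta^\psi(t,\bfx)$. Using the quadratic-plus-linear structure of $H^\psi_1$,
\begin{align*}
H^\psi_1(\bfzeta) - H^\psi_1(\bfzeta^\psi)
&= \tfrac{1}{2\psi}(\bfzeta-\bfzeta^\psi)^\tr\Psi^{-1}(\bfzeta-\bfzeta^\psi)
+ (\bfzeta-\bfzeta^\psi)^\tr\bigl[\tfrac{1}{\psi}\Psi^{-1}(\bfzeta^\psi-\bfzeta^0(\Sigma))-\bfv\bigr].
\end{align*}
The hypothesis \eqref{eqn:prop:HJBI:H1:upper bound:condition} together with Lemma~\ref{lem:HJBI:H1LCQP}~(d) (which gives $\enorm{\bfzeta^\psi-\bfzeta^0(\Sigma)}\leq K_\bfzeta\psi$) and the boundedness of $\bfv$ on $\bfD$ (which follows from Assumption~\ref{ass:main result}~\ref{ass:main result:option}) makes both summands $O(\bar K\psi^2) + O(\bar K^2\psi^3)$, so that for $\psi\in(0,1)$,
\begin{align*}
H^\psi_1(t,\bfx;\bfzeta)
&\leq H^\psi_1(t,\bfx;\bfzeta^\psi(t,\bfx)) + \bigl[\tfrac{1}{2\psi_{\min}}\bar K(t,\bfx)^2 + \bar K(t,\bfx)(\psi_{\min}^{-1}K_\bfzeta + K_\bfv)\bigr]\psi^2.
\end{align*}
Invoking Lemma~\ref{lem:HJBI:H1LCQP}~(e) to replace the first term on the right-hand side by $-\tfrac{1}{2}\wt g(t,\bfx)\1_{\{\Sigma\in(\ul\Sigma,\ol\Sigma)\}}\psi$ yields the claim, with the bracketed coefficient lying in $L^2_\fP$ because $\bar K\in L^4_\fP\subset L^2_\fP$ (the inclusion following from the Cauchy--Schwarz inequality in $t$ and the finiteness of the horizon $T$).

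\textbf{Main obstacle.} The delicate step is part (a): the admissible set $\bfZ(t,\bfx)$ is defined by a \emph{nonlinear} equality constraint, whereas the Lagrange multiplier $\lambda$ is produced by Lemma~\ref{lem:HJBI:H1LCQP}~(b) as a multiplier for the \emph{linearised} problem. The trick is that the mismatch between these two constraints is precisely the quadratic form above, and it can be absorbed at order $O(\psi^2)$ provided that $\lambda\bfc$ admits the entry-wise bound \eqref{eqn:lem:HJBI:H1LCQP:K lambda} — this is where the growth condition \eqref{eqn:ass:C} on the volga of the call (controlled by its vega, cash gamma, and cash vanna) enters the analysis in an essential way.
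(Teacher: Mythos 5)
Your proposal is correct and follows essentially the same route as the paper: part (a) combines the Lagrangian from Lemma~\ref{lem:HJBI:H1LCQP}~(b) with the sign of $\mu\xi$ and the quadratic mismatch between $b^\cC$ and its linearisation, controlled via the bound \eqref{eqn:lem:HJBI:H1LCQP:K lambda}, exactly as in the paper; part (b) expands $H^\psi_1$ around $\bfzeta^\psi$ using Lemma~\ref{lem:HJBI:H1LCQP}~(d)--(e) and the boundedness of $\bfv$, the only cosmetic difference being that you use the exact quadratic expansion where the paper invokes the multivariate mean-value theorem. The integrability claims ($K_\lambda K^2\in L^1_\fP$, and $\bar K^2\in L^2_\fP$ from $\bar K\in L^4_\fP$) are justified correctly, so there is no gap.
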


\begin{proof}
(a): Choose $0\leq K_\lambda \in L^2_\fP$ as in Lemma~\ref{lem:HJBI:H1LCQP}~(f) and set $K_1(t,\bfx) = \frac{1}{2} K_\lambda(t,\bfx)K(t,\bfx)^2$. As $K \in L^4_\fP$ and $K_\lambda \in L^2_\fP$, it follows that $K_1 \in L^1_\fP$. Now, fix $(t,\bfx) \in \bfD$, $\bfzeta \in \bfZ(t,\bfx)$, and $\psi \in (0,1)$ satisfying \eqref{eqn:prop:HJBI:H1:lower bound:condition}. As $\bfzeta = (\nu,\sigma,\eta,\xi) \in \bfZ(t,\bfx)$, we have $b^\cC(t,\bfx;\bfzeta) = 0$ and $\xi \geq 0$. Hence, using also that $\mu(t,\bfx) \geq 0$ by definition (cf.~\eqref{eqn:lagrange multiplier:mu}),
\begin{align*}
H^\psi_1(t,\bfx;\bfzeta)
&\geq H^\psi_1(t,\bfx;\bfzeta) + \lambda(t,\bfx) b^\cC(t,\bfx;\bfzeta) - \mu(t,\bfx) \xi.
\end{align*}
Substituting the expression \eqref{eqn:bC:matrix form} for $b^\cC$ and using the definition  \eqref{eqn:lem:HJBI:H1LCQP:Lagrangian} of the Lagrangian $L^\psi_1$, we obtain
\begin{align*}
H^\psi_1(t,\bfx;\bfzeta)
&\geq L^\psi_1(t,\bfx;\bfzeta,\lambda(t,\bfx),\mu(t,\bfx))
+\frac{\lambda(t,\bfx)}{2}
\begin{pmatrix}
\sigma - \Sigma\\
\eta
\end{pmatrix}^\tr
\begin{pmatrix}
S^2 \cC_{SS}    & S \cC_{S\Sigma}\\
S \cC_{S\Sigma} & \cC_{\Sigma\Sigma}
\end{pmatrix}
\begin{pmatrix}
\sigma - \Sigma\\
\eta
\end{pmatrix}.
\end{align*}
The first term on the right-hand side is bounded from below by $-\frac{1}{2}\wt g(t,\bfx) \psi$ by Lemma~\ref{lem:HJBI:H1LCQP}~(b). To estimate the second term, we use the Cauchy--Schwarz inequality, the compatibility of the Frobenius norm with the Euclidean norm, and the fact that $(\sigma - \Sigma, \eta)$ is just the second and third component of $\bfzeta - \bfzeta^0(\Sigma)$. As a result, we find that
\begin{align*}
H^\psi_1(t,\bfx;\bfzeta)
&\geq -\frac{1}{2} \wt g(t,\bfx) \psi 
-\frac{1}{2} \vert \lambda(t,\bfx) \vert
\norm{
\begin{pmatrix}
S^2 \cC_{SS}    & S \cC_{S\Sigma}\\
S \cC_{S\Sigma} & \cC_{\Sigma\Sigma}
\end{pmatrix}
}_F
\enorm{\bfzeta-\bfzeta^0(\Sigma)}^2.
\end{align*}
Finally, condition \eqref{eqn:prop:HJBI:H1:lower bound:condition} and the bound \eqref{eqn:lem:HJBI:H1LCQP:K lambda} from Lemma~\ref{lem:HJBI:H1LCQP}~(f) give
\begin{align*}
H^\psi_1(t,\bfx;\bfzeta)
&\geq -\frac{1}{2} \wt g(t,\bfx) \psi 
-\frac{1}{2}K_\lambda(t,\bfx) K(t,\bfx)^2 \psi^2.
\end{align*}
This proves assertion~(a) by the choice of $K_1$.

(b): Set $K'(t,\bfx) = K_\bfzeta + \bar K(t,\bfx)$ where $K_\bfzeta$ is chosen as in Lemma~\ref{lem:HJBI:H1LCQP}~(d). Clearly, ${K'\in L^4_\fP}$. By Assumption~\ref{ass:main result}~\ref{ass:main result:option} and the definition of $\bfv(t,\bfx)$, there is a constant $K_\bfv > 0$ such that ${\enorm{\bfv(t,\bfx)}\leq K_\bfv}$ for all $(t,\bfx)\in\bfD$. Next, set $K_1(t,\bfx) = \left(\psi_{\min}^{-1} K'(t,\bfx) + K_{\bfv}\right)\bar K(t,\bfx)$. Since ${K',\bar K\in L^4_\fP}$, it follows that $K_1\in L^2_\fP$.

Now, fix $(t,\bfx)\in\bfD$, $\bfzeta\in \bfZ$, and $\psi\in(0,1)$ satisfying \eqref{eqn:prop:HJBI:H1:upper bound:condition}. For brevity, we write $\bfzeta^\psi = \bfzeta^\psi(t,\bfx)$ and $\bfzeta^0 = \bfzeta^0(\Sigma)$. Now, by the multivariate mean-value theorem, there is $\ell \in [0,1]$ such that
\begin{align}
\label{eqn:prop:HJBI:H1:pf:mean value theorem 2}
H^\psi_1(t,\bfx;\bfzeta)
&= H^\psi_1(t,\bfx;\bfzeta^\psi) + \D_\bfzeta H^\psi_1(t,\bfx;\bfzeta_\ell)^\tr (\bfzeta - \bfzeta^\psi)
\end{align}
where $\bfzeta_\ell = (1-\ell)\bfzeta^\psi + \ell \bfzeta$. By the definition of $H^\psi_1$, we have
\begin{align*}
\D_\bfzeta H^\psi_1(t,\bfx;\bfzeta_\ell)
&= \frac{1}{\psi} \Psi^{-1} (\bfzeta_\ell - \bfzeta^0) - \bfv(t,\bfx).
\end{align*}
By Lemma~\ref{lem:HJBI:H1LCQP}~(d) and \eqref{eqn:prop:HJBI:H1:upper bound:condition},
\begin{align*}
\enorm{\bfzeta_\ell - \bfzeta^0}
&= \enorm{\bfzeta^\psi - \bfzeta^0 + \ell(\bfzeta-\bfzeta^\psi)}
\leq \enorm{\bfzeta^\psi - \bfzeta^0} + \ell \enorm{\bfzeta-\bfzeta^\psi}\\
&\leq K_\bfzeta\psi + \bar K(t,\bfx)\psi^2
\leq K'(t,\bfx)\psi,
\end{align*}
so that 
\begin{align}
\label{eqn:prop:HJBI:H1:pf:gradient estimate}
\enorm{\D_\bfzeta H^\psi_1(t,\bfx;\bfzeta_\ell)}
&\leq\psi_{\min}^{-1} K'(t,\bfx) + K_{\bfv}.
\end{align}
Moreover, by Lemma~\ref{lem:HJBI:H1LCQP}~(e), 
\begin{align*}
H^\psi_1(t,\bfx;\bfzeta^\psi)
&= -\frac{1}{2}\wt g(t,\bfx)\1_{\lbrace \Sigma \in (\ul\Sigma,\ol\Sigma)\rbrace}\psi.
\end{align*}
Combining this with \eqref{eqn:prop:HJBI:H1:pf:gradient estimate} and \eqref{eqn:prop:HJBI:H1:upper bound:condition} in \eqref{eqn:prop:HJBI:H1:pf:mean value theorem 2}, we obtain
\begin{align*}
H^\psi_1(t,\bfx;\bfzeta)
&\leq -\frac{1}{2}\wt g(t,\bfx)\1_{\lbrace \Sigma \in (\ul\Sigma,\ol\Sigma)\rbrace}\psi + \left(\psi_{\min}^{-1} K'(t,\bfx) + K_{\bfv}\right)\bar K(t,\bfx)\psi^2.
\end{align*}
This proves assertion~(b) by the choice of $K_1$.
\end{proof}

The next two propositions provide estimates for $H^\psi_2$ and $H_3$ in terms of the Euclidean distance between the reference feedback control $\bfzeta^0(\Sigma)$ and alternatives $\bfzeta$:

\begin{proposition}[$H^\psi_2$ estimate]
\label{prop:HJBI:H2}
There is $K_2 > 0$ such that for every $(t,\bfx,y) \in \bfD\times\RR$, $\bfzeta \in \bfZ$, and $\psi > 0$,
\begin{align*}
\left\vert H^\psi_2(t,\bfx,y;\bfzeta)\right\vert
&\leq K_2 \enorm{\bfzeta - \bfzeta^0(\Sigma)}  \left( \enorm{\bfzeta - \bfzeta^0(\Sigma)}
+ \frac{-U''(y)}{U'(y)} \max\left(1,\enorm{\bfzeta-\bfzeta^0(\Sigma)} \right)\psi \right).
\end{align*}
\end{proposition}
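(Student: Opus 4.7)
The plan is to decompose $H^\psi_2$ into its two summands and bound each one separately by the triangle inequality, obtaining one term contributing $\enorm{\bfzeta-\bfzeta^0(\Sigma)}^2$ and the other contributing $\frac{-U''(y)}{U'(y)}\enorm{\bfzeta-\bfzeta^0(\Sigma)}\max(1,\enorm{\bfzeta-\bfzeta^0(\Sigma)})\psi$.

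For the quadratic form term, I would observe that the entries of the $2\times 2$ matrix ($\beta\cV_A + S^2\Gamma$, $S\frac{\partial\Delta}{\partial\Sigma}$, $\cV_{\Sigma\Sigma}$) are all bounded on $\bfD$ by the constant $K_\cV$ from Assumption~\ref{ass:main result}~\ref{ass:main result:option}. Hence its Frobenius norm is bounded by a multiple of $K_\cV$. Moreover, $(\sigma-\Sigma,\eta)^\tr$ consists of the second and third components of $\bfzeta-\bfzeta^0(\Sigma)$, so its Euclidean norm is dominated by $\enorm{\bfzeta-\bfzeta^0(\Sigma)}$. Compatibility of the Frobenius norm with the Euclidean norm then yields a bound of the form $C_1\enorm{\bfzeta-\bfzeta^0(\Sigma)}^2$ for the first summand.

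For the second summand $\frac{U''(y)}{U'(y)}b^\cV(\bfzeta)\wt w\,\psi$, the key observation is that $b^\cV(t,\bfx;\bfzeta)$ matches the template $q(\Sigma,\bfa,\bfzeta)$ of Lemma~\ref{lem:general drift function} with $\bfa=(\cV_\Sigma, \beta\cV_A+S^2\Gamma, S\frac{\partial\Delta}{\partial\Sigma}, \cV_{\Sigma\Sigma})^\tr$. Assumption~\ref{ass:main result}~\ref{ass:main result:option} shows that $\enorm{\bfa}$ is uniformly bounded on $\bfD$, and $\Sigma \leq \ol\Sigma$ there. Hence Lemma~\ref{lem:general drift function} delivers
\[
\vert b^\cV(\bfzeta)\vert \leq C_2\enorm{\bfzeta-\bfzeta^0(\Sigma)}\max\bigl(1,\enorm{\bfzeta-\bfzeta^0(\Sigma)}\bigr).
\]
Multiplying by the uniform bound $\vert\wt w\vert\leq K_{\wt w}$ from Assumption~\ref{ass:main result}~\ref{ass:main result:cash equivalent} and by $\frac{-U''(y)}{U'(y)}\psi$ (recalling $U'>0$, $U''<0$) produces the second summand of the asserted inequality.

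Summing the two bounds and taking $K_2$ to depend only on $K_\cV$, $K_{\wt w}$, and $\ol\Sigma$ completes the proof. The argument is essentially mechanical; the only mild subtlety is the $\max(1,\enorm{\cdot})$ factor, which appears because $b^\cV$ is quadratic in $\bfzeta$ and so contributes both a linear and a quadratic term in $\enorm{\bfzeta-\bfzeta^0(\Sigma)}$, and I expect no serious obstacle beyond matching this structure with the form of the right-hand side.
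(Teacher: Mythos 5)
Your proposal is correct and follows essentially the same route as the paper's proof: bounding the quadratic-form summand via the Frobenius-norm compatibility and the uniform bound $K_\cV$ from Assumption~\ref{ass:main result}~\ref{ass:main result:option}, and bounding the $b^\cV$ summand via Lemma~\ref{lem:general drift function} together with $\vert\wt w\vert\leq K_{\wt w}$, then combining with a suitable constant $K_2$. No gaps to report.
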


\begin{proof}
Set $K_2 = \max(K_\cV, 4\max(1,\ol\Sigma) K_\cV K_{\wt w})$, where $K_\cV$ and $K_{\wt w}$ are as in Assumption~\ref{ass:main result}~\ref{ass:main result:option}--\ref{ass:main result:cash equivalent}. Also fix $(t,\bfx,y)\in\bfD\times\RR$, $\bfzeta \in \bfZ$, and $\psi > 0$. Now, first note that $b^\cV(t,\bfx;\bfzeta)$ is of the form \eqref{eqn:lem:general drift function:q} with
\begin{align*}
\bfa
&= \left(\cV_\Sigma, \beta \cV_A + S^2 \Gamma, S \frac{\partial \Delta}{\partial \Sigma}, \cV_{\Sigma\Sigma}\right)^\tr.
\end{align*}
Hence, by Lemma~\ref{lem:general drift function}, the fact that $\enorm{\bfa} \leq 2 K_\cV$ by Assumption~\ref{ass:main result}~\ref{ass:main result:option}, and the choice of $K_2$, 
\begin{align}
\label{eqn:prop:HJBI:H2:pf:estimate 1}
\begin{split}
\vert b^\cV(t,\bfx;\bfzeta) \vert
&\leq \max(1,\Sigma) \enorm{\bfa} \enorm{\bfzeta - \bfzeta^0(\Sigma)} + \enorm{\bfa} \enorm{\bfzeta - \bfzeta^0(\Sigma)}^2\\
&\leq \frac{K_2}{K_{\wt w}} \enorm{\bfzeta - \bfzeta^0(\Sigma)} \max \left(1,\enorm{\bfzeta - \bfzeta^0(\Sigma)} \right).
\end{split}
\end{align}
Next, using first norm estimates as in the proof of Proposition~\ref{prop:HJBI:H2}~(a) and then Assumption~\ref{ass:main result}~\ref{ass:main result:option} to estimate the resulting Frobenius norm by $2 K_\cV$, we find
\begin{align}
\label{eqn:prop:HJBI:H2:pf:estimate 2}
\frac{1}{2}\left\vert
\begin{pmatrix}
\sigma - \Sigma\\
\eta
\end{pmatrix}^\tr
\begin{pmatrix}
\beta \cV_A + S^2 \Gamma                & S \frac{\partial\Delta}{\partial\Sigma}\\
S \frac{\partial\Delta}{\partial\Sigma} & \cV_{\Sigma\Sigma}
\end{pmatrix}
\begin{pmatrix}
\sigma - \Sigma\\
\eta
\end{pmatrix}
\right\vert
&\leq
K_\cV \enorm{\bfzeta-\bfzeta^0(\Sigma)}^2.
\end{align}
Finally, using \eqref{eqn:prop:HJBI:H2:pf:estimate 1}--\eqref{eqn:prop:HJBI:H2:pf:estimate 2} and the fact that $\vert\wt w\vert\leq K_{\wt w}$ on $\bfD$ by Assumption~\ref{ass:main result}~\ref{ass:main result:cash equivalent}, we obtain
\begin{align*}
\left\vert H^\psi_2(t,\bfx,y;\bfzeta)\right\vert
&\leq K_\cV \enorm{\bfzeta-\bfzeta^0(\Sigma)}^2 + K_2 \frac{-U''(y)}{U'(y)} \enorm{\bfzeta - \bfzeta^0(\Sigma)} \max \left(1,\enorm{\bfzeta - \bfzeta^0(\Sigma)} \right)\psi\\
&\leq K_2 \enorm{\bfzeta - \bfzeta^0(\Sigma)}  \left( \enorm{\bfzeta - \bfzeta^0(\Sigma)}
+ \frac{-U''(y)}{U'(y)} \max\left(1,\enorm{\bfzeta-\bfzeta^0(\Sigma)} \right)\psi \right).\qedhere
\end{align*}
\end{proof}

\begin{proposition}[$H_3$ estimate]
\label{prop:HJBI:H3}
There is $K_3 \in L^4_\fP$ such that for every $(t,\bfx)\in\bfD$ and $\bfzeta \in \bfZ$,
\begin{align*}
\left\vert H_3(t,\bfx;\bfzeta) - H_3(t,\bfx;\bfzeta^0(\Sigma))\right\vert
&\leq K_3(t,\bfx) \enorm{\bfzeta - \bfzeta^0(\Sigma)} \max\left(1,\enorm{\bfzeta-\bfzeta^0(\Sigma)} \right).
\end{align*}
\end{proposition}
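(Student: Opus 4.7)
The plan is to exhibit $H_3(t,\bfx;\bfzeta) - H_3(t,\bfx;\bfzeta^0(\Sigma))$ as an instance of the generic quadratic function $q$ from Lemma~\ref{lem:general drift function} and then apply that lemma together with the $L^4_\fP$ integrability supplied by Assumption~\ref{ass:main result}~\ref{ass:main result:cash equivalent}.

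First, I would subtract the two values. Since $\bfzeta^0(\Sigma) = (0,\Sigma,0,0)^\tr$, the terms $\nu \wt w_\Sigma$, $\sigma\eta S (\wt w_{S\Sigma} + \gamma \wt w_{A\Sigma})$ and $\frac{1}{2}(\eta^2 + \xi) \wt w_{\Sigma\Sigma}$ vanish at $\bfzeta^0(\Sigma)$, while the $\wt w_A$- and $S^2(\wt w_{SS} + 2\gamma \wt w_{SA} + \gamma^2 \wt w_{AA})$-terms contribute only through the difference $\sigma^2 - \Sigma^2$. A direct computation therefore gives
\begin{align*}
H_3(t,\bfx;\bfzeta) - H_3(t,\bfx;\bfzeta^0(\Sigma))
&= \nu \wt w_\Sigma
+ \tfrac{1}{2}\bigl(\beta \wt w_A + S^2(\wt w_{SS} + 2\gamma \wt w_{SA} + \gamma^2 \wt w_{AA})\bigr)(\sigma^2 - \Sigma^2)\\
&\qquad + \sigma\eta\, S(\wt w_{S\Sigma} + \gamma \wt w_{A\Sigma})
+ \tfrac{1}{2}(\eta^2 + \xi)\wt w_{\Sigma\Sigma},
\end{align*}
which is precisely $q(\Sigma,\bfa(t,\bfx),\bfzeta)$ in the sense of \eqref{eqn:lem:general drift function:q}, with
\begin{align*}
\bfa(t,\bfx)
&:= \bigl(\wt w_\Sigma,\ \beta \wt w_A + S^2(\wt w_{SS} + 2\gamma \wt w_{SA} + \gamma^2 \wt w_{AA}),\ S(\wt w_{S\Sigma} + \gamma \wt w_{A\Sigma}),\ \wt w_{\Sigma\Sigma}\bigr)^\tr.
\end{align*}

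Next, Lemma~\ref{lem:general drift function} yields
\begin{align*}
\bigl\vert H_3(t,\bfx;\bfzeta) - H_3(t,\bfx;\bfzeta^0(\Sigma)) \bigr\vert
&\leq \max(1,\Sigma)\, \enorm{\bfa(t,\bfx)} \enorm{\bfzeta - \bfzeta^0(\Sigma)}
+ \enorm{\bfa(t,\bfx)} \enorm{\bfzeta - \bfzeta^0(\Sigma)}^2,
\end{align*}
and using $1 + r \leq 2\max(1,r)$ for $r \geq 0$ this bound can be repackaged as
\begin{align*}
\bigl\vert H_3(t,\bfx;\bfzeta) - H_3(t,\bfx;\bfzeta^0(\Sigma)) \bigr\vert
&\leq 2\max(1,\ol\Sigma)\, \enorm{\bfa(t,\bfx)}\, \enorm{\bfzeta - \bfzeta^0(\Sigma)} \max\bigl(1,\enorm{\bfzeta - \bfzeta^0(\Sigma)}\bigr),
\end{align*}
where I have used $\Sigma \in [\ul\Sigma,\ol\Sigma]$ on $\bfD$. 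It therefore suffices to set $K_3(t,\bfx) := 2\max(1,\ol\Sigma)\enorm{\bfa(t,\bfx)}$.

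Finally, the membership $K_3 \in L^4_\fP$ follows from Assumption~\ref{ass:main result}~\ref{ass:main result:cash equivalent}: each of the four components of $\bfa$ is in $L^4_\fP$, and since $L^4_\fP$ is a vector space and $\enorm{\bfa}$ is bounded by the sum of the absolute values of its entries, we obtain $\enorm{\bfa} \in L^4_\fP$ as well. The argument is essentially routine once the identification with Lemma~\ref{lem:general drift function} is made; no optimisation or duality is needed, in contrast to the $H^\psi_1$ estimate. The only minor care required is to track that the constant $\max(1,\ol\Sigma)$ absorbs the $\Sigma$-dependence uniformly on $\bfD$ so that all $(t,\bfx)$-dependence is collected into the single function $K_3$.
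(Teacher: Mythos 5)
Your proof is correct and follows essentially the same route as the paper: identify the difference $H_3(t,\bfx;\bfzeta)-H_3(t,\bfx;\bfzeta^0(\Sigma))$ as an instance of $q(\Sigma,\bfa,\bfzeta)$ from Lemma~\ref{lem:general drift function} with exactly the vector $\bfa$ of $\wt w$-derivatives you write down, apply that lemma, and take $K_3 = 2\max(1,\ol\Sigma)\enorm{\bfa}$, whose $L^4_\fP$-membership follows from Assumption~\ref{ass:main result}~\ref{ass:main result:cash equivalent}. No gaps; your repackaging of the two terms into the single $\max(1,\enorm{\bfzeta-\bfzeta^0(\Sigma)})$ bound matches the paper's final estimate.
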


\begin{proof}
Set $K_3(t,\bfx) =2 \max(1,\ol\Sigma) \enorm{\bfa(t,\bfx)}$ where 
\begin{align*}
\bfa(t,\bfx)
&= \left(\wt w_\Sigma,\beta \wt w_A + S^2 (\wt w_{SS} + 2\gamma \wt w_{SA} + \gamma^2 \wt w_{AA}), S(\wt w_{S\Sigma} + \gamma \wt w_{A\Sigma}),\wt w_{\Sigma\Sigma}\right)^\tr.
\end{align*}
By Assumption~\ref{ass:main result}~\ref{ass:main result:cash equivalent}, every component of $\bfa$ is in $L^4_\fP$ and thus also $K_3 \in L^4_\fP$. Now, fix $(t,\bfx)\in\bfD$ and $\bfzeta \in \bfZ$. It is easy to see that the difference 
\begin{align*}
d&:= H_3(t,\bfx;\bfzeta) - H_3(t,\bfx;\bfzeta^0(\Sigma))
\end{align*}
is of the form \eqref{eqn:lem:general drift function:q} for $\bfa = \bfa(t,\bfx)$. Hence, by Lemma~\ref{lem:general drift function} and the choice of $K_3$,
\begin{align*}
\vert d \vert
&\leq \max(1,\Sigma) \enorm{\bfa} \enorm{\bfzeta-\bfzeta^0(\Sigma)} + \enorm{\bfa} \enorm{\bfzeta-\bfzeta^0(\Sigma)}^2\\
&\leq K_3(t,\bfx) \enorm{\bfzeta - \bfzeta^0(\Sigma)} \max\left(1,\enorm{\bfzeta-\bfzeta^0(\Sigma)} \right).\qedhere
\end{align*}
\end{proof}

By combining Propositions \ref{prop:HJBI:H2} and \ref{prop:HJBI:H3}, the following corollary guarantees that if $\bfzeta$ is close to $\bfzeta^0(\Sigma)$, then $H^\psi_2(t,\bfx;\bfzeta)$ is of order $O(\psi^2)$ and $H_3(t,\bfx;\bfzeta)$ can be replaced by $H_3(t,\bfx;\bfzeta^0(\Sigma))$ and a term of order $O(\psi)$.

\begin{corollary}
\label{cor:HJBI:H2 and H3}
Let $0 \leq K \in L^4_\fP$. There is $K_{2,3} \in L^2_\fP$ (depending on $K$) such that for every $(t,\bfx,y)\in\bfD\times\RR$, $\bfzeta \in \bfZ$, and $\psi > 0$ satisfying
\begin{align}
\label{eqn:cor:HJBI:H2 and H3:condition}
\enorm{\bfzeta - \bfzeta^0(\Sigma)}
&\leq K(t,\bfx) \psi,
\end{align}
we have
\begin{align*}
\left\vert H^\psi_2 (t,\bfx,y;\bfzeta)\right\vert
&\leq K_{2,3}(t,\bfx)\left(1+\frac{-U''(y)}{U'(y)}\right) \psi^2,\\
\left\vert H_3(t,\bfx;\bfzeta) - H_3(t,\bfx;\bfzeta^0(\Sigma)) \right\vert
&\leq K_{2,3}(t,\bfx) \psi.
\end{align*}
\end{corollary}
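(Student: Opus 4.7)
The plan is to derive the corollary as a direct consequence of Propositions~\ref{prop:HJBI:H2} and \ref{prop:HJBI:H3}, with the condition \eqref{eqn:cor:HJBI:H2 and H3:condition} used to convert the generic bounds in those propositions (which are quadratic in $\enorm{\bfzeta-\bfzeta^0(\Sigma)}$) into the desired explicit powers of $\psi$. The key structural observation is that $\bfZ = [\ul\nu,\ol\nu]\times[\ul\sigma,\ol\sigma]\times[\ul\eta,\ol\eta]\times[0,\ol\xi]$ is bounded and the implied volatility satisfies $\Sigma\in[\ul\Sigma,\ol\Sigma]$, so there is a deterministic constant $D>0$ with $\enorm{\bfzeta-\bfzeta^0(\Sigma)}\leq D$ uniformly over $(t,\bfx)\in\bfD$ and $\bfzeta\in\bfZ$. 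In particular, $\max(1,\enorm{\bfzeta-\bfzeta^0(\Sigma)})\leq \max(1,D)$ is a fixed constant, which removes the awkwardness that the bounds in Propositions~\ref{prop:HJBI:H2}--\ref{prop:HJBI:H3} hold for all $\psi>0$ (and thus potentially for $\enorm{\bfzeta-\bfzeta^0(\Sigma)}>1$).

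For the $H^\psi_2$ estimate, I would fix $(t,\bfx,y)$, $\bfzeta$, and $\psi$ satisfying \eqref{eqn:cor:HJBI:H2 and H3:condition} and apply Proposition~\ref{prop:HJBI:H2}. Bounding $\enorm{\bfzeta-\bfzeta^0(\Sigma)}$ in the first factor by $K(t,\bfx)\psi$, the first summand inside the outer parenthesis is at most $K(t,\bfx)\psi$ and the $\max$ factor in the second summand is at most $\max(1,D)$; this yields
\[
\vert H^\psi_2(t,\bfx,y;\bfzeta)\vert
\leq K_2\,K(t,\bfx)\psi\left(K(t,\bfx)\psi + \tfrac{-U''(y)}{U'(y)}\max(1,D)\psi\right),
\]
which factors as $\leq C\bigl(K(t,\bfx)^2 + K(t,\bfx)\bigr)\bigl(1+\tfrac{-U''(y)}{U'(y)}\bigr)\psi^2$ for a universal constant $C$. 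Since $K\in L^4_\fP$, the prefactor $K^2+K$ lies in $L^2_\fP$.

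For the $H_3$ estimate, I would apply Proposition~\ref{prop:HJBI:H3}, again bounding $\enorm{\bfzeta-\bfzeta^0(\Sigma)}$ by $K(t,\bfx)\psi$ in the first factor and $\max(1,\enorm{\bfzeta-\bfzeta^0(\Sigma)})$ by $\max(1,D)$. This gives
\[
\vert H_3(t,\bfx;\bfzeta) - H_3(t,\bfx;\bfzeta^0(\Sigma))\vert
\leq \max(1,D)\,K_3(t,\bfx)\,K(t,\bfx)\,\psi.
\]
By Cauchy--Schwarz, the product $K_3\,K$ lies in $L^2_\fP$ since $K_3,K\in L^4_\fP$. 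Taking $K_{2,3}(t,\bfx)$ to be a suitable constant multiple of $K(t,\bfx)^2 + K(t,\bfx) + K_3(t,\bfx)K(t,\bfx)$ then yields a single function in $L^2_\fP$ that majorises both right-hand sides. There is no real obstacle here; the only care needed is to keep track of the $L^p_\fP$-membership through the Cauchy--Schwarz step and to exploit the uniform boundedness of $\bfZ$ so that the estimates remain valid without restricting $\psi$ to a bounded range.
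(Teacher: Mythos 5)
Your proposal is correct and follows essentially the same route as the paper: apply Propositions~\ref{prop:HJBI:H2} and \ref{prop:HJBI:H3}, use the boundedness of $\bfZ$ and $[\ul\Sigma,\ol\Sigma]$ to bound $\max(1,\enorm{\bfzeta-\bfzeta^0(\Sigma)})$ by a uniform constant, insert the hypothesis $\enorm{\bfzeta-\bfzeta^0(\Sigma)}\leq K(t,\bfx)\psi$ to extract the stated powers of $\psi$, and verify the $L^2_\fP$-membership of the resulting prefactor via Cauchy--Schwarz. The paper's explicit choice $K_{2,3}=K\left(K_2\max(K',K)+K_3 K'\right)$ differs only cosmetically from your constant multiple of $K^2+K+K_3K$.
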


\begin{proof}
Choose $K_2 > 0$ and $K_3 \in L^4_\fP$ as in Propositions~\ref{prop:HJBI:H2} and \ref{prop:HJBI:H3}. Since $\bfZ$ and $[\ul\Sigma,\ol\Sigma]$ are bounded, there is a constant $K' \geq 1$ such that for every $\bfzeta \in \bfZ$ and $\Sigma \in [\ul\Sigma,\ol\Sigma]$,
\begin{align*}
\enorm{\bfzeta - \bfzeta^0(\Sigma)} \leq K'.
\end{align*}

Set $K_{2,3} = K(t,\bfx) \left(K_2 \max(K',K(t,\bfx)) + K_3(t,\bfx)K'\right)$. It is easy to see that $K_{2,3} \in L^2_\fP$. Now, fix $(t,\bfx,y) \in \bfD\times\RR$, $\bfzeta \in \bfZ$, and $\psi > 0$ satisfying \eqref{eqn:cor:HJBI:H2 and H3:condition}. Then by Proposition~\ref{prop:HJBI:H2} and \eqref{eqn:cor:HJBI:H2 and H3:condition},
\begin{align*}
\left\vert H^\psi_2(t,\bfx,y;\bfzeta)\right\vert
&\leq K_2 K(t,\bfx)\psi  \left( K(t,\bfx)\psi + \frac{-U''(y)}{U'(y)} K'\psi \right)
\leq K_{2,3}(t,\bfx) \left(1 + \frac{-U''(y)}{U'(y)}\right) \psi^2.
\end{align*}
Similarly, by Proposition~\ref{prop:HJBI:H3} and \eqref{eqn:cor:HJBI:H2 and H3:condition},
\begin{align*}
\left\vert H_3(t,\bfx;\bfzeta) - H_3(t,\bfx;\bfzeta^0(\Sigma))\right\vert
&\leq K_3(t,\bfx) K(t,\bfx) K' \psi
\leq K_{2,3}(t,\bfx) \psi.\qedhere
\end{align*}
\end{proof}

Finally, Proposition~\ref{prop:HJBI:H4} below shows that $H^\psi_4$ is bounded from below by $0$ up to a term of order $O(\psi^2)$. Recall from \eqref{eqn:H1 H2 H4:zero} that this asymptotic lower bound is attained by the delta-vega hedge, i.e., $H^\psi_4(t,\bfx,y;\bfupsilon^\star(t,\bfx),\bfzeta) = 0$.

\begin{proposition}[$H^\psi_4$ estimate]
\label{prop:HJBI:H4}
There is a nonnegative $K_4 \in L^2_\fP$ such that for every $(t,\bfx,y) \in \bfD\times\RR$, $\bfupsilon\in\RR^2$, $\bfzeta \in \bfZ$, and $\psi>0$,
\begin{align*}
H^\psi_4(t,\bfx,y;\bfupsilon,\bfzeta)
\geq U''(y) K_4(t,\bfx) \psi^2.
\end{align*}
\end{proposition}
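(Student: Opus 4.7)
The plan is to regard $H^\psi_4(t,\bfx,y;\bfupsilon,\bfzeta)$ as a convex quadratic function of the two-dimensional vector
\[
\bfa(\bfupsilon) := \bigl(\sigma S(\theta - (\Delta - \phi\cC_S)),\ \phi\cC_\Sigma - \cV_\Sigma\bigr)^\tr
\]
and to complete the square. Introducing also $\mathbf{r}(t,\bfx,\sigma) := \bigl(\sigma S(\wt w_S + \gamma \wt w_A),\ \wt w_\Sigma\bigr)^\tr$ and
\[
M(\bfzeta) := \begin{pmatrix}1 & \eta\\ \eta & \eta^2+\xi\end{pmatrix},
\]
the definition of $H^\psi_4$ rewrites as $H^\psi_4 = -\tfrac{w^\psi_{YY}}{2}\bfa^\tr M \bfa + \psi U''(y)\,\mathbf{r}^\tr M \bfa$. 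A key structural observation is that $M(\bfzeta)$ is positive semidefinite for every $\bfzeta\in\bfZ$, since $M = (1,\eta)^\tr(1,\eta) + \xi\,\vec\bfe_2\vec\bfe_2^\tr$ and $\xi \geq 0$.

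The first substantial step is to verify that $w^\psi_{YY}(t,\bfx,y) = U''(y) - U'''(y)\wt w(t,\bfx)\psi$ satisfies $w^\psi_{YY} \leq U''(y) < 0$. Differentiating the decreasing-ARA function $A := -U''/U'$ (nonincreasing) and using $U' > 0$ yields $U''' \geq 0$; combined with $\wt w \geq 0$ from Assumption~\ref{ass:main result}~\ref{ass:main result:cash equivalent}, the inequality follows. Consequently the coefficient $-w^\psi_{YY}/2$ is strictly positive, the quadratic form in $\bfa$ is convex, and completing the square (valid for a merely positive semidefinite $M$) delivers the pointwise lower bound
\[
H^\psi_4(t,\bfx,y;\bfupsilon,\bfzeta) \geq \frac{(\psi U''(y))^2}{2\, w^\psi_{YY}}\,\mathbf{r}^\tr M \mathbf{r},
\]
uniformly in $\bfupsilon \in \RR^2$ (note the right-hand side is nonpositive because $w^\psi_{YY}<0$).

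The next step uses $w^\psi_{YY} \leq U''(y) < 0$ once more to get $(U''(y))^2/w^\psi_{YY} \geq U''(y)$; multiplying by the nonnegative quantity $\mathbf{r}^\tr M \mathbf{r}$ preserves the inequality and yields $H^\psi_4 \geq U''(y)\cdot \tfrac{1}{2}\mathbf{r}^\tr M \mathbf{r}\cdot \psi^2$. It then remains to construct $K_4 \in L^2_\fP$ dominating $\tfrac{1}{2}\mathbf{r}^\tr M \mathbf{r}$ uniformly in $\bfzeta \in \bfZ$. Since $\sigma,\eta,\xi$ take values in compact intervals on $\bfZ$, the spectral norm of $M$ is bounded by a constant, giving $\mathbf{r}^\tr M \mathbf{r} \leq C\bigl(S^2(\wt w_S + \gamma \wt w_A)^2 + \wt w_\Sigma^2\bigr)$ for some $C > 0$ independent of $\bfzeta$. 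Setting $K_4(t,\bfx) := \tfrac{C}{2}\bigl(S^2(\wt w_S + \gamma\wt w_A)^2 + \wt w_\Sigma^2\bigr)$ and invoking the $L^4_\fP$ membership of $S(\wt w_S + \gamma \wt w_A)$ and $\wt w_\Sigma$ from Assumption~\ref{ass:main result}~\ref{ass:main result:cash equivalent} yields $K_4 \in L^2_\fP$, as required. The only delicate aspect of the argument is the sign bookkeeping around $U''<0$ and $w^\psi_{YY}<0$ while the factors $M$, $\mathbf{r}^\tr M \mathbf{r}$, $\wt w$, $U'''$, and $\psi^2$ are nonnegative; once these signs are organised, the proof reduces to completing the square plus the a priori estimates already collected in Assumption~\ref{ass:main result}.
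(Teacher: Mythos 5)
Your proposal is correct and follows essentially the same route as the paper's proof: the same positive semidefinite matrix and gradient vector, the same use of decreasing absolute risk aversion to get $w^\psi_{YY} \leq U'' < 0$, the same lower bound $\frac{U''(y)^2}{2w^\psi_{YY}}\,\mathbf{r}^\tr M \mathbf{r}\,\psi^2$ (obtained by completing the square rather than solving the first-order condition, a cosmetic difference), and the same choice of $K_4$ bounded via the Frobenius/spectral norm of $M$ and the $L^4_\fP$ bounds of Assumption~\ref{ass:main result}~\ref{ass:main result:cash equivalent}.
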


\begin{proof}
We first argue that $w^\psi_{YY} \leq U''$ on $\bfD \times \RR$. As $U$ has decreasing absolute risk aversion (cf.~Assumption~\ref{ass:main result}~\ref{ass:main result:utility}), we have for each $y \in \RR$,
\begin{align*}
0 
&\geq\frac{\diff}{\diff y}\left(-\frac{U''(y)}{U'(y)}\right)
= -\frac{U'(y)U'''(y) - U''(y)^2}{U'(y)^2}.
\end{align*}
In particular, since $U' > 0$, we have $U''' > 0$. Together with $\wt w \geq 0$ (cf.~Assumption~\ref{ass:main result}~\ref{ass:main result:cash equivalent}), this yields $w^\psi_{YY} = U'' - U''' \wt w \psi \leq U'' < 0$ on $\bfD\times\RR$.

Now, set $K_4(t,\bfx) = \frac{1}{2} K \left( \ol\sigma^2 S^2 (\wt w_{S} + \gamma \wt w_A)^2 + \wt w_\Sigma^2 \right)$ for some constant
\begin{align*}
K \geq \max_{\bfzeta \in \bfZ} \;(1 + 2\eta^2 + (\eta^2 + \xi)^2)^{1/2}.
\end{align*}
As $S(\wt w_S + \gamma \wt w_A), \wt w_\Sigma \in L^4_\fP$ by Assumption~\ref{ass:main result}~\ref{ass:main result:cash equivalent}, we have $K_4 \in L^2_\fP$. Next, fix $(t,\bfx,y) \in \bfD \times \RR$, $\bfzeta \in \bfZ$, and $\psi > 0$. Write
\begin{align*}
Q
&=
\begin{pmatrix}
1 & \eta\\
\eta & \eta^2 + \xi
\end{pmatrix}
\quad\text{and}\quad
\wt \bfw
=
\begin{pmatrix}
\sigma S (\wt w_S + \gamma \wt w_A)\\
\wt w_\Sigma
\end{pmatrix},
\end{align*}
and consider the function $q:\RR^2 \to \RR$ given by
\begin{align*}
q(\bfz) 
&= -\frac{w^\psi_{YY}}{2} \bfz^\tr Q \bfz + \psi U''(y)
\wt\bfw^\tr
Q \bfz.
\end{align*}
Clearly, minimising $q$ over $\bfz \in \RR^2$ is equivalent to minimising $H^\psi_4(t,\bfx,y;\bfupsilon,\bfzeta)$ over $\bfupsilon \in \RR^2$ (recall that $\cC_\Sigma \neq 0$ by Assumption~\ref{ass:main result}~\ref{ass:main result:call}). Moreover, 
\begin{align*}
\det Q
&= (\eta^2 + \xi) - \eta^2
= \xi \geq 0
\quad\text{and}\quad
\Trace Q
= 1 + \eta^2 + \xi > 0,
\end{align*}
so that the symmetric matrix $Q$ is positive semi-definite. It follows that $q$ is convex and any solution to the first-order condition
\begin{align*}
Q \left(- w^\psi_{YY} \bfz + \psi U''(y)
\wt\bfw\right)
&= 0
\end{align*}
is a (global) minimiser. As $\bfz^* = \frac{U''(y)}{w^\psi_{YY}}\wt\bfw\psi$ solves the first-order condition, we obtain after some algebra that the minimum of $q$ is
\begin{align*}
\frac{1}{2} \frac{U''(y)^2}{w^\psi_{YY}}
\wt\bfw^\tr
Q
\wt\bfw
\psi^2.
\end{align*}
Using also that $w^\psi_{YY} \leq U'' < 0$ on $\bfD \times \RR$, we conclude that for all $\bfupsilon\in\RR^2$,
\begin{align*}
H^\psi_4(t,\bfx,y;\bfupsilon,\bfzeta)
\geq \frac{1}{2}U''(y) \wt\bfw^\tr Q \wt\bfw \psi^2.
\end{align*}
Finally,
\begin{align*}
\frac{1}{2}\left\vert  \wt\bfw^\tr Q \wt\bfw \right\vert
&\leq \frac{1}{2}\enorm{\wt\bfw}^2 \norm{Q}_F
\leq K_4(t,\bfx)
\end{align*}
by the choice of $K_4$. Combining the preceding two estimates completes the proof.
\end{proof}

\subsubsection{Approximate solution to the HJBI equation}
\label{sec:proofs:candidate value function}

The following lemma shows that the candidate value function $w^\psi$ defined in \eqref{eqn:value function} is, up to a term of order $O(\psi^2)$, a ``supersolution'' to the HJBI equation \eqref{eqn:HJBI}. This analytic result is the main ingredient for the proof of the inequality~\eqref{eqn:SDG inequalities1} in  Section~\ref{sec:proofs:lower bound}.

\begin{lemma}[Lower bound]
\label{lem:HJBI:lower bound}
Fix constants $\ul Y \leq \ol Y$. There is a nonnegative $K_\mathrm{lo}\in L^1_\fP$ (depending on $\ul Y,\ol Y$) such that for every $(t,\bfx,y) \in \bfD\times[\ul Y,\ol Y]$ and $\psi \in (0,1)$,
\begin{align}
\label{eqn:lem:HJBI:lower bound}
w^\psi_t(t,\bfx,y) + \inf_{\bfzeta \in \bfZ(t,\bfx)} H^\psi(t,\bfx,y;\bfupsilon^\star(t,\bfx),\bfzeta)
&\geq -K_\mathrm{lo}(t,\bfx)\psi^2.
\end{align}

\end{lemma}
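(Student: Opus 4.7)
The strategy is to exploit the decomposition \eqref{eqn:Hamiltonian:decomposition} evaluated at $\bfupsilon = \bfupsilon^\star(t,\bfx)$. By the identity $H^\psi_4(t,\bfx,y;\bfupsilon^\star(t,\bfx),\bfzeta) = 0$ recorded in \eqref{eqn:H1 H2 H4:zero} (which is precisely what singles out the delta-vega hedge),
\begin{align*}
H^\psi(t,\bfx,y;\bfupsilon^\star(t,\bfx),\bfzeta)
&= U'(y)\bigl[H^\psi_1(\bfzeta) + H^\psi_2(\bfzeta) - H_3(\bfzeta)\psi\bigr].
\end{align*}
Moreover $w^\psi_t = -U'(y)\wt w_t\psi$, and substituting the PDE \eqref{eqn:cash equivalent:PDE:H3}, i.e.\ $\wt w_t = -H_3(\bfzeta^0(\Sigma)) - \tfrac12\wt g$, the left-hand side of \eqref{eqn:lem:HJBI:lower bound} becomes
\begin{align*}
U'(y)\inf_{\bfzeta\in\bfZ(t,\bfx)}\Bigl[\bigl(H^\psi_1(\bfzeta) + \tfrac12\wt g\,\psi\bigr) + H^\psi_2(\bfzeta) - \bigl(H_3(\bfzeta) - H_3(\bfzeta^0(\Sigma))\bigr)\psi\Bigr].
\end{align*}
Since $U'$ is bounded above on $[\ul Y,\ol Y]$ and $M := \sup_{y\in[\ul Y,\ol Y]}(-U''(y)/U'(y)) < \infty$ because $U\in C^3$ and $U'>0$, it suffices to bound the bracketed infimum from below by $-K(t,\bfx)\psi^2$ for some $K\in L^1_\fP$.

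The next step is to split $\bfZ(t,\bfx)$ into a \emph{near} region $\{r\leq\ol K(t,\bfx)\psi\}$ and a \emph{far} region $\{r>\ol K(t,\bfx)\psi\}$, where $r:=\enorm{\bfzeta-\bfzeta^0(\Sigma)}$ and $\ol K \in L^4_\fP$ is to be chosen. On the near region, Proposition~\ref{prop:HJBI:H1}~(a) applied with $K=\ol K$ gives $H^\psi_1(\bfzeta) + \tfrac12\wt g\,\psi \geq -K_1(t,\bfx)\psi^2$ with $K_1\in L^1_\fP$, while Corollary~\ref{cor:HJBI:H2 and H3} yields $|H^\psi_2(\bfzeta)|\leq K_{2,3}(t,\bfx)(1+M)\psi^2$ and $|H_3(\bfzeta)-H_3(\bfzeta^0(\Sigma))|\psi\leq K_{2,3}(t,\bfx)\psi^2$, both with $K_{2,3}\in L^2_\fP\subset L^1_\fP$. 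Summing the three gives exactly the desired $-O(\psi^2)$ lower bound on the near region.

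On the far region the Lagrangian error $-\tfrac12 K_\lambda r^2$ appearing in the proof of Proposition~\ref{prop:HJBI:H1}~(a) is too crude, so instead I invoke coercivity of $H^\psi_1$: directly from \eqref{eqn:H1}, $H^\psi_1(\bfzeta)\geq r^2/(2\psi_{\max}\psi) - \enorm{\bfv(t,\bfx)}r$. Choosing $\ol K(t,\bfx)\geq 4\psi_{\max}\enorm{\bfv(t,\bfx)}$ forces $H^\psi_1(\bfzeta)\geq r^2/(4\psi_{\max}\psi)$ on the far region. Combined with the global bounds $|H^\psi_2(\bfzeta)|\leq K_2 r^2 + C_1(t,\bfx)\psi$ (Proposition~\ref{prop:HJBI:H2}, using $r\leq R$ for the uniform radius $R$ of the compact set $\bfZ$ around $\bfzeta^0(\Sigma)$) and $|H_3(\bfzeta)-H_3(\bfzeta^0(\Sigma))|\leq C_2(t,\bfx)$ (Proposition~\ref{prop:HJBI:H3}), the bracketed integrand is at least $\bigl(\tfrac{1}{4\psi_{\max}\psi}-K_2\bigr)r^2 - (C_1+C_2)(t,\bfx)\psi$. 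Enlarging $\ol K$ so that $\ol K(t,\bfx)^2\geq 16\psi_{\max}(C_1+C_2)(t,\bfx)$ — which still leaves $\ol K\in L^4_\fP$ by H\"older together with the hypotheses $\enorm{\bfv}$ bounded and $K_3\in L^4_\fP$ — makes this expression nonnegative for all sufficiently small $\psi$; the complementary range of $\psi\in(0,1)$ bounded away from $0$ is handled by any crude bound since then $\psi^2$ and $\psi$ are of the same order. Collecting the constants and multiplying by $U'(y)$ produces the required $K_\mathrm{lo}\in L^1_\fP$.

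\textbf{Main obstacle.} The principal difficulty is the choice of the cutoff $\ol K(t,\bfx)$, which must simultaneously (i) lie in $L^4_\fP$ so that Proposition~\ref{prop:HJBI:H1}~(a) delivers an $L^1_\fP$ error on the near side, and (ii) be large enough that the coercive $r^2/\psi$-growth of $H^\psi_1$ dominates the $O(r^2)+O(\psi)$ perturbations from $H^\psi_2$ and $(H_3(\bfzeta)-H_3(\bfzeta^0))\psi$ on the far side. The nonlinear constraint $b^\cC(\bfzeta)=0$ is handled once and for all by the Lagrangian duality argument built into Proposition~\ref{prop:HJBI:H1}~(a) via the bound \eqref{eqn:lem:HJBI:H1LCQP:K lambda} on $\lambda$; beyond the splitting construction no further nonlinearity-specific estimate is required.
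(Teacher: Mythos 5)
Your proposal is correct, and it reaches the paper's estimate with the same building blocks (vanishing of $H^\psi_4$ at the delta-vega hedge, the PDE \eqref{eqn:cash equivalent:PDE:H3}, Proposition~\ref{prop:HJBI:H1}~(a), Corollary~\ref{cor:HJBI:H2 and H3}, and the coercive term $\tfrac{1}{2\psi\psi_{\max}}\enorm{\bfzeta-\bfzeta^0(\Sigma)}^2$ in $H^\psi_1$), but it organises the argument differently where controls far from $\bfzeta^0(\Sigma)$ are concerned. You split $\bfZ(t,\bfx)$ by an explicit cutoff $\ol K(t,\bfx)\psi$ with $\ol K\in L^4_\fP$ chosen large enough that on the far region the $r^2/\psi$-coercivity of $H^\psi_1$ dominates the $O(r^2)+O(\psi)$ perturbations, which forces a restriction to $\psi$ below a $(t,\bfx)$-independent threshold and a separate (easy, though only sketched) crude bound for $\psi$ bounded away from zero — that patch does go through, since on $\bfZ$ the bracket is bounded below by a constant minus $K_3(t,\bfx)R\max(1,R)\in L^4_\fP\subset L^1_\fP$, which can be absorbed into $\psi^2$ when $\psi\geq\psi^*$. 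The paper instead dispenses with far controls by a comparison trick: since $\bfzeta^0(\Sigma)\in\bfZ(t,\bfx)$ and $w^\psi_t+H^\psi(\bfupsilon^\star,\bfzeta^0(\Sigma))=\tfrac12 U'(y)\wt g\,\psi\geq 0$, only controls with $H^\psi(\bfzeta)\leq H^\psi(\bfzeta^0(\Sigma))$ can drive the infimum below the target, and for those the same coercivity inequality (rearranged from this very condition) yields $\enorm{\bfzeta-\bfzeta^0(\Sigma)}\leq K(t,\bfx)\psi$ directly, after which the near-region estimates apply verbatim. The paper's route buys uniformity in $\psi\in(0,1)$ without any case distinction and avoids having to enlarge and re-verify integrability of the cutoff; your route is slightly more mechanical but equally valid, and the integrability bookkeeping you do for $\ol K$ (so that Proposition~\ref{prop:HJBI:H1}~(a) still returns an $L^1_\fP$ error) is exactly the point that needed care and is handled correctly.
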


\begin{proof}
As an auxiliary result, we first prove that there is $K \in L^4_\fP$ such that for every $(t,\bfx,y) \in \bfD\times[\ul Y,\ol Y]$, $\bfzeta \in \bfZ$, and $\psi \in (0,1)$ satisfying
\begin{align}
\label{eqn:lem:HJBI:lower bound:pf:auxiliary estimate:condition}
H^\psi(t,\bfx,y;\bfupsilon^\star(t,\bfx),\bfzeta)
&\leq H^\psi(t,\bfx,y;\bfupsilon^\star(t,\bfx),\bfzeta^0(\Sigma)),
\end{align}
we have
\begin{align}
\label{eqn:lem:HJBI:lower bound:pf:auxiliary estimate}
\enorm{\bfzeta - \bfzeta^0(\Sigma)}
&\leq K(t,\bfx) \psi.
\end{align}
Using Proposition~\ref{prop:HJBI:H2} and the fact that $\bfZ$ and $[\ul Y,\ol Y]$ are compact, there is a constant $K_2' > 0$ such that for every $(t,\bfx,y)\in\bfD\times[\ul Y,\ol Y]$, $\bfzeta \in \bfZ$, and $\psi \in (0,1)$,
\begin{align*}
\left\vert H^\psi_2(t,\bfx,y;\bfzeta)\right\vert
&\leq K_2' \enorm{\bfzeta-\bfzeta^0(\Sigma)}.
\end{align*}
Similarly, using Proposition~\ref{prop:HJBI:H3}, there is $K_3' \in L^4_\fP$ such that for every $(t,\bfx)\in\bfD$ and $\bfzeta \in \bfZ$,
\begin{align*}
\left\vert H_3(t,\bfx;\bfzeta) - H_3(t,\bfx;\bfzeta^0(\Sigma))\right\vert
&\leq K_3'(t,\bfx) \enorm{\bfzeta-\bfzeta^0(\Sigma)}.
\end{align*}
Now, set $K(t,\bfx) = 2\psi_{\max}\left(\enorm{\bfv(t,\bfx)} + K_2' + K_3'(t,\bfx) \right)$. Using that $\bfv(t,\bfx)$ is uniformly bounded by Assumption~\ref{ass:main result}~\ref{ass:main result:option}, it follows that $K\in L^4_\fP$.
Fix $(t,\bfx,y) \in \bfD \times [\ul Y,\ol Y]$, $\bfzeta \in \bfZ$, and $\psi \in (0,1)$ satisfying \eqref{eqn:lem:HJBI:lower bound:pf:auxiliary estimate:condition}. Rearranging \eqref{eqn:lem:HJBI:lower bound:pf:auxiliary estimate:condition} and using the decomposition \eqref{eqn:Hamiltonian:decomposition} of $H^\psi$, the fact that the $H^\psi_4$ term vanishes by \eqref{eqn:H1 H2 H4:zero}, the above estimates for $H^\psi_2$ and $H_3$ as well as a direct estimate for the $H^\psi_1$ term, we find
\begin{align*}
0
&\geq \left(H^\psi(t,\bfx,y;\bfupsilon^\star(t,\bfx),\bfzeta) - H^\psi(t,\bfx,y;\bfupsilon^\star(t,\bfx),\bfzeta^0(\Sigma)) \right)/U'(y)\\
&= H^\psi_1(t,\bfx;\bfzeta) + H^\psi_2(t,\bfx,y;\bfzeta) - \left(H_3(t,\bfx;\bfzeta) - H_3(t,\bfx;\bfzeta^0(\Sigma))\right) \psi\\
&\geq \frac{1}{2\psi \psi_{\max}} \enorm{\bfzeta-\bfzeta^0(\Sigma)}^2 - \bfv(t,\bfx)^\tr(\bfzeta-\bfzeta^0(\Sigma)) - (K_2' + K_3'(t,\bfx))\enorm{\bfzeta-\bfzeta^0(\Sigma)}.
\end{align*}
By rearranging terms and applying the Cauchy--Schwarz inequality, we obtain
\begin{align*}
\enorm{\bfzeta-\bfzeta^0(\Sigma)}^2
&\leq 2\psi_{\max}\big(\enorm{\bfv(t,\bfx)} + K_2' + K_3'(t,\bfx) \big) \enorm{\bfzeta-\bfzeta^0(\Sigma)} \psi\\
&\leq K(t,\bfx)\enorm{\bfzeta-\bfzeta^0(\Sigma)} \psi
\end{align*}
and \eqref{eqn:lem:HJBI:lower bound:pf:auxiliary estimate} follows.

We now turn to the proof of \eqref{eqn:lem:HJBI:lower bound}. Choose $K_{2,3} \in L^2_\fP$ as in Corollary~\ref{cor:HJBI:H2 and H3} (with $K$ as in the auxiliary result), $K_1 \in L^1_\fP$ as in Proposition~\ref{prop:HJBI:H1}~(a), and set
\begin{align*}
K_\mathrm{lo}(t,\bfx) = U'(\ul Y)\left(K_1(t,\bfx) + K_{2,3}(t,\bfx)\left(2+\frac{-U''(\ul Y)}{U'(\ul Y)}\right)\right).
\end{align*}
Clearly, $K_\mathrm{lo} \in L^1_\fP$. Fix $(t,\bfx,y)\in\bfD\times[\ul Y,\ol Y]$, $\bfzeta \in \bfZ(t,\bfx)$, and $\psi \in (0,1)$. First, we note that $\bfzeta^0(\Sigma) \in \bfZ(t,\bfx)$ and that by \eqref{eqn:cash equivalent:PDE:H3}--\eqref{eqn:H1 H2 H4:zero} and Lemma~\ref{lem:HJBI:H1LCQP}~(c),
\begin{align*}
w^\psi_t(t,\bfx,y) + H^\psi(t,\bfx,y;\bfupsilon^\star(t,\bfx),\bfzeta^0(\Sigma))
&= -U'(y)\left( \wt w_t(t,\bfx) + H_3(t,\bfx;\bfzeta^0(\Sigma)) \right)\psi\\
&= \frac{1}{2}U'(y) \wt g(t,\bfx) \psi
\geq 0.
\end{align*}
In view of assertion \eqref{eqn:lem:HJBI:lower bound}, we may thus assume that \eqref{eqn:lem:HJBI:lower bound:pf:auxiliary estimate:condition} is satisfied. In turn, \eqref{eqn:lem:HJBI:lower bound:pf:auxiliary estimate} holds by the auxiliary result. In particular, we may use the estimates of Proposition~\ref{prop:HJBI:H1}~(a) (for $H^\psi_1$) and Corollary~\ref{cor:HJBI:H2 and H3} (for $H^\psi_2$ and $H_3$) in the following. These together with the fact that the $H^\psi_4$ term vanishes by \eqref{eqn:H1 H2 H4:zero} yield
\begin{align*}
&w^\psi_t(t,\bfx,y) + H^\psi(t,\bfx,y;\bfupsilon^\star(t,\bfx),\bfzeta)\\
&\;= U'(y) \left(-\wt w_t(t,\bfx)\psi + H^\psi_1(t,\bfx;\bfzeta) + H^\psi_2(t,\bfx,y;\bfzeta) - H_3(t,\bfx;\bfzeta)\psi\right)\\
&\;\geq -U'(y)\left(\wt w_t(t,\bfx) +  H_3(t,\bfx;\bfzeta^0(\Sigma)) + \frac{1}{2}\wt g(t,\bfx)\right)\psi\\
&\qquad- U'(y)\left(K_1(t,\bfx) + K_{2,3}(t,\bfx)\left(1+\frac{-U''(y)}{U'(y)}\right) + K_{2,3}(t,\bfx)\right)\psi^2\\
&\;\geq -K_\mathrm{lo}(t,\bfx) \psi^2,
\end{align*}
where in the last inequality, we also use \eqref{eqn:cash equivalent:PDE:H3} to eliminate the $O(\psi)$ term and the fact that $U$ has decreasing absolute risk aversion (cf.~Assumption~\ref{ass:main result}~\ref{ass:main result:utility}) to estimate the $O(\psi^2)$ term. As $\bfzeta \in \bfZ(t,\bfx)$ was arbitrary, \eqref{eqn:lem:HJBI:lower bound} follows.
\end{proof}

Conversely, the next lemma shows that the candidate value function $w^\psi$ defined in \eqref{eqn:value function} is asymptotically a ``subsolution'' to the HJBI equation \eqref{eqn:HJBI}. Here, the asymptotic estimate is of order $O(\psi^2)$ if $\Sigma$ is in the interior of $[\ul\Sigma,\ol\Sigma]$ and of order $O(\psi)$ otherwise. This analytic result is the main ingredient for the proof of the inequality~\eqref{eqn:SDG inequalities2} in Section~\ref{sec:proofs:upper bound}.

\begin{lemma}[Upper bound]
\label{lem:HJBI:upper bound}
Let $0 \leq \bar K \in L^4_\fP$. There is a nonnegative $K_\mathrm{up}\in L^2_\fP$ (depending on $\bar K$) such that for every $(t,\bfx,y) \in \bfD\times\RR$, $\bfzeta \in \bfZ$, and $\psi\in(0,1)$ satisfying
\begin{align}
\label{eqn:lem:HJBI:upper bound:condition}
\enorm{\bfzeta - \bfzeta^\psi(t,\bfx)}
&\leq \bar K(t,\bfx) \psi^2,
\end{align}
we have
\begin{align}
\label{eqn:lem:HJBI:upper bound}
w^\psi_t(t,\bfx,y) + \sup_{\bfupsilon \in \RR^2} H^\psi(t,\bfx,y;\bfupsilon,\bfzeta)
&\leq K_\mathrm{up}(t,\bfx) U'(y)\left(1+\frac{-U''(y)}{U'(y)}\right)\psi^{1+\1_{\lbrace \Sigma \in (\ul\Sigma,\ol\Sigma)\rbrace}}.
\end{align}
\end{lemma}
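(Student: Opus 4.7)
The plan is to mirror the structure of the proof of Lemma~\ref{lem:HJBI:lower bound}, but this time combining Proposition~\ref{prop:HJBI:H1}\,(b) (the upper bound at controls $\bfzeta$ close to $\bfzeta^\psi$), Corollary~\ref{cor:HJBI:H2 and H3} (for $H^\psi_2$ and $H_3$), Proposition~\ref{prop:HJBI:H4} (for $-H^\psi_4$), and the PDE \eqref{eqn:cash equivalent:PDE:H3} for $\wt w$. First, one shows that the hypothesis \eqref{eqn:lem:HJBI:upper bound:condition} implies a bound of the form $\enorm{\bfzeta-\bfzeta^0(\Sigma)}\leq K(t,\bfx)\psi$ with $K\in L^4_\fP$: by the triangle inequality and Lemma~\ref{lem:HJBI:H1LCQP}\,(d),
\begin{align*}
\enorm{\bfzeta-\bfzeta^0(\Sigma)}
\leq \enorm{\bfzeta-\bfzeta^\psi(t,\bfx)}+\enorm{\bfzeta^\psi(t,\bfx)-\bfzeta^0(\Sigma)}
\leq \bar K(t,\bfx)\psi^2+K_\bfzeta\psi\leq(K_\bfzeta+\bar K(t,\bfx))\psi
\end{align*}
for $\psi\in(0,1)$, so Corollary~\ref{cor:HJBI:H2 and H3} applies.

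Next, use the decomposition \eqref{eqn:Hamiltonian:decomposition}. Since $H^\psi_4$ does not depend on $\wt w_t$ and the supremum over $\bfupsilon\in\RR^2$ affects only the $-H^\psi_4$ term, Proposition~\ref{prop:HJBI:H4} gives the uniform estimate $\sup_{\bfupsilon\in\RR^2}(-H^\psi_4)\leq -U''(y)K_4(t,\bfx)\psi^2=U'(y)\tfrac{-U''(y)}{U'(y)}K_4(t,\bfx)\psi^2$. Combining with the bound of Proposition~\ref{prop:HJBI:H1}\,(b) for $H^\psi_1$ and the two bounds of Corollary~\ref{cor:HJBI:H2 and H3} for $H^\psi_2$ and for $H_3(t,\bfx;\bfzeta)-H_3(t,\bfx;\bfzeta^0(\Sigma))$, and using $w^\psi_t=-U'(y)\wt w_t\psi$, I arrive at
\begin{align*}
w^\psi_t+\sup_{\bfupsilon\in\RR^2}H^\psi(t,\bfx,y;\bfupsilon,\bfzeta)
&\leq -U'(y)\bigl(\wt w_t+H_3(t,\bfx;\bfzeta^0(\Sigma))+\tfrac{1}{2}\wt g(t,\bfx)\1_{\{\Sigma\in(\ul\Sigma,\ol\Sigma)\}}\bigr)\psi\\
&\quad +U'(y)\bigl(K_1(t,\bfx)+2K_{2,3}(t,\bfx)+K_{2,3}(t,\bfx)\tfrac{-U''(y)}{U'(y)}+K_4(t,\bfx)\tfrac{-U''(y)}{U'(y)}\bigr)\psi^2.
\end{align*}

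The crucial cancellation comes from the PDE~\eqref{eqn:cash equivalent:PDE:H3}: $\wt w_t(t,\bfx)+H_3(t,\bfx;\bfzeta^0(\Sigma))=-\tfrac12\wt g(t,\bfx)$. Substituting, the $O(\psi)$ contribution reduces to $\tfrac12 U'(y)\wt g(t,\bfx)\bigl(1-\1_{\{\Sigma\in(\ul\Sigma,\ol\Sigma)\}}\bigr)\psi=\tfrac12 U'(y)\wt g(t,\bfx)\1_{\{\Sigma\in\{\ul\Sigma,\ol\Sigma\}\}}\psi$. Since $0\leq\wt g\leq K_{\wt g}$ by Lemma~\ref{lem:HJBI:H1LCQP}\,(c), this term vanishes when $\Sigma\in(\ul\Sigma,\ol\Sigma)$ and is of order $O(\psi)$ on the boundary, which is precisely the source of the exponent $1+\1_{\{\Sigma\in(\ul\Sigma,\ol\Sigma)\}}$ on the right-hand side of \eqref{eqn:lem:HJBI:upper bound}. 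Setting
\begin{align*}
K_\mathrm{up}(t,\bfx):=\tfrac12 K_{\wt g}+K_1(t,\bfx)+2K_{2,3}(t,\bfx)+K_4(t,\bfx)
\end{align*}
(and enlarging it so as to absorb the split between the $U'(y)$ and $U'(y)\tfrac{-U''(y)}{U'(y)}$ coefficients into the common factor $U'(y)(1+\tfrac{-U''(y)}{U'(y)})$) gives the claimed estimate, with $K_\mathrm{up}\in L^2_\fP$ since $K_1\in L^2_\fP$ (Proposition~\ref{prop:HJBI:H1}\,(b)), $K_{2,3}\in L^2_\fP$, and $K_4\in L^2_\fP$.

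The main obstacle is bookkeeping the two distinct asymptotic regimes through the indicator $\1_{\{\Sigma\in(\ul\Sigma,\ol\Sigma)\}}$: on the open set where $\bfzeta^\psi=\bfzeta^{\psi*}$ actively solves the linearised problem, Proposition~\ref{prop:HJBI:H1}\,(b) produces exactly the $-\tfrac12\wt g\,\psi$ needed to cancel the $+\tfrac12\wt g\,\psi$ coming from the PDE, whereas on $\{\Sigma\in\{\ul\Sigma,\ol\Sigma\}\}$ the candidate feedback control has been truncated to $\bfzeta^0(\Sigma)$ and this cancellation does not occur, leaving the $O(\psi)$ remainder. One must also be careful to keep all bounds on $K_1,K_{2,3},K_4$ in $L^2_\fP$ (rather than in $L^1_\fP$ as in the lower bound), which is why the hypothesis $\enorm{\bfzeta-\bfzeta^\psi}\leq\bar K\psi^2$ with $\bar K\in L^4_\fP$ and Cauchy--Schwarz style estimates are needed throughout.
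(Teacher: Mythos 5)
Your proposal is correct and follows essentially the same route as the paper's proof: the triangle-inequality reduction to $\enorm{\bfzeta-\bfzeta^0(\Sigma)}\leq(K_\bfzeta+\bar K)\psi$, the decomposition \eqref{eqn:Hamiltonian:decomposition} estimated via Proposition~\ref{prop:HJBI:H1}~(b), Corollary~\ref{cor:HJBI:H2 and H3}, and Proposition~\ref{prop:HJBI:H4}, and the cancellation through the PDE \eqref{eqn:cash equivalent:PDE:H3} leaving only the $\tfrac12 U'(y)\wt g\,\1_{\{\Sigma\in\{\ul\Sigma,\ol\Sigma\}\}}\psi$ remainder bounded by $K_{\wt g}$. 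Your choice of $K_\mathrm{up}$ differs from the paper's only cosmetically (the paper takes $4\max(K_1+2K_{2,3}+K_4,\tfrac12 K_{\wt g})$), and your bookkeeping of the two regimes and of the $L^2_\fP$ integrability matches the paper's argument.
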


\begin{proof}
Define $K_\bfzeta \geq 1$ as in Lemma~\ref{lem:HJBI:H1LCQP}~(d), and set $K(t,\bfx) = \bar K(t,\bfx) + K_\bfzeta$. Clearly, $K \in L^4_\fP$. With this choice of $K$, let $K_{2,3} \in L^2_\fP$ be defined as in Corollary~\ref{cor:HJBI:H2 and H3}. Moreover, define $K_1 \in L^2_\fP$ as in Proposition~\ref{prop:HJBI:H1}~(b) and $K_4 \in L^2_\fP$ as in Proposition~\ref{prop:HJBI:H4}. In addition, note that there is $K_{\wt g} > 0$ such that $0\leq \wt g \leq K_{\wt g}$ on $\bfD$ by Lemma~\ref{lem:HJBI:H1LCQP}~(c).

Now, set $K_\mathrm{up}(t,\bfx) = 4\max\left(K_1(t,\bfx) + 2 K_{2,3}(t,\bfx) + K_4(t,\bfx),\frac{1}{2}K_{\wt g}\right)$. Clearly, $K_\mathrm{up} \in L^2_\fP$. Fix $(t,\bfx,y)\in\bfD\times\RR$, $\bfupsilon \in \RR^2$, $\bfzeta \in \bfZ$, and $\psi \in (0,1)$ satisfying \eqref{eqn:lem:HJBI:upper bound:condition}. In particular, condition \eqref{eqn:prop:HJBI:H1:upper bound:condition} of Proposition~\ref{prop:HJBI:H1}~(b) holds. By Lemma~\ref{lem:HJBI:H1LCQP}~(d) and \eqref{eqn:lem:HJBI:upper bound:condition},
\begin{align*}
\enorm{\bfzeta - \bfzeta^0(\Sigma)}
&\leq \enorm{\bfzeta - \bfzeta^\psi(t,\bfx)} + \enorm{\bfzeta^\psi(t,\bfx) - \bfzeta^0(\Sigma)}
\leq \bar K(t,\bfx)\psi^2 + K_\bfzeta \psi
\leq K(t,\bfx)\psi,
\end{align*}
so that condition~\eqref{eqn:cor:HJBI:H2 and H3:condition} of Corollary~\ref{cor:HJBI:H2 and H3} is satisfied as well.

Using Propositions~\ref{prop:HJBI:H1}~(b) (for $H^\psi_1$) and \ref{prop:HJBI:H4} (for $H^\psi_4$) as well as Corollary~\ref{cor:HJBI:H2 and H3} (for $H^\psi_2$ and $H_3$) to estimate the four summands in the decomposition~\eqref{eqn:Hamiltonian:decomposition} of $H^\psi$, and also \eqref{eqn:cash equivalent:PDE:H3} in the penultimate step, we obtain
\begin{align*}
&w^\psi_t(t,\bfx,y) + H^\psi(t,\bfx,y;\bfupsilon,\bfzeta)\\
&\leq -U'(y)\wt w_t(t,\bfx)\psi
+ U'(y)\left(-\frac{1}{2}\wt g(t,\bfx) \1_{\lbrace \Sigma\in(\ul\Sigma,\ol\Sigma)\rbrace} \psi + K_1(t,\bfx)\psi^2 \right)\\
&\qquad+ U'(y)K_{2,3}(t,\bfx) \left(1 + \frac{-U''(y)}{U'(y)}\right)\psi^2\\
&\qquad- U'(y)\left(H_3(t,\bfx;\bfzeta^0(\Sigma)) - K_{2,3}(t,\bfx)\psi\right)\psi
- U''(y) K_4(t,\bfx)\psi^2\\
&= -U'(y)\left(\wt w_t(t,\bfx) + H_3(t,\bfx;\bfzeta^0(\Sigma)) + \frac{1}{2}\wt g(t,\bfx) \1_{\lbrace \Sigma\in(\ul\Sigma,\ol\Sigma)\rbrace} \right)\psi\\
&\qquad +U'(y)\left(K_1(t,\bfx) + 2 K_{2,3}(t,\bfx) + \big(K_{2,3}(t,\bfx) + K_4(t,\bfx)\big)\frac{-U''(y)}{U'(y)}\right)\psi^2\\
&\leq U'(y) \frac{1}{2}\wt g(t,\bfx) \1_{\lbrace \Sigma \in \lbrace \ul\Sigma,\ol\Sigma \rbrace\rbrace} \psi
+ \frac{1}{4}K_\mathrm{up}(t,\bfx) U'(y) \left(1 + \frac{-U''(y)}{U'(y)}\right)\psi^2\\
&\leq \frac{1}{2} K_\mathrm{up}(t,\bfx)U'(y) \left(1+ \frac{-U''(y)}{U'(y)}\right)\left(\1_{\lbrace \Sigma \in \lbrace \ul\Sigma,\ol\Sigma \rbrace\rbrace} \psi + \psi^2 \right).
\end{align*}
As $\bfupsilon \in \RR^2$ was arbitrary, the assertion follows easily by distinguishing the cases $\Sigma \in (\ul\Sigma,\ol\Sigma)$ and $\Sigma \in \lbrace\ul\Sigma,\ol\Sigma\rbrace$ (using that $\psi \in (0,1)$ in the second case).
\end{proof}

\subsubsection{The asymptotic lower bound for the stochastic differential game}
\label{sec:proofs:lower bound}

We are now in a position to establish an asymptotic lower bound for the SDG \eqref{eqn:value}, as required for the proof of Theorem~\ref{thm:main result} at the beginning of Section~\ref{sec:proofs:main result}.

\begin{lemma}
\label{lem:lower bound}
As $\psi \downarrow 0$,
\begin{align*}
\inf_{P \in \fP} J^\psi(\bfupsilon^\star,P)
&\geq w^\psi_0 + o(\psi).
\end{align*}
\end{lemma}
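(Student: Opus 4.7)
The plan is a verification-type argument combining It\^o's formula with the analytic lower bound from Lemma~\ref{lem:HJBI:lower bound}. Fix $P \in \fP$ and define the auxiliary process
\begin{align*}
N^\psi_t
&:= w^\psi(t,\bfX_t,Y^{\bfupsilon^\star,P}_t) + \frac{1}{\psi}\int_0^t U'(Y^{\bfupsilon^\star,P}_u)\, f(\Sigma_u,\bfzeta^P_u)\, du.
\end{align*}
By Corollary~\ref{cor:Y:delta-vega}, there exist constants $\ul Y \leq \ol Y$ (independent of $P \in \fP$) such that $Y^{\bfupsilon^\star,P}$ takes values in $[\ul Y,\ol Y]$. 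Combining the P\&L dynamics from Lemma~\ref{lem:Y} with the covariations \eqref{eqn:covariation:S and IV} and the decomposition \eqref{eqn:A:dynamics}, It\^o's formula applied to $w^\psi(t,\bfX_t,Y^{\bfupsilon^\star,P}_t)$ under $P$ produces exactly those drift terms that are collected into the Hamiltonian \eqref{eqn:Hamiltonian}; together with the integrand of the penalty accumulator, the $P$-drift of $N^\psi$ equals $w^\psi_t + H^\psi(\cdot;\bfupsilon^\star(t,\bfX_t),\bfzeta^P_t)$ evaluated at $(t,\bfX_t,Y^{\bfupsilon^\star,P}_t)$.

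Since $\bfzeta^P_t \in \bfZ(t,\bfX_t)$ $\diff t\times P$-a.e.\ (the drift condition is built into $\fP^0$) and $Y^{\bfupsilon^\star,P}_t \in [\ul Y,\ol Y]$, Lemma~\ref{lem:HJBI:lower bound} yields a nonnegative $K_\mathrm{lo} \in L^1_\fP$ (depending only on $\ul Y, \ol Y$) such that the $P$-drift of $N^\psi$ is bounded below by $-K_\mathrm{lo}(t,\bfX_t)\psi^2$. Hence
\begin{align*}
N^\psi_T
&\geq N^\psi_0 - \psi^2 \int_0^T K_\mathrm{lo}(t,\bfX_t)\, dt + M^\psi_T,
\end{align*}
where $M^\psi$ is the $P$-local martingale part of $N^\psi$.

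The main technical obstacle is to upgrade $M^\psi$ to a true $P$-martingale so that its expectation vanishes. The stochastic integrands arising from It\^o's formula involve the first-order spatial partial derivatives of $w^\psi$ (i.e.\ $U'$ times derivatives of $\wt w$ scaled by $\psi$, plus $U'(Y^{\bfupsilon^\star,P}_t)$ multiplying the hedging error, which vanishes identically by the choice of $\bfupsilon^\star$) multiplied by the diffusion coefficients of $\bfX$. These diffusion coefficients are uniformly bounded under $P$ by Assumption~\ref{ass:main result}~\ref{ass:main result:model set}; the derivatives $\wt w_\Sigma$, $S(\wt w_S + \gamma \wt w_A)$, etc.\ lie in $L^4_\fP \subset L^2_\fP$ by Assumption~\ref{ass:main result}~\ref{ass:main result:cash equivalent}; and $U'$ is bounded on the compact P\&L range $[\ul Y,\ol Y]$. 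A standard estimate then shows $\E^P[\langle M^\psi\rangle_T]<\infty$, so $M^\psi$ is a square-integrable martingale.

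Taking expectations and using the terminal condition $\wt w(T,\cdot)\equiv 0$ from \eqref{eqn:cash equivalent:PDE} (so that $N^\psi_T = U(Y^{\bfupsilon^\star,P}_T) + \frac{1}{\psi}\int_0^T U'(Y^{\bfupsilon^\star,P}_u) f(\Sigma_u,\bfzeta^P_u)\, du$ and thus $\E^P[N^\psi_T]=J^\psi(\bfupsilon^\star,P)$), together with $N^\psi_0 = w^\psi_0$, we obtain
\begin{align*}
J^\psi(\bfupsilon^\star,P)
&\geq w^\psi_0 - \psi^2 \norm{K_\mathrm{lo}}_{L^1_\fP}.
\end{align*}
Since the bound on the right-hand side is independent of $P\in\fP$ and $\psi^2 \norm{K_\mathrm{lo}}_{L^1_\fP} = o(\psi)$, taking the infimum over $P\in\fP$ yields the claim.
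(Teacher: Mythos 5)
Your proposal is correct and follows essentially the same route as the paper: apply It\^o's formula to $w^\psi(t,\bfX_t,Y^{\bfupsilon^\star,P}_t)$ plus the penalty accumulator, identify the drift with $w^\psi_t + H^\psi(\cdot;\bfupsilon^\star,\bfzeta^P_t)$, invoke Lemma~\ref{lem:HJBI:lower bound} (legitimate because $\bfzeta^P_t \in \bfZ(t,\bfX_t)$ by the drift condition and $Y^{\bfupsilon^\star,P}\in[\ul Y,\ol Y]$ by Corollary~\ref{cor:Y:delta-vega}), dispose of the local martingale part, and take expectations to get a bound uniform in $P\in\fP$. Two remarks. First, a small but necessary step is glossed over: It\^o's formula also produces the finite-variation term $(w^\psi_M + \delta\, w^\psi_A)\dd M_t$ coming from the running maximum (via $\bfX$ and the $\delta\dd M_t$ part of \eqref{eqn:A:dynamics}), which is \emph{not} part of the Hamiltonian \eqref{eqn:Hamiltonian} and has no a priori sign; it vanishes only because $\dd M_t$ charges $\lbrace S_t = M_t\rbrace$ and the boundary condition $\delta\wt w_A + \wt w_M = 0$ on $\lbrace S \geq M\rbrace$ in \eqref{eqn:cash equivalent:PDE} holds -- you use the terminal condition $\wt w(T,\cdot)=0$ but never this lateral condition, and without it the identification of the drift (and hence the lower bound) is incomplete. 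Second, your treatment of the local martingale part differs mildly from the paper's: you prove it is a square-integrable true martingale via $\EX[P]{\langle M^\psi\rangle_T}<\infty$, using the bounded diffusion coefficients, $U'$ bounded on $[\ul Y,\ol Y]$, and $S(\wt w_S+\gamma\wt w_A),\wt w_\Sigma \in L^4_\fP\subset L^2_\fP$; the paper instead only shows the local martingale is bounded \emph{above} by a $P$-integrable random variable, hence a submartingale, which already gives $\EX[P]{N_T}\geq 0$. Both arguments are valid under Assumption~\ref{ass:main result}~\ref{ass:main result:cash equivalent}; yours is slightly stronger, the paper's needs less. With the $\dd M$-term addressed, the proof is complete.
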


\begin{proof}
Choose $\ul Y, \ol Y$ as in Corollary~\ref{cor:Y:delta-vega} and, with this choice, let $K_\mathrm{lo} \in L^1_\fP$ be as in Lemma~\ref{lem:HJBI:lower bound}. Now, fix $\varepsilon > 0$, $\psi_0' \in (0,\psi_0)$ such that $\norm{K_\mathrm{lo}}_{L^1_\fP} \psi_0' \leq \frac{1}{2}\varepsilon$, and let $\psi \in (0,\psi_0')$. We need to show that 
\begin{align}
\label{eqn:lem:lower bound:pf:to show}
\inf_{P'\in\fP} J^\psi(\bfupsilon^\star,P') - w^\psi_0
&\geq -\varepsilon \psi.
\end{align}
Choose $P \in \fP$ such that $J^\psi(\bfupsilon^\star,P) - \frac{1}{2}\varepsilon\psi \leq \inf_{P'\in\fP} J^\psi(\bfupsilon^\star,P')$. Then
\begin{align}
\label{eqn:lem:lower bound:pf:10}
\inf_{P'\in\fP} J^\psi(\bfupsilon^\star,P') - w^\psi_0
&\geq J^\psi(\bfupsilon^\star,P) - w^\psi_0 - \frac{1}{2}\varepsilon\psi.
\end{align}

Applying It\^o's formula (under $P$) to the process $w^\psi(u,\bfX_u,Y^{\bfupsilon^\star,P}_u)$ (recall the dynamics of $S,A,M,\Sigma$, and $Y^{\bfupsilon^\star,P}$ given in \eqref{eqn:covariation:S and IV}--\eqref{eqn:A:dynamics} and Corollary~\ref{cor:Y:delta-vega}) and using the third line in \eqref{eqn:cash equivalent:PDE} (so that the $\diff M$-integral vanishes) yields for each $u\in[0,T]$,
\begin{align}
\label{eqn:lem:lower bound:pf:Ito}
\begin{split}
I^\psi_u(\bfupsilon^\star,P)
&:=w^\psi(u,\bfX_u,Y^{\bfupsilon^\star,P}_u) + \frac{1}{\psi}\int_0^u U'(Y^{\bfupsilon^\star,P}_t) f(\Sigma_t,\bfzeta^P_t) \dd t - w^\psi_0 \\
&= N_u
+ \int_0^u \left( w^\psi_t(t,\bfX_t,Y^{\bfupsilon^\star,P}_t) + H^\psi(t,\bfX_t,Y^{\bfupsilon^\star,P}_t;\bfupsilon^\star(t,\bfX_t),\bfzeta^P_t)\right) \dd t,
\end{split}
\end{align}
where 
\begin{align*}
\begin{split}
N
&:= \int_0^\cdot \left(w^\psi_S(t,\bfX_t,Y^{\bfupsilon^\star,P}_t) + \gamma(t,S_t,A_t,M_t) w^\psi_A(t,\bfX_t,Y^{\bfupsilon^\star,P}_t) \right) \dd S_t\\
&\qquad+ \int_0^\cdot w^\psi_\Sigma(t,\bfX_t,Y^{\bfupsilon^\star,P}_t) \dd \Sigma^{\contlocmartpart,P}_t.
\end{split}
\end{align*}
Note that $\bfzeta^P_t \in \bfZ(t,\bfX_t)$ $\diff t\times P$-a.e.~by \eqref{eqn:drift condition}. Hence, by Lemma~\ref{lem:HJBI:lower bound}, for each $u \in [0,T]$,
\begin{align}
\label{eqn:lem:lower bound:pf:before expecation}
I^\psi_u(\bfupsilon^\star,P)
&\geq N_u - \int_0^u K_\mathrm{lo}(t,\bfX_t) \dd t \; \psi^2.
\end{align}

By construction, $N$ is a local $P$-martingale starting in $0$. Suppose for the moment that $N$ is also a submartingale. Then by taking expectations under $P$ on both sides of \eqref{eqn:lem:lower bound:pf:before expecation} (for $u = T$), we obtain
\begin{align*}
J^\psi(\bfupsilon^\star,P) - w^\psi_0
&\geq -\norm{K_\mathrm{lo}}_{L^1_\fP} \psi^2
\geq -\frac{1}{2}\varepsilon\psi.
\end{align*}
Combining this with \eqref{eqn:lem:lower bound:pf:10} yields \eqref{eqn:lem:lower bound:pf:to show}.

It remains to show that $N$ is a submartingale under $P$. As it is a local martingale, it suffices to show that it is bounded from above by a $P$-integrable random variable. To this end, first note from the definition of $w^\psi$ in \eqref{eqn:value function}, the fact that $\wt w \geq 0$ on $\bfD$ by Assumption~\ref{ass:main result}~\ref{ass:main result:cash equivalent}, and Assumption~\ref{ass:main result}~\ref{ass:main result:utility} that $w^\psi \leq U(\ol Y)$ on $\bfD\times[\ul Y,\ol Y]$. Clearly, $U'(y)f(\Sigma, \bfzeta)$ is also uniformly bounded over $y \geq \ul Y$, $\Sigma
 \in [\ul\Sigma,\ol\Sigma]$, and $\bfzeta \in \bfZ$. In view of the definition of $I^\psi(\bfupsilon^\star,P)$ in \eqref{eqn:lem:lower bound:pf:Ito}, the fact that  $Y^{\bfupsilon^\star,P} \in [\ul Y,\ol Y]$ $\diff t\times P$-a.e.~by Corollary~\ref{cor:Y:delta-vega}, and Assumption~\ref{ass:main result}~\ref{ass:main result:model set}, we conclude that $I^\psi(\bfupsilon^\star,P) \leq K_I$ $\diff t\times P$-a.e.~for some constant $K_I > 0$. Using this and \eqref{eqn:lem:lower bound:pf:before expecation}, we obtain for each $u\in[0,T]$,
\begin{align*}
N_u
&\leq K_I + \int_0^T K_\mathrm{lo}(t,\bfX_t) \dd t.
\end{align*}
As $K_\mathrm{lo} \in L^1_\fP$, $N$ is bounded from above by a $P$-integrable random variable and therefore is a submartingale. This completes the proof.
\end{proof}

\subsubsection{The asymptotic upper bound for the stochastic differential game}
\label{sec:proofs:upper bound}

To establish an asymptotic upper bound for the stochastic differential game \eqref{eqn:value}, we first prove that the probability under $P^\psi$ that $\Sigma$ leaves $(\ul\Sigma,\ol\Sigma)$ before time $T$ is of order $O(\psi)$.

\begin{proposition}
\label{prop:exit time}
Let $\tau := \inf \lbrace t \in [0,T] : \Sigma_t \not\in (\ul\Sigma,\ol\Sigma)\rbrace \wedge T$ be the first time that $\Sigma$ leaves $(\ul\Sigma,\ol\Sigma)$. Then $\tau$ is a stopping time and there is $K_\tau > 0$ such that for every $\psi \in (0,\psi_0)$,
\begin{align}
\label{eqn:prop:exit time}
P^\psi[\tau < T]
&\leq K_\tau \psi.
\end{align}
\end{proposition}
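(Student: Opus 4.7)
The plan is to reduce the exit event to a deviation of $\Sigma$ from its starting value, and then apply Chebyshev's inequality after a second‐moment estimate that exploits the smallness of the drift and diffusion coefficients of $\Sigma$ under $P^\psi$.

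First, since $\Sigma$ is $P^\psi$-a.s.\ continuous and $\FF$ is the raw filtration of $(S,\Sigma,A)$, $\tau$ is a stopping time. Set $d_0 := \min(\Sigma_0 - \ul\Sigma, \ol\Sigma - \Sigma_0) > 0$, which is strictly positive by the standing hypothesis $\ul\Sigma < \Sigma_0 < \ol\Sigma$. By path continuity, $\lbrace \tau < T \rbrace \subseteq \lbrace \sup_{t \in [0,T]} \vert \Sigma_t - \Sigma_0 \vert \geq d_0 \rbrace$, so Chebyshev's inequality reduces \eqref{eqn:prop:exit time} to the second-moment estimate $\EX[P^\psi]{\sup_{t \in [0,T]} \vert \Sigma_t - \Sigma_0\vert^2} = O(\psi)$.

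The key step is to quantify the smallness of the coefficients. Decompose $\Sigma_t - \Sigma_0 = \int_0^t \nu^{P^\psi}_s \dd s + M_t$, where by \eqref{eqn:covariation:S and IV} the continuous local martingale $M = \Sigma^{\contlocmartpart,P^\psi} - \Sigma_0$ has quadratic variation $\langle M\rangle_t = \int_0^t((\eta^{P^\psi}_s)^2 + \xi^{P^\psi}_s) \dd s$. By Definition~\ref{def:candidate asymptotic model family}~(b), $\enorm{\bfzeta^{P^\psi}_t - \bfzeta^\psi(t,\bfX_t)} \leq K_0(t,\bfX_t) \psi^2$ with $K_0 \in L^4_\fP$. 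Inspecting the explicit formulas for $\bfzeta^\psi$ displayed after \eqref{eqn:candidate control:short}, the first three components of $\wt\bfzeta$ are pointwise bounded on $\bfD$ (by Assumption~\ref{ass:main result}~\ref{ass:main result:option} together with the pointwise bounds $\vert \lambda \cC_\Sigma \vert, \vert \lambda S^2 \cC_{SS}\vert, \vert \lambda S\cC_{S\Sigma}\vert \leq K$ established inside the proof of Lemma~\ref{lem:HJBI:H1LCQP}~(f)), whereas $0 \leq \wt\xi \leq C + K_\lambda$ with $K_\lambda \in L^2_\fP \subseteq L^1_\fP$ (the inclusion by Jensen, since $\dd t \otimes \dd P$ is finite on $[0,T]\times\Omega$). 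Together with $\xi^{P^\psi} \geq 0$, this yields, for every $\psi \in (0,\psi_0)$, the $\dd t \times P^\psi$-a.e.\ pointwise bounds
\begin{align*}
\vert \nu^{P^\psi}_t\vert + \vert\eta^{P^\psi}_t\vert
&\leq C\psi + K_0(t,\bfX_t)\psi^2,\\
0 \leq \xi^{P^\psi}_t
&\leq \wt\xi(t,\bfX_t)\psi + K_0(t,\bfX_t)\psi^2.
\end{align*}

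From here the remainder is routine. Taking $P^\psi$-expectations in $\langle M\rangle_T$ gives $\EX[P^\psi]{\langle M\rangle_T} \leq C\psi^2(1+\norm{K_0}_{L^2_\fP}^2) + \psi\norm{\wt\xi}_{L^1_\fP} + \psi^2 \norm{K_0}_{L^1_\fP} = O(\psi)$. Since the integrand of $\langle M\rangle$ is uniformly bounded by Assumption~\ref{ass:main result}~\ref{ass:main result:model set}, $M$ is in fact a true martingale, and Doob's $L^2$ inequality delivers $\EX[P^\psi]{\sup_{t\leq T} M_t^2} \leq 4\EX[P^\psi]{\langle M\rangle_T} = O(\psi)$. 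Similarly, $\EX[P^\psi]{\sup_{t\leq T}\bigl(\int_0^t \nu^{P^\psi}_s \dd s\bigr)^2} \leq T\EX[P^\psi]{\int_0^T (\nu^{P^\psi}_s)^2 \dd s} = O(\psi^2)$ by Cauchy--Schwarz. Combining these two bounds via $(a+b)^2 \leq 2a^2 + 2b^2$ yields the second-moment estimate and hence \eqref{eqn:prop:exit time}. The main obstacle is the coefficient bookkeeping of the preceding paragraph: because $\wt\xi$ is only $L^2_\fP$-integrable rather than pointwise bounded, a naive squared estimate of the form $(\xi^{P^\psi})^2 \leq C\psi^2$ is unavailable. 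The crucial observation is that $\xi^{P^\psi}$ enters $\langle M\rangle$ \emph{linearly} (not as a square), so the $L^1_\fP$ bound on $\wt\xi$ combined with $\xi^{P^\psi} \geq 0$ delivers exactly the $O(\psi)$ scaling needed.
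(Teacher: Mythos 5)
Your proof is correct and follows essentially the same route as the paper's: reduce the exit event via Markov/Chebyshev to a second-moment bound on $\sup_{t\leq T}\vert\Sigma_t-\Sigma_0\vert^2$, which the paper obtains from a cited standard It\^o-process estimate and you re-derive by hand through the drift/martingale decomposition, Doob's inequality, and Cauchy--Schwarz, in both cases exploiting that $\xi^{P^\psi}$ enters the quadratic variation \emph{linearly} and is $O(\psi)$-close to its zero reference value. The only inessential difference is your detour through the explicit formula for $\wt\xi$ and the $L^2_\fP$ bound $K_\lambda$ from Lemma~\ref{lem:HJBI:H1LCQP}~(f): Lemma~\ref{lem:HJBI:H1LCQP}~(d) already yields the uniform constant bound $\enorm{\wt\bfzeta}\leq\psi_{\max}\enorm{\bfv}\leq K_\bfzeta$ on $\bfD$ (so $\wt\xi$ is in fact pointwise bounded), which is what the paper uses and which shortens your coefficient bookkeeping.
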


\begin{proof}
It is an easy exercise to show that $\tau$ is a stopping time for the (non-augmented, non-right-continuous) filtration $\FF$. This uses the fact that all paths of $\Sigma$ are continuous and $(\ul\Sigma,\ol\Sigma)$ is open; cf.~\cite[Problem 2.7 in Chapter 1]{KaratzasShreve1998}.

Turning to the proof of \eqref{eqn:prop:exit time}, by standard estimates for It\^o processes (cf., e.g., \cite[Lemma V.11.5]{RogersWilliams2000}), there is a constant $K > 0$ (depending only on $T$) such that for every $\psi \in (0,\psi_0)$,
\begin{align}
\label{eqn:prop:exit time:pf:10}
\EX[P^\psi]{\sup_{0\leq t \leq T} \vert \Sigma_t - \Sigma_0 \vert^2}
&\leq K \EX[P^\psi]{\int_0^T \left( (\nu_t^{P^\psi})^2 + (\eta_t^{P^\psi})^2 + \xi_t^{P^\psi} \right) \dd t}.
\end{align}
Define $K_\bfzeta \geq 1$ as in Lemma~\ref{lem:HJBI:H1LCQP}~(d), and let $K_0'(t,\bfx) = K_0(t,\bfx) + K_\bfzeta \geq 1$ as well as $K_\tau = 2\ell^{-2}K \norm{K_0'}_{L^2_\fP}^2$ with $\ell := \min(\ol\Sigma - \Sigma_0,\Sigma_0-\ul\Sigma) > 0$. Clearly, $K_0' \in L^4_\fP\subset L^2_\fP$, so that $0 \leq K_\tau < \infty$. Fix $\psi \in (0,\psi_0)$. By \eqref{eqn:pf:estimate:candidate asymptotic model family} and Lemma~\ref{lem:HJBI:H1LCQP}~(d),
\begin{align*}
\enorm{\bfzeta^{P^\psi}_t - \bfzeta^0(\Sigma_t)}
&\leq \enorm{\bfzeta^{P^\psi}_t - \bfzeta^\psi(t,\bfX_t)} + \enorm{\bfzeta^\psi(t,\bfX_t) - \bfzeta^0(\Sigma_t)}
\leq K_0(t,\bfX_t)\psi^2 + K_\bfzeta \psi\\
&\leq K_0'(t,\bfX_t)\psi
\quad \diff t\times P^\psi\text{-a.e.}
\end{align*}
Recalling that $\bfzeta^0(\Sigma) = (0,\Sigma,0,0)^\tr$, this estimate yields
\begin{align}
(\nu_t^{P^\psi})^2 + (\eta_t^{P^\psi})^2 + \xi_t^{P^\psi}
&\leq \enorm{\bfzeta^{P^\psi}_t - \bfzeta^0(\Sigma_t)}^2
+ \enorm{\bfzeta^{P^\psi}_t - \bfzeta^0(\Sigma_t)}
\leq K_0'(t,\bfX_t)^2 \psi^2 + K_0'(t,\bfX_t) \psi\notag\\
\label{eqn:prop:exit time:pf:20}
&\leq 2K_0'(t,\bfX_t)^2 \psi \quad \diff t \times P^\psi\text{-a.e.}
\end{align}
Moreover, by the definition of $\ell$ and Markov's inequality,
\begin{align}
\label{eqn:prop:exit time:pf:30}
P^\psi[\tau < T]
&\leq P^\psi\left[\sup_{0\leq t \leq T} \left\vert \Sigma_t -\Sigma_0 \right\vert^2 \geq \ell^2\right]
\leq \ell^{-2} \EX[P^\psi]{\sup_{0\leq t \leq T} \vert \Sigma_t - \Sigma_0 \vert^2}.
\end{align}
Combining \eqref{eqn:prop:exit time:pf:10}--\eqref{eqn:prop:exit time:pf:30} proves \eqref{eqn:prop:exit time}.
\end{proof}

We are now in a position to establish an asymptotic upper bound for the stochastic differential game \eqref{eqn:value}, which completes the proof of Theorem~\ref{thm:main result} at the beginning of Section~\ref{sec:proofs:main result}.

\begin{lemma}
\label{lem:upper bound}
As $\psi \downarrow 0$,
\begin{align*}
\sup_{\bfupsilon\in\fY} J^\psi(\bfupsilon,P^\psi)
&\leq w^\psi_0 + o(\psi).
\end{align*}
\end{lemma}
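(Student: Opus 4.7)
The plan is to mirror the argument of Lemma~\ref{lem:lower bound}, using the asymptotic subsolution estimate from Lemma~\ref{lem:HJBI:upper bound} in place of the supersolution estimate, and additionally invoking Proposition~\ref{prop:exit time} to handle the weaker bound that appears when the implied volatility touches the boundary of $[\ul\Sigma,\ol\Sigma]$. Fix an arbitrary $\bfupsilon=(\theta,\phi)\in\fY$ and $\psi\in(0,\psi_0)$; the goal is a bound on $J^\psi(\bfupsilon,P^\psi)-w^\psi_0$ that is uniform in $\bfupsilon$, so that the supremum can be taken at the end. First, I would apply It\^o's formula to $w^\psi(\cdot,\bfX,Y^{\bfupsilon,P^\psi})$ under $P^\psi$, using the dynamics of $\bfX$ recalled in \eqref{eqn:covariation:S and IV}--\eqref{eqn:A:dynamics}, the P\&L dynamics from Lemma~\ref{lem:Y}, and the transversality condition in \eqref{eqn:cash equivalent:PDE} to eliminate the $\diff M$-integral. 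This produces the decomposition
\begin{align*}
I^\psi_u(\bfupsilon,P^\psi)
&= N^\bfupsilon_u + \int_0^u \big(w^\psi_t + H^\psi(\cdot;\bfupsilon_t,\bfzeta^{P^\psi}_t)\big) \dd t
\end{align*}
parallel to \eqref{eqn:lem:lower bound:pf:Ito}, where $N^\bfupsilon$ is a continuous local $P^\psi$-martingale starting at $0$.

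Since $\theta$ and $\phi$ are only locally bounded, I would localize $N^\bfupsilon$ by a sequence of stopping times $\tau_n\uparrow T$ and take expectations at $\tau_n$, making the martingale contribution vanish. Lemma~\ref{lem:HJBI:upper bound}, applied with $\bar K := K_0$ (permitted by \eqref{eqn:pf:estimate:candidate asymptotic model family}), then dominates the integrand by $K_\mathrm{up}(t,\bfX_t)\,U'(Y^{\bfupsilon,P^\psi}_t)\,(1+\tfrac{-U''}{U'}(Y^{\bfupsilon,P^\psi}_t))\,\psi^{1+\1_{\{\Sigma_t\in(\ul\Sigma,\ol\Sigma)\}}}$. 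The admissibility bound $Y^{\bfupsilon,P^\psi}>-K_\fY$ together with decreasing absolute risk aversion (Assumption~\ref{ass:main result}~\ref{ass:main result:utility}) reduces this to $C\,K_\mathrm{up}(t,\bfX_t)\,\psi^{1+\1_{\{\Sigma_t\in(\ul\Sigma,\ol\Sigma)\}}}$ for a constant $C>0$ independent of $\bfupsilon$ and $\psi$. To pass to the limit $n\to\infty$, I would note that $\wt w$ is bounded on $\bfD$ and $U\geq U(-K_\fY)$ on $(-K_\fY,\infty)$, so $w^\psi$ is bounded below on the relevant domain; together with the nonnegativity of the penalty integral, this yields a uniform lower bound on $I^\psi_{\tau_n}$, while continuity of the state processes gives $I^\psi_{\tau_n}\to I^\psi_T$ almost surely. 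Fatou's lemma then produces
\begin{align*}
J^\psi(\bfupsilon,P^\psi) - w^\psi_0
= \EX[P^\psi]{I^\psi_T}
&\leq \liminf_{n\to\infty}\EX[P^\psi]{I^\psi_{\tau_n}}
\leq C\,\EX[P^\psi]{\int_0^T K_\mathrm{up}(t,\bfX_t)\,\psi^{1+\1_{\{\Sigma_t\in(\ul\Sigma,\ol\Sigma)\}}}\,\dd t}.
\end{align*}

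The remaining and most delicate step is to split this expectation at the exit time $\tau$ of Proposition~\ref{prop:exit time}. On $[0,\tau)$ one has $\Sigma_t\in(\ul\Sigma,\ol\Sigma)$, so the exponent equals $2$ and this portion contributes $O(\psi^2)\,\norm{K_\mathrm{up}}_{L^1_\fP} = o(\psi)$. On $\{\tau<T\}$ the exponent may drop to $1$; here two successive applications of the Cauchy--Schwarz inequality (once to peel off the indicator, once to convert the time integral into an $L^2$-norm), combined with the rate estimate $P^\psi[\tau<T]\leq K_\tau\psi$ from Proposition~\ref{prop:exit time}, produce
\begin{align*}
\psi\,\EX[P^\psi]{\1_{\{\tau<T\}}\int_0^T K_\mathrm{up}(t,\bfX_t)\,\dd t}
&\leq \psi\, P^\psi[\tau<T]^{1/2}\,T^{1/2}\,\norm{K_\mathrm{up}}_{L^2_\fP} = O(\psi^{3/2}) = o(\psi).
\end{align*}
Taking the supremum over $\bfupsilon\in\fY$ then completes the proof. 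The main obstacle is precisely this final split: the subsolution bound is only $O(\psi^2)$ in the interior but degrades to $O(\psi)$ at the boundary, so without both the rate $P^\psi[\tau<T]=O(\psi)$ from Proposition~\ref{prop:exit time} and the finer $L^2_\fP$ (rather than merely $L^1_\fP$) integrability of $K_\mathrm{up}$ asserted in Lemma~\ref{lem:HJBI:upper bound}, the argument would only yield $O(\psi)$ and miss the leading order.
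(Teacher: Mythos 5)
Your proposal is correct and follows essentially the same route as the paper's proof: Itô expansion of $w^\psi$ along $(\bfX,Y^{\bfupsilon,P^\psi})$, the subsolution estimate of Lemma~\ref{lem:HJBI:upper bound} with $\bar K=K_0$, uniform control of $U'(1-U''/U')$ via the credit line and decreasing absolute risk aversion, and the Cauchy--Schwarz split at the exit time using $P^\psi[\tau<T]=O(\psi)$ from Proposition~\ref{prop:exit time} together with $K_\mathrm{up}\in L^2_\fP$. The only cosmetic differences are that you bound an arbitrary $\bfupsilon$ uniformly instead of picking an $\varepsilon$-near-optimiser, and your localisation-plus-Fatou step is just the unrolled form of the paper's argument that the local martingale $N$, being bounded below by an integrable random variable, is a supermartingale.
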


\begin{proof}
Set $\bar K = K_0 \in L^4_\fP$. With this choice of $\bar K$, define $K_\mathrm{up} \in L^2_\fP$ as in Lemma~\ref{lem:HJBI:upper bound}. As $U'$ is decreasing and $U$ has decreasing absolute risk aversion (cf.~Assumption~\ref{ass:main result}~\ref{ass:main result:utility}), there is $K_U > 0$ such that (cf.~Assumption~\ref{ass:main result}~\ref{ass:main result:strategy set} for the choice of $K_\fY$)
\begin{align}
\label{eqn:lem:upper bound:pf:utility estimate}
U'(y) \left(1 - U''(y)\big/U'(y) \right)
&\leq K_U, \quad\text{for all }  y \geq -K_\fY.
\end{align}

Now, fix $\varepsilon > 0$, choose $\psi_0' \in (0,\psi_0)$ such that
\begin{align*}
K_U \norm{K_\mathrm{up}}_{L^1_\fP} \psi_0' + K_U \sqrt{T} \norm{K_\mathrm{up}}_{L^2_\fP} K_\tau^{1/2} (\psi_0')^{1/2}
&\leq \frac{1}{2}\varepsilon
\end{align*}
where $K_\tau>0$ is as in Proposition~\ref{prop:exit time}, and let $\psi \in (0,\psi_0')$. We need to show that
\begin{align}
\label{eqn:lem:upper bound:pf:to show}
\sup_{\bfupsilon'\in\fY} J^\psi(\bfupsilon',P^\psi) - w^\psi_0
&\leq \varepsilon \psi.
\end{align}
Choose $\bfupsilon\in\fY$ such that $J^\psi(\bfupsilon,P^\psi) + \frac{1}{2}\varepsilon\psi \geq \sup_{\bfupsilon'\in\fY} J^\psi(\bfupsilon',P^\psi)$. Then
\begin{align}
\label{eqn:lem:upper bound:pf:10}
\sup_{\bfupsilon'\in\fY} J^\psi(\bfupsilon',P^\psi) - w^\psi_0
&\leq J^\psi(\bfupsilon,P^\psi) - w^\psi_0 + \frac{1}{2}\varepsilon\psi.
\end{align}

Applying It\^o's formula (under $P^\psi$) to the process $w^\psi(u,\bfX_u,Y^{\bfupsilon,P^\psi}_u)$ (recall the dynamics of $S,A,M,\Sigma$, and $Y^{\bfupsilon,P^\psi}$ given in \eqref{eqn:covariation:S and IV}--\eqref{eqn:A:dynamics} and Lemma~\ref{lem:Y}) and using the third line in \eqref{eqn:cash equivalent:PDE} (so that the $\diff M$-integral vanishes) yields for each $u\in[0,T]$,
\begin{align}
\label{eqn:lem:upper bound:pf:Ito}
\begin{split}
I^\psi_u(\bfupsilon,P^\psi)
&:=w^\psi(u,\bfX_u,Y^{\bfupsilon,P^\psi}_u) + \frac{1}{\psi}\int_0^u U'(Y^{\bfupsilon,P^\psi}_t) f(\Sigma_t,\bfzeta^{P^\psi}_t) \dd t - w^\psi_0 \\
&= N_u
+ \int_0^u \left( w^\psi_t(t,\bfX_t,Y^{\bfupsilon,P^\psi}_t) + H^\psi(t,\bfX_t,Y^{\bfupsilon,P^\psi}_t;\bfupsilon_t,\bfzeta^{P^\psi}_t)\right) \dd t.
\end{split}
\end{align}
Here,
\begin{align*}
\begin{split}
N
&:= \int_0^\cdot \left(w^\psi_S(t,\bfX_t,Y^{\bfupsilon,P^\psi}_t) + \gamma(t,S_t,A_t,M_t) w^\psi_A(t,\bfX_t,Y^{\bfupsilon,P^\psi}_t) \right) \dd S_t\\
&\qquad+ \int_0^\cdot w^\psi_\Sigma(t,\bfX_t,Y^{\bfupsilon,P^\psi}_t) \dd \Sigma^{\contlocmartpart,P^\psi}_t
+\int_0^\cdot w^\psi_Y(t,\bfX_t,Y^{\bfupsilon,P^\psi}_t)\dd Y^{\contlocmartpart,\bfupsilon,P^\psi}_t
\end{split}
\end{align*}
and $Y^{\contlocmartpart,\bfupsilon,P^\psi} = Y^{\bfupsilon,P^\psi} + \int_0^\cdot b^\cV(t,\bfX_t;\bfzeta^{P^\psi}_t)\dd t$ is the local martingale part of $Y^{\bfupsilon,P^\psi}$ under $P^\psi$. 

We want to use Lemma~\ref{lem:HJBI:upper bound} to estimate the drift term in the last line of \eqref{eqn:lem:upper bound:pf:Ito}. Note that condition \eqref{eqn:lem:HJBI:upper bound:condition} of Lemma~\ref{lem:HJBI:upper bound} with $\bfzeta$ and $\bfx$ replaced by $\bfzeta^{P^\psi}_t$ and $\bfX_t$, respectively, is fulfilled $\diff t \times P^\psi$-a.e.~by \eqref{eqn:pf:estimate:candidate asymptotic model family} and our choice of $\bar K = K_0$. Moreover, denoting by $\tau$ the first time that $\Sigma$ leaves $(\ul\Sigma,\ol\Sigma)$ (cf.~Proposition~\ref{prop:exit time}), we have for each $t\in[0,T)$ that $\Sigma_t \in (\ul\Sigma,\ol\Sigma)$ $P^\psi$-a.s.~on $\lbrace \tau \geq T \rbrace$. Therefore, by \eqref{eqn:lem:HJBI:upper bound}, \eqref{eqn:lem:upper bound:pf:utility estimate}, and Assumption~\ref{ass:main result}~\ref{ass:main result:strategy set}, for each $u \in [0,T]$,
\begin{align}
\label{eqn:lem:upper bound:pf:before expecation}
I^\psi_u(\bfupsilon,P^\psi)
&\leq N_u + \int_0^u K_\mathrm{up}(t,\bfX_t) U'(Y^{\bfupsilon,P^\psi}_t)\left(1+\frac{-U''(Y^{\bfupsilon,P^\psi}_t)}{U'(Y^{\bfupsilon,P^\psi}_t)}\right) \dd t \;
\left(\psi^2 \1_{\lbrace \tau \geq T \rbrace} + \psi \1_{\lbrace \tau < T \rbrace}\right)\notag\\
&\leq N_u + K_U \int_0^u K_\mathrm{up}(t,\bfX_t) \dd t \;
\left(\psi^2 \1_{\lbrace \tau \geq T \rbrace} + \psi \1_{\lbrace \tau < T \rbrace}\right).
\end{align}

By construction, $N$ is a local $P^\psi$-martingale starting in $0$. Suppose for the moment that $N$ is also a supermartingale. Then by taking expectations under $P^\psi$ on both sides of \eqref{eqn:lem:upper bound:pf:before expecation} (for $u=T$) and using the Cauchy--Schwarz and Jensen inequalities as well as Proposition~\ref{prop:exit time}, we obtain
\begin{align*}
J^\psi(\bfupsilon,P^\psi)-w^\psi_0
&\leq K_U \norm{K_\mathrm{up}}_{L^1_\fP} \psi^2 + K_U \EX[P^\psi]{\int_0^T K_\mathrm{up}(t,\bfX_t) \dd t \; \1_{\lbrace \tau < T\rbrace}}\psi\\
&\leq K_U \norm{K_\mathrm{up}}_{L^1_\fP} \psi^2 + K_U \EX[P^\psi]{\left(\int_0^T K_\mathrm{up}(t,\bfX_t) \dd t\right)^2}^{1/2} P^\psi[\tau < T]^{1/2} \psi\\
&\leq \left(K_U \norm{K_\mathrm{up}}_{L^1_\fP} \psi + K_U \sqrt{T} \norm{K_\mathrm{up}}_{L^2_\fP} K_\tau^{1/2} \psi^{1/2}\right)\psi
\leq \frac{1}{2}\varepsilon \psi.
\end{align*}
Combining this with \eqref{eqn:lem:upper bound:pf:10} yields \eqref{eqn:lem:upper bound:pf:to show}.

It remains to show that $N$ is a supermartingale under $P^\psi$. As it is a local martingale, it suffices to show that it is bounded from below by a $P^\psi$-integrable random variable. To this end, first note from the definition of $w^\psi$ in \eqref{eqn:value function} and Assumption~\ref{ass:main result}~\ref{ass:main result:cash equivalent} and \ref{ass:main result:utility} that for every $(t,\bfx)\in\bfD$ and $y \geq -K_\fY$,
\begin{align*}
w^\psi(t,\bfx,y)
&= U(y) - U'(y)\wt w(t,\bfx)\psi
\geq U(-K_\fY) - U'(-K_\fY) K_{\wt w}.
\end{align*}
By Assumption~\ref{ass:main result}~\ref{ass:main result:strategy set} and the fact that $f \geq 0$, we obtain for each $u\in[0,T]$,
\begin{align*}
I^\psi_u(\bfupsilon,P^\psi)
&\geq U(-K_\fY) - U'(-K_\fY) K_{\wt w} - w^\psi_0
=: K_I.
\end{align*}
Using this and \eqref{eqn:lem:upper bound:pf:before expecation} yields for each $u\in[0,T]$,
\begin{align*}
N_u
&\geq K_I - K_U \int_0^T K_\mathrm{up}(t,\bfX_t) \dd t.
\end{align*}
As $K_\mathrm{up}\in L^2_\fP$, $N$ is bounded from below by a $P^\psi$-integrable random variable and therefore is a supermartingale. This completes the proof.
\end{proof}

\subsection{Construction of a modified feedback control}
\label{sec:proofs:existence}

\begin{proof}[Proof of Lemma~\ref{lem:existence}]
For each $\psi > 0$, define the functions $\wh\nu,\check\bfzeta^{\psi*},\check\bfzeta^\psi:\bfD^0\to\RR$ by
\begin{align}
\label{eqn:lem:existence:pf:nu hat}
\wh\nu(t,\bfx)
&= -\frac{1}{2 \cC_\Sigma}
\begin{pmatrix}
\wt\sigma\\
\wt\eta
\end{pmatrix}^\tr
\begin{pmatrix}
S^2 \cC_{SS}    & S \cC_{S\Sigma}\\
S \cC_{S\Sigma} & \cC_{\Sigma\Sigma}
\end{pmatrix}
\begin{pmatrix}
\wt\sigma\\
\wt\eta
\end{pmatrix},\\
\check\bfzeta^{\psi*}(t,\bfx)
&= \bfzeta^0(\Sigma) + \wt\bfzeta \psi + \wh\nu \vec\bfe_1 \psi^2,\notag\\
\label{eqn:lem:existence:pf:modified candidate}
\check\bfzeta^\psi(t,\bfx)
&= \bfzeta^0(\Sigma) + (\wt\bfzeta \psi + \wh\nu \vec\bfe_1\psi^2)\1_{\lbrace \ul\Sigma<\Sigma<\ol\Sigma \rbrace}
= \bfzeta^\psi + \wh\nu \vec\bfe_1\psi^2 \1_{\lbrace \ul\Sigma<\Sigma<\ol\Sigma \rbrace},
\end{align}
where the functions $\wt\sigma$ and $\wt\eta$ are the second and third component of $\wt\bfzeta$ defined in \eqref{eqn:candidate control:first order}. That is, $\check\bfzeta^\psi$ arises from $\bfzeta^\psi$ (cf.~\eqref{eqn:candidate control:short}) by a perturbation of the first component (the drift of the implied volatility) by a term of order $O(\psi^2)$.

First, we show the asserted continuity of $\check\bfzeta^\psi$ and the extension property. It is easy to see from Assumption~\ref{ass:main result}~\ref{ass:main result:call} and \ref{ass:main result:option} that the vega-gamma-vanna-volga vectors $\bfc$ and $\bfv$ are continuous on $\bfD^0$. Then also $\lambda$ (it is not hard to show that the two expressions on the right-hand side of \eqref{eqn:lagrange multiplier:lambda} coincide whenever $\cV_{\Sigma\Sigma} - \frac{\bfc^\tr\Psi\bfv}{\bfc^\tr\Psi\bfc} \cC_{\Sigma\Sigma} = 0$), $\mu$, and hence $\wt\bfzeta$ are continuous on $\bfD^0$. Therefore, also $\wh\nu$ and $\check\bfzeta^{\psi*}$ are continuous on $\bfD^0$, and it follows that $\check\bfzeta^\psi$ is continuous on $(0,T)\times\bfG\times(\ul\Sigma,\ol\Sigma)$. By construction, $\check\bfzeta^{\psi*}$ is a continuous extension of $\left.\check\bfzeta^\psi\right\vert_{(0,T)\times\bfG\times(\ul\Sigma,\ol\Sigma)}$ to $\bfD^0$.

Second, we show that the range of $\check\bfzeta^\psi$ is contained in $\bfZ = [\ul\nu,\ol\nu]\times[\ul\sigma,\ol\sigma]\times[\ul\eta,\ol\eta]\times[0,\ol\xi]$ for sufficiently small $\psi \in (0,\psi_0)$. To this end, it suffices to show that $\wt\bfzeta$ and $\wh\nu$ are bounded on $\bfD = (0,T)\times\bfG\times[\ul\Sigma,\ol\Sigma]$. First, by \eqref{eqn:lem:HJBI:H1LCQP:pf:zeta tilde:norm bound} from the proof of Lemma~\ref{lem:HJBI:H1LCQP}~(a), we have
\begin{align*}
\enorm{\wt\bfzeta(t,\bfx)}
&\leq \psi_{\max} \enorm{\bfv(t,\bfx)},\quad (t,\bfx) \in \bfD.
\end{align*}
By Assumption~\ref{ass:main result}~\ref{ass:main result:option}, $\enorm{\bfv}$ is bounded on $\bfD$, so that $\wt\bfzeta$ is bounded on $\bfD$ as well. Second, using the boundedness of $\wt\bfzeta$ as well as \eqref{eqn:lem:existence:C}, it follows  from \eqref{eqn:lem:existence:pf:nu hat} that $\wh\nu$ is bounded on $\bfD$ as well. We conclude that there is $\psi_0 \in (0,1)$ such that $\check\bfzeta^\psi(t,\bfx) \in \bfZ$ for each $(t,\bfx)\in\bfD$ and $\psi \in (0,\psi_0)$. 

Third, we prove part~(c). By \eqref{eqn:lem:existence:pf:modified candidate}, for each $(t,\bfx)\in\bfD$ and $\psi > 0$, we have
\begin{align*}
\enorm{\check\bfzeta^\psi(t,\bfx) - \bfzeta^\psi(t,\bfx)}
&\leq \vert \wh\nu(t,\bfx) \vert \psi^2.
\end{align*}
As we have argued above that $\wh\nu$ is bounded on $\bfD$, there is $K_0 > 0$ such that \eqref{eqn:lem:existence:estimate} holds.

Fourth, we show part~(b), i.e., that $b^\cC(t,\bfx;\check\bfzeta^\psi(t,\bfx)) = 0$. Fix $(t,\bfx) \in \bfD$. If $\Sigma \in \lbrace \ul\Sigma,\ol\Sigma \rbrace$, then $\check\bfzeta^\psi(t,\bfx) = \bfzeta^0(\Sigma)$ by construction and the assertion is trivial. So suppose that $\Sigma \in (\ul\Sigma,\ol\Sigma)$. By the representation \eqref{eqn:bC:matrix form} of $b^\cC$, noting that the second and third components of $\check\bfzeta^\psi$ coincide with those of $\bfzeta^\psi$, we have
\begin{align*}
b^\cC(\check\bfzeta^\psi)
&= \bfc^\tr \left(\check\bfzeta^\psi - \bfzeta^0(\Sigma) \right) + \frac{1}{2}
\begin{pmatrix}
\sigma^\psi - \Sigma\\
\eta^\psi
\end{pmatrix}^\tr
\begin{pmatrix}
S^2 \cC_{SS}    & S \cC_{S\Sigma}\\
S \cC_{S\Sigma} & \cC_{\Sigma\Sigma}
\end{pmatrix}
\begin{pmatrix}
\sigma^\psi - \Sigma\\
\eta^\psi
\end{pmatrix}.
\end{align*}
We note that $\bfc^\tr \left(\bfzeta^\psi - \bfzeta^0(\Sigma)\right) = \bfc^\tr\wt\bfzeta \psi = 0$ by Lemma~\ref{lem:HJBI:H1LCQP}~(a), so that $\bfc^\tr\left( \check\bfzeta^\psi - \bfzeta^0(\Sigma) \right) = \bfc^\tr\left( \check\bfzeta^\psi - \bfzeta^\psi\right)$. Using this together with the fact that $\sigma^\psi = \Sigma + \wt\sigma \psi$ and $\eta^\psi = \wt\eta \psi$ (because $\Sigma \in (\ul\Sigma,\ol\Sigma)$) as well as the definition of $\wh\nu$ yields
\begin{align*}
b^\cC(\check\bfzeta^\psi)
&= \bfc^\tr \left(\check\bfzeta^\psi - \bfzeta^\psi \right) - \cC_\Sigma \wh \nu \psi^2.
\end{align*}
Now by \eqref{eqn:lem:existence:pf:modified candidate} and the definition of $\bfc$ (cf.~\eqref{eqn:VGVV:c}), $\bfc^\tr (\check\bfzeta^\psi - \bfzeta^\psi) = \cC_\Sigma \wh\nu \psi^2$, so that $b^\cC(\check\bfzeta^\psi) = 0$.

Finally, part~(a) follows immediately from the construction of $\check\bfzeta^\psi$.
\end{proof}

\appendix

\section{Linearly constrained quadratic programming}
\label{sec:quadratic programming}

\paragraph{Lagrangian duality.}

We recall some basic Lagrange duality results from \cite[Section 5.1.5]{Bertsekas1999}. Fix $n \in \NN$ and functions $f,g,h:\RR^n \to \RR$. We refer to the problem
\begin{align}
\label{eqn:duality:primal problem}
\text{minimise } f(\bfz)
\quad\text{subject to }
\bfz\in\RR^n,
h(\bfz) = 0,
g(\bfz) \leq 0,
\end{align}
as the \emph{primal problem} and denote by
\begin{align*}
f^*
&= \inf \lbrace f(\bfz) : \bfz \in \RR^n, h(\bfz) = 0, g(\bfz) \leq 0 \rbrace
\end{align*}
its optimal value. The corresponding \emph{Lagrangian} is
\begin{align*}
L(\bfz, \lambda, \mu)
&= f(\bfz) + \lambda h(\bfz) + \mu g(\bfz), \quad \bfz \in \RR^n, \lambda,\mu \in \RR,
\end{align*}
and a pair $(\mu^*, \lambda^*)$ is called a \emph{Lagrange multiplier} if
\begin{align*}
f^*
&= \inf_{\bfz \in \RR^n} L(\bfz,\lambda^*,\mu^*)
\quad\text{and}\quad
\mu^* \geq 0.
\end{align*}
The \emph{dual problem} for \eqref{eqn:duality:primal problem} is
\begin{align*}
\text{maximise } q(\lambda,\mu)
\quad\text{subject to }
\lambda \in \RR,
\mu \geq 0,
\end{align*}
where the \emph{dual function} $q$ is 
\begin{align*}
q(\lambda,\mu)
&= \inf_{\bfz \in \RR^n} L(\bfz,\lambda,\mu), \quad \lambda,\mu \in \RR.
\end{align*}
Finally, $q^* = \sup\lbrace q(\lambda,\mu):\lambda\in\RR, \mu \geq0 \rbrace$ denotes the optimal value of the dual problem.

\paragraph{A quadratic programming problem with linear equality and inequality constraints.}
The following lemma provides the solution to a primal problem with a strictly convex quadratic cost function and specific linear equality and inequality constraints.

\begin{lemma}
\label{lem:LCQP}
Fix $n \in \NN$, a diagonal matrix $D = \diag(d_1,\ldots,d_n) \in\RR^{n\times n}$ with positive diagonal entries, and vectors $\bfv = (v_1,\ldots,v_n)^\tr$ and $\bfc = (c_1,\ldots, c_n)^\tr$ in $\RR^n$ such that $c_i \neq 0$ for some $i \in \lbrace 1,\ldots,n-1\rbrace$. Moreover, set $\1_A = 1$ if $v_n - \frac{\bfc^\tr D^{-1}\bfv}{\bfc^\tr D^{-1}\bfc} c_n < 0$ and $\1_A = 0$ otherwise, and define
\begin{align}
\label{eqn:lem:LCQP:lambda}
\lambda^*
&= \frac{\bfc^\tr D^{-1}\bfv - c_n v_n d_n^{-1} \1_A}{\bfc^\tr D^{-1}\bfc - c_n^2 d_n^{-1}\1_A},\\
\label{eqn:lem:LCQP:mu}
\mu^*
&= (v_n - \lambda^* c_n)^-,\\
\label{eqn:lem:LCQP:primal minimiser}
\bfz^*
&= D^{-1}(\bfv - \lambda^* \bfc + \mu^* \vec\bfe_n).
\end{align}

\begin{enumerate}
\item $\bfz^*$ is the unique optimiser of the primal problem
\begin{align}
\label{eqn:lem:LCQP:primal problem}
\text{minimise }\frac{1}{2}\bfz^\tr D \bfz - \bfv^\tr \bfz
\quad\text{ subject to }
\bfz\in\RR^n,
\bfc^\tr \bfz = 0,
z_n \geq 0,
\end{align}
and satisfies the bound $\enorm{\bfz^*} \leq d_{\min}^{-1} \enorm{\bfv}$, where $d_{\min} = \min(d_1,\ldots,d_n)$. 

\item $(\lambda^*,\mu^*)$ is the unique optimiser of the dual problem for \eqref{eqn:lem:LCQP:primal problem}, which can be written as
\begin{align}
\label{eqn:lem:LCQP:dual problem}
\text{maximise } -\frac{1}{2}(\bfv - \lambda \bfc + \mu \vec\bfe_n)^\tr D^{-1}(\bfv - \lambda \bfc + \mu \vec\bfe_n)
\quad\text{subject to }
\lambda \in \RR, \mu \geq 0.
\end{align}
The optimiser satisfies the bound
\begin{align}
\label{eqn:lem:LCQP:Lagrange multiplier:norm bound}
\enorm{\lambda^* \bfc - \mu^* \vec\bfe_n}
&\leq \left(1 + \frac{d_{\max}}{d_{\min}}\right)\enorm{\bfv},
\end{align}
where $d_{\max} = \max(d_1,\ldots,d_n)$.

\item The optimal values of the primal and dual problems coincide (i.e., there is \emph{no duality gap}) and equal
\begin{align*}
-\frac{1}{2}\bfv^\tr \bfz^*.
\end{align*}

\item The triplet $(\bfz^*,\mu^*,\lambda^*)$ satisfies the optimality conditions
\begin{align}
\label{eqn:lem:LCQP:conditions}
\bfz^* 
&= \arg\min_{\bfz\in\RR^n} L(\bfz,\lambda^*,\mu^*),\quad
\bfc^\tr \bfz^* = 0,\quad
z^*_n \geq 0,\quad
\mu^* \geq 0,\quad
\mu^*z_n^* = 0,
\end{align}
where $L$ is the Lagrangian corresponding to the primal problem. Moreover, $(\lambda^*,\mu^*)$ is a Lagrange multiplier for the primal problem.
\end{enumerate}
\end{lemma}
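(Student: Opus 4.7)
The plan is to verify that the triplet $(\bfz^*,\lambda^*,\mu^*)$ defined in \eqref{eqn:lem:LCQP:lambda}--\eqref{eqn:lem:LCQP:primal minimiser} satisfies the Karush--Kuhn--Tucker conditions \eqref{eqn:lem:LCQP:conditions} associated to the primal problem \eqref{eqn:lem:LCQP:primal problem}. Since $D$ is positive definite, the objective $\frac{1}{2}\bfz^\tr D\bfz-\bfv^\tr\bfz$ is strictly convex, the constraints are affine, and the feasible set is nonempty ($\bfz=0$ is feasible), so Slater's condition (in its refined form for affine constraints) holds. Hence the KKT conditions are necessary and sufficient for primal--dual optimality with no duality gap. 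Uniqueness of the primal minimizer follows from strict convexity; uniqueness of the dual optimizer follows from strict concavity of the dual objective, which in turn follows from the positive-definiteness of the $2\times 2$ Hessian (whose determinant $d_n^{-1}\bfc^\tr D^{-1}\bfc-c_n^2 d_n^{-2}=d_n^{-1}\sum_{i<n}c_i^2/d_i$ is strictly positive by the hypothesis $c_i\neq 0$ for some $i<n$). These observations will yield parts (a), (b), (c), and (d) simultaneously.

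First-order stationarity of $\bfz\mapsto L(\bfz,\lambda^*,\mu^*)$ reads $D\bfz-\bfv+\lambda^*\bfc-\mu^*\vec\bfe_n=0$ and admits the unique solution \eqref{eqn:lem:LCQP:primal minimiser}. It then remains to check primal feasibility ($\bfc^\tr\bfz^*=0$, $z_n^*\geq 0$), dual feasibility ($\mu^*\geq 0$), and complementary slackness ($\mu^*z_n^*=0$), via the case split driven by $\1_A$. In the case $\1_A=0$, one has $\mu^*=0$ and $\lambda^*=\bfc^\tr D^{-1}\bfv/\bfc^\tr D^{-1}\bfc$, which directly gives $\bfc^\tr\bfz^*=\bfc^\tr D^{-1}\bfv-\lambda^*\bfc^\tr D^{-1}\bfc=0$ and $z_n^*=(v_n-\lambda^*c_n)/d_n\geq 0$ by the defining inequality of the case. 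In the case $\1_A=1$, the defining inequality shows that the choice $\mu^*=0$ would violate $z_n^*\geq 0$, so complementary slackness forces $z_n^*=0$, hence $\mu^*=\lambda^*c_n-v_n$; substituting into $\bfc^\tr\bfz^*=0$ uniquely determines $\lambda^*$ as in \eqref{eqn:lem:LCQP:lambda}, with denominator $\bfc^\tr D^{-1}\bfc-c_n^2d_n^{-1}=\sum_{i<n}c_i^2/d_i>0$. A short algebraic manipulation using this denominator then shows that $v_n-\lambda^*c_n<0$, so $\mu^*=(v_n-\lambda^*c_n)^->0$, matching \eqref{eqn:lem:LCQP:mu} and securing dual feasibility.

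For part (c), left-multiplying the first-order condition by $\bfz^{*\tr}$ and using $\bfc^\tr\bfz^*=0$ and $\mu^*z_n^*=0$ produces the identity $\bfz^{*\tr}D\bfz^*=\bfv^\tr\bfz^*$, and thus the primal objective at $\bfz^*$ equals $\tfrac{1}{2}\bfz^{*\tr}D\bfz^*-\bfv^\tr\bfz^*=-\tfrac{1}{2}\bfv^\tr\bfz^*$, which by strong duality is also the dual optimum. The same identity combined with $d_{\min}\enorm{\bfz^*}^2\leq\bfz^{*\tr}D\bfz^*$ and the Cauchy--Schwarz inequality gives $\enorm{\bfz^*}\leq d_{\min}^{-1}\enorm{\bfv}$. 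Finally, rewriting the first-order condition as $\lambda^*\bfc-\mu^*\vec\bfe_n=\bfv-D\bfz^*$ and applying the triangle inequality together with $\enorm{D\bfz^*}\leq d_{\max}\enorm{\bfz^*}\leq d_{\max}d_{\min}^{-1}\enorm{\bfv}$ yields \eqref{eqn:lem:LCQP:Lagrange multiplier:norm bound}.

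The argument is essentially routine KKT bookkeeping with no deep obstacle. The only care required lies in the case $\1_A=1$, where one must simultaneously verify the positivity of the denominator defining $\lambda^*$ (which relies on the hypothesis that $c_i\neq 0$ for some $i<n$) and the strict inequality $v_n-\lambda^*c_n<0$ (which ensures $\mu^*>0$ so that complementary slackness is internally consistent); both follow from the defining inequality of the case via elementary algebra.
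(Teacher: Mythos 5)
Your proposal is correct and follows essentially the same route as the paper's proof: verify the KKT system for the explicit triplet $(\bfz^*,\lambda^*,\mu^*)$ via the case split on $\1_A$ (with the denominator positivity $\bfc^\tr D^{-1}\bfc - c_n^2 d_n^{-1} = \sum_{i<n}c_i^2/d_i > 0$ and the equivalence of the case inequality with $v_n-\lambda^*c_n<0$), then invoke standard convex duality for optimality, the Lagrange multiplier property, absence of a duality gap, and uniqueness, and finish with the same Cauchy--Schwarz and triangle-inequality estimates for the two norm bounds. The only differences are cosmetic: you appeal to Slater-type sufficiency and compute the dual Hessian explicitly, whereas the paper cites Bertsekas's propositions for the same conclusions.
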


\begin{proof}
First of all, note that the Lagrangian
\begin{align*}
L(\bfz,\lambda,\mu)
&= \frac{1}{2}\bfz^\tr D \bfz - \bfv^\tr \bfz + \lambda \bfc^\tr \bfz - \mu \vec\bfe_n^\tr \bfz
\end{align*}
corresponding to the primal problem is strictly convex over $\bfz \in \RR^n$. Hence, the dual function $q(\lambda,\mu)= \inf_{\bfz\in\RR^n} L(\bfz,\lambda,\mu)$ can be computed explicitly by substituting the solution $\bfz'$ to the first-order condition $\bfNull = \D_\bfz L(\bfz',\lambda,\mu) = D \bfz' - \bfv + \lambda \bfc - \mu \vec\bfe_n$ back into $L(\bfz',\mu,\lambda)$. This yields
\begin{align*}
q(\lambda,\mu)
&= -\frac{1}{2}(\bfv - \lambda \bfc + \mu \vec\bfe_n)^\tr D^{-1}(\bfv - \lambda \bfc + \mu \vec\bfe_n)
\end{align*}
and thus the dual problem takes the form \eqref{eqn:lem:LCQP:dual problem}.

The crucial part of the proof is to show that the triplet $(\bfz^*,\lambda^*,\mu^*)$ satisfies the optimality conditions \eqref{eqn:lem:LCQP:conditions}. As $L$ is strictly convex over $\bfz \in \RR^n$, the optimality conditions are equivalent to
\begin{align}
\label{eqn:lem:LCQP:pf:conditions}
D\bfz^* - \bfv + \lambda^* \bfc - \mu^* \vec \bfe_n = 0,\quad
\bfc^\tr \bfz^* = 0,\quad
z^*_n \geq 0,\quad
\mu^* \geq 0,\quad
\mu^* z^*_n = 0.
\end{align}
Recall the definitions of $\lambda^*$, $\mu^*$, and $\bfz^*$ in \eqref{eqn:lem:LCQP:lambda}--\eqref{eqn:lem:LCQP:primal minimiser} and note that the assumption that $c_i \neq 0$ for some $i\in\lbrace1,\ldots,n-1\rbrace$ together with the positive definiteness of $D^{-1}$ ensures that $\lambda^*$ is well defined. The stationarity condition $D\bfz^* - \bfv + \lambda^* \bfc - \mu^* \vec \bfe_n = 0$ holds by definition of $\bfz^*$. For the other conditions, we distinguish two cases.
First, suppose that $\1_A = 0$, i.e., $v_n -  \frac{\bfc^\tr D^{-1}\bfv}{\bfc^\tr D^{-1} \bfc} c_n \geq 0$. Then $\lambda^* = \frac{\bfc^\tr D^{-1}\bfv}{\bfc^\tr D^{-1} \bfc}$, $\mu^* = 0$, and $z^*_n = d_n^{-1}(v_n - \lambda^* c_n) \geq 0$. Moreover, $\bfc^\tr \bfz^* = \bfc^\tr D^{-1} \bfv - \lambda^* \bfc^\tr D^{-1} \bfc = 0$.
Second, suppose that $\1_A = 1$, i.e., $v_n -  \frac{\bfc^\tr D^{-1}\bfv}{\bfc^\tr D^{-1} \bfc} c_n < 0$, or, equivalently (multiply by $\bfc^\tr D^{-1}\bfc$, add and subtract $c_n^2 v_n d_n^{-1}$, and then divide by $\bfc^\tr D^{-1} \bfc - c_n^2 d_n^{-1} > 0$), $v_n -\lambda^* c_n < 0$. Then $\mu^* > 0$ and $z^*_n = 0$ by definition of $\mu^*$ and $\bfz^*$. Finally, setting $c = \bfc^\tr D^{-1} \bfc$ and $v = \bfc^\tr D^{-1} \bfv$ for brevity,
\begin{align*}
\bfc^\tr\bfz^*
&= \bfc^\tr D^{-1}(\bfv - \lambda^*\bfc+ \mu^*\vec\bfe_n)
= v -\lambda^* c +\mu^* c_n d_n^{-1}
= v -\lambda^* c - (v_n - \lambda^* c_n) c_n d_n^{-1}\\
&= v - \lambda^*(c - c_n^2 d_n^{-1}) - c_n v_n d_n^{-1}
= v - (v - c_n v_n d_n^{-1}) - c_n v_n d_n^{-1}
= 0.
\end{align*}
So, \eqref{eqn:lem:LCQP:pf:conditions} holds in both cases. By the characterisation of primal optimal solutions \cite[Proposition~5.1.5]{Bertsekas1999}, this implies that $\bfz^*$ is an optimiser for the primal problem, that $(\lambda^*,\mu^*)$ is a Lagrange multiplier, and that there is no duality gap. Moreover, $(\lambda^*,\mu^*)$ is an optimiser for the dual problem by a corollary \cite[Proposition~5.1.4~(a)]{Bertsekas1999} of the weak duality theorem \cite[Proposition~5.1.3]{Bertsekas1999}. As the primal and dual problems are strictly convex and strictly concave, respectively, the optimisers are unique.

Plugging the optimiser $(\lambda^*,\mu^*)$ of the dual problem into the cost function of the dual problem \eqref{eqn:lem:LCQP:dual problem} and using the definition of $\bfz^*$, the optimal value $q^*$ (of both the primal and the dual problem) reads
\begin{align*}
q^*&:=-\frac{1}{2}(\bfv - \lambda^* \bfc + \mu^* \vec\bfe_n)^\tr D^{-1}(\bfv - \lambda^* \bfc + \mu \vec\bfe_n)
= -\frac{1}{2} (\bfv - \lambda^* \bfc + \mu^* \vec\bfe_n)^\tr \bfz^*.
\end{align*}
Now, note that $\bfc^\tr\bfz^* = 0$ and $\mu^* \vec\bfe_n^\tr \bfz^* = \mu^* z_n^* = 0$ by \eqref{eqn:lem:LCQP:pf:conditions}. Hence, $q^* = -\frac{1}{2}\bfv^\tr \bfz^*$.

Next, using that $\bfz^*$ achieves the optimal value $-\frac{1}{2}\bfv^\tr\bfz^*$ for the primal problem and applying the Cauchy--Schwarz inequality, we obtain
\begin{align*}
\frac{1}{2}d_{\min} \enorm{\bfz^*}^2
&\leq \frac{1}{2}(\bfz^*)^\tr D \bfz^*
= \frac{1}{2}\bfv^\tr\bfz^*
\leq \frac{1}{2} \enorm{\bfv} \enorm{\bfz^*}.
\end{align*}
This yields the last claim of part~(a). Finally, using \eqref{eqn:lem:LCQP:primal minimiser}, the triangle inequality, and the bound $\enorm{\bfz^*} \leq d_{\min}^{-1}\enorm{\bfv}$ which we just proved, we obtain
\begin{align*}
\enorm{\lambda^* \bfc - \mu^* \vec\bfe_n}
&= \enorm{D\bfz^* - \bfv}
\leq \enorm{D\bfz^*} + \enorm{\bfv}
\leq d_{\max} \enorm{\bfz^*} + \enorm{\bfv}
\leq \left(1+\frac{d_{\max}}{d_{\min}}\right) \enorm{\bfv}.
\end{align*}
This proves the last claim of part~(b) and thereby concludes the proof.
\end{proof}

\small
\providecommand{\bysame}{\leavevmode\hbox to3em{\hrulefill}\thinspace}
\providecommand{\MR}{\relax\ifhmode\unskip\space\fi MR }
\providecommand{\MRhref}[2]{%
  \href{http://www.ams.org/mathscinet-getitem?mr=#1}{#2}
}
\providecommand{\href}[2]{#2}

\end{document}